\algnewcommand{\A}{\textbf{and}\space}
\algnewcommand{\Or}{\textbf{or}\space}
\algnewcommand{\Xor}{\textbf{xor}\space}
\let\OldStatex\Statex
\renewcommand{\Statex}[1][3]{%
  \setlength\@tempdima{\algorithmicindent}%
  \OldStatex\hskip\dimexpr#1\@tempdima\relax}
\theoremstyle{plain}
\newtheorem{thm}{\protect\theoremname}
\theoremstyle{plain}
\newtheorem{prop}[thm]{Proposition}
\newtheorem{lem}[thm]{Lemma}
\newtheorem{defin}[thm]{\protect\definname}
\newtheorem{observ}[thm]{\protect\observname}
\newtheorem{corol}[thm]{\protect\corolname}
\newtheorem{algorithm}[thm]{\protect\algorithmname}
\newtheorem{cnj}[thm]{Conjecture}
\newtheorem{example}[thm]{\protect\examplename}
\newtheorem{problem}[thm]{\protect\problemname}
\providecommand{\theoremname}{Theorem}
\providecommand{\definname}{Definition}
\providecommand{\observname}{Observation}
\providecommand{\corolname}{Corollary}
\providecommand{\algorithmname}{Algorithm}
\providecommand{\examplename}{Example}
\providecommand{\problemname}{Problem}
\begin{document}
\global\long\def\Z{\mathbb{Z}}
\global\long\def\Q{\mathbb{Q}}
\global\long\def\R{\mathbb{R}}
\global\long\def\p{\varphi}
\global\long\def\t{\tau}
\global\long\def\th{\theta}
\global\long\def\w{\omega}
\global\long\def\ve{\varepsilon}
\global\long\def\st{\sqrt{\t}}
\global\long\def\sf{5^{\nicefrac{1}{4}}}
\global\long\def\sp{\sqrt{\p}}
\global\long\def\Zt{\Z\left[\tau\right]}
\global\long\def\T{\mathcal{T}}
\global\long\def\Id{\mathcal{I}}
\global\long\def\I{\mathcal{I}}
\global\long\def\F{\mathcal{F}}
\global\long\def\Ri{\Z\left[\w\right]}
\global\long\def\Zw{\Z\left[\w\right]}
\global\long\def\Ti{\mathbf{T}}
\global\long\def\Fi{\mathbf{F}}
\global\long\def\W{\mathcal{W}}
\global\long\def\m{\mathrm{\, mod\,}}
\global\long\def\k#1{\left|#1\right\rangle }
\global\long\def\gr#1{\left\langle #1\right\rangle }
\global\long\def\ip#1#2{\left\langle #1,#2\right\rangle }
\global\long\def\cp#1#2{\left[#1,#2\right]}
\global\long\def\nrm#1{\left\Vert #1\right\Vert }
\global\long\def\Ni#1{\left| #1\right|^2}
\global\long\def\Nt#1{N_\t\left(#1\right)}
\global\long\def\G#1{G\left( #1\right)}
\global\long\def\gc#1{#1^{\bullet}}
\global\long\def\cc#1{#1^{\ast}}
\global\long\def\cm#1{\Ni{\gc{#1}}}
\global\long\def\re{\mathrm{Re}}
\global\long\def\im{\mathrm{Im}}
\global\long\def\l{\left}
\global\long\def\r{\right}
\global\long\def\la{\langle}
\global\long\def\ra{\rangle}
\global\long\def\ym{y_{min}}
\global\long\def\yM{y_{max}}
\global\long\def\xm{x_{min}}
\global\long\def\xM{x_{max}}

\global\long\def\FT{\langle\F,\T\rangle}
\global\long\def\sgm{\langle\sigma_1,\sigma_2\rangle}

\global\long\def\set#1#2{\left\{  \left.#1\,\right|\,#2\right\}  }

\title{Asymptotically Optimal Topological Quantum Compiling}

\author{Vadym Kliuchnikov$^\dagger$}
\author{Alex Bocharov$^*$}
\author{Krysta M.~Svore$^*$}

\affiliation{$^\dagger$Institute for Quantum Computing and David R. Cheriton School of Computer Science, Univ.~of Waterloo, Waterloo, Ontario (Canada)\\
$^*$Quantum Architectures and Computation Group, Microsoft Research, Redmond, WA (USA)}

\begin{abstract}
In a topological quantum computer, universality is achieved by braiding and quantum information is natively protected from small local errors. 
We address the problem of compiling single-qubit quantum operations into braid representations for non-abelian quasiparticles described by the Fibonacci anyon model.
We develop a probabilistically polynomial algorithm that outputs a braid pattern to approximate a given single-qubit unitary to a desired precision. 
We also classify the single-qubit unitaries that can be implemented exactly by a Fibonacci anyon braid pattern and present an efficient algorithm to produce their braid patterns.
Our techniques produce braid patterns that meet the uniform asymptotic lower bound on the compiled circuit depth and thus are depth-optimal asymptotically.
Our compiled circuits are significantly shorter than those output by prior state-of-the-art methods, resulting in improvements in depth by factors ranging from 20 to 1000 for precisions ranging between $10^{-10}$ and $10^{-30}$.

\end{abstract}

\maketitle

\section{Introduction}
\label{sec:intro}

As hardware devices for performing quantum computation mature, the need for efficient quantum compilation methods has become apparent.
While conventional quantum devices will require vast amounts of error correction to combat decoherence, it is conjectured that certain quasiparticle excitations obeying non-abelian statistics, called non-abelian anyons, will require little to no error correction.
Certain quantum systems based on non-abelian anyons intrinsically protect against errors by storing information globally rather than locally and can be used for computation \cite{KitaevTop}.
If two quasiparticles are kept sufficiently far apart and their worldlines in $2+1$-dimensional space-time are braided adiabatically, a unitary evolution can be realized.
One class of non-abelian anyons, called Fibonacci anyons, are predicted to exist in systems in a state corresponding to the fractional quantum Hall (FQH) plateau at filling fraction $\mu = 12/5$~\cite{SankarFreedman,NayakFreedman}, where the theory is described by $SU(2)_k$ Chern-Simons-Witten theories \cite{ChernSimmons,Witten88,Witten89} for $k=3$.
In fact, it has been shown that for $k=3$ and $k>4$, $SU(2)_k$ anyons will realize universal quantum computation with braiding alone \cite{Freedman1,Freedman2}.

Previous work \cite{SimonFreedman,HormoziEtAl} has developed methods, using the Solovay-Kitaev algorithm \cite{DN}, for approximating a given single-qubit unitary to precision $\ve$ by a Fibonacci anyon braid pattern with depth $O(\log^c(1/\ve))$, where $c\sim 3.97$ in time $t\sim \log^{2.71}(1/\ve)$.
For coarse precisions, one can also use brute-force search to find a braid with minimal depth $O(\log(1/\ve))$ in exponential time \cite{HormoziEtAl}.
Since the number of braids grows exponentially with the depth of the braid, this technique is infeasible for long braids, which are required to achieve fine-grain precisions.
While the Solovay-Kitaev algorithm applies at any desired precision, including fine-grain precisions, it incurs a (costly) polynomial overhead that leads to compiled circuits of length $O(\log^{3+\delta}(1/\ve))$, far from the asymptotic lower bound of $O(\log(1/\ve))$.

In this work, we develop an algorithm for approximating a single-qubit unitary to precision $\ve$ by a Fibonacci anyon braid pattern with asymptotically optimal depth $O(\log(1/\ve))$.
Moreover, the algorithm requires only probabilistically polynomial runtime.
We also classify the set of unitaries that can be represented exactly ($\ve=0$) by a Fibonacci braid pattern and present an algorithm for their compilation.

A high-level representation of our compilation algorithm is depicted in Figure~\ref{fig:intro:algorithm:flow}.
The algorithm takes as input an arbitrary single-qubit unitary operation and a desired precision $\ve$ and approximates the unitary with a special unitary gate that can be represented exactly by a Fibonacci anyon braid pattern.
The exact synthesis algorithm is then applied to the special unitary gate and a $\FT$-circuit is synthesized.
A Fibonacci anyon braid pattern can always be written as an $\FT$-circuit and vice-versa, which we describe in detail in Section~\ref{sec:exact}.
Finally, we rewrite the circuit as a braid pattern and apply peephole optimization~\cite{pho}.

As will be shown in Section~\ref{sec:exact}, in order for a unitary matrix to be exactly representable as a Fibonacci anyon braid pattern, its entries must come from a certain ring of algebraic integers.
For that reason the approximation step is the most involved part of the algorithm.
It consists of two stages: in the first stage, we find an algebraic number that is in $\ve$-proximity to one of the elements of the input unitary matrix~(Section~\ref{sec:appr}).
This step can be viewed as a generalization of a numeric ``round-off" operation.
In the second stage, we complete the algebraic number with numbers from the same ring of algebraic integers such that all of the numbers form a unitary matrix.
This is done by solving the relative norm equation~(Section~\ref{sec:norm-equation}).

Solving the relative norm equation is in itself hard, and the task is further complicated by the fact that the equation is only solvable for a fraction of the generated ``round-offs";
testing if the given ``round-off" corresponds to a solvable norm equation is as hard as solving it.
However, there is a fraction of ``round-offs" that lead to easy instances of the problem and furthermore it is possible to efficiently test if the given instances are easy.

Without sacrificing too much quality in the approximation, we aim to find these easy instances of the problem by randomly generating ``round-offs" and testing if each instance is easy or not.
We prove, under a given number theory conjecture (Conjecture~\ref{cnj:appr:distr}), that the average number of iterations required to find an easy solvable instance is polynomial in the bit sizes of the inputs.
We also provide numerical evidence that supports the conjecture.
We carefully select the building blocks for the norm equation algorithm such that they also have probabilistically polynomial runtime for easy instances.

Our general approach to compilation into efficient single-qubit Fibonacci anyon circuits has been inspired in part by significant recent progress in efficient compilation into the $\langle H,T \rangle$ basis \cite{Selinger,KMM12,KMM1231}) and the $V$ basis \cite{BGS}. The unifying theme behind these advances has been the use of algebraic number theory, in particular the theory of cyclotomic fields as the foundation for algorithms designed to meet the asymptotic lower bounds on the complexity of compiled circuits.  Here, we employ a similar attack on circuit compilation through the use of algebraic number theory.

\begin{figure}[tb]
  \centering
\makeatletter
\pgfdeclareshape{datastore}{
  \inheritsavedanchors[from=rectangle]
  \inheritanchorborder[from=rectangle]
  \inheritanchor[from=rectangle]{center}
  \inheritanchor[from=rectangle]{base}
  \inheritanchor[from=rectangle]{north}
  \inheritanchor[from=rectangle]{north east}
  \inheritanchor[from=rectangle]{east}
  \inheritanchor[from=rectangle]{south east}
  \inheritanchor[from=rectangle]{south}
  \inheritanchor[from=rectangle]{south west}
  \inheritanchor[from=rectangle]{west}
  \inheritanchor[from=rectangle]{north west}
  \backgroundpath{
    \southwest \pgf@xa=\pgf@x \pgf@ya=\pgf@y
    \northeast \pgf@xb=\pgf@x \pgf@yb=\pgf@y
    \pgfpathmoveto{\pgfpoint{\pgf@xa}{\pgf@ya}}
    \pgfpathlineto{\pgfpoint{\pgf@xb}{\pgf@ya}}
    \pgfpathmoveto{\pgfpoint{\pgf@xa}{\pgf@yb}}
    \pgfpathlineto{\pgfpoint{\pgf@xb}{\pgf@yb}}
 }
}
\makeatother
\usetikzlibrary{arrows}
\begin{tikzpicture}[
  every matrix/.style={ampersand replacement=\&,column sep=1cm,row sep=0.3cm},
  sink/.style={draw,thick,rounded corners,fill=gray!20,inner sep=.2cm},
  datastore/.style={draw,very thick,shape=datastore,inner sep=.2cm},
  to/.style={->,>=stealth',shorten >=1pt,semithick},
  every node/.style={align=center}]

  \matrix{
    \& \node[datastore] (uni) {Unitary, precision $\ve$}; \& \\
    \& \node[sink] (appr) {Approximation algorithm}; \& \\
    \& \node[datastore] (euni) {Exact Unitary}; \& \\
 	\& \node[sink] (esa) {Exact synthesis algorithm}; \& \\
	\& \node[datastore] (ftc) {$\FT$-circuit}; \& \\
 	\& \node[sink] (copt) {Circuit optimization}; \& \\
	\& \node[datastore] (bp) {Fibonacci anyons braid pattern}; \& \\
  };
	\draw[to] (uni) --(appr);
	\draw[to] (appr) --(euni);
	\draw[to] (euni) --(esa);
	\draw[to] (esa) --(ftc);
	\draw[to] (ftc) --(copt);
	\draw[to] (copt) --(bp);
\end{tikzpicture}
  \caption[Compilation algorithm]{High-level flow of the compilation algorithm.
}
\label{fig:intro:algorithm:flow}
\end{figure}
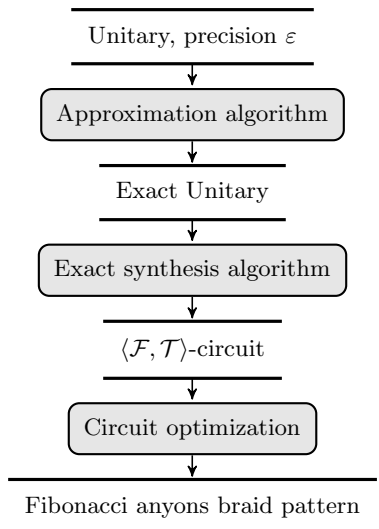

\section{Preliminaries}
\label{preliminaries:omega}

In this section, our main goal is to state the compilation problem and identify basic subproblems.  We also provide relevant detail of the mathematical and computational properties of small anyon ensembles.
We refer the reader to existing tutorials (such as \cite{TrebstTroyer,PreskillLN}) for detailed descriptions of the fundamental physics and mathematics of the anyon models.

\begin{figure}[tb]
  \begin{subfigure}[b]{\columnwidth}
    \caption{\label{fig:basis}Computational basis}
    \begin{tikzpicture}

\node[font=\Huge] at (0,1.5) {$|0\rangle : $};
\node[circle, fill = gray!20, draw = black, line width=1] (g1) at (2.0,1.5) {};
\node[circle, fill = gray!20, draw = black, line width=1] (g2) at (3.0,1.5) {};
\node[circle, fill = gray!20, draw = black, line width=1] (g3) at (4.0,1.5) {};
\node[font=\Huge] at (0,0) {$|1\rangle : $};
\node[circle, fill = gray!20, draw = black, line width=1] (g4) at (2.0,0) {};
\node[circle, fill = gray!20, draw = black, line width=1] (g5) at (3.0,0) {};
\node[circle, fill = gray!20, draw = black, line width=1] (g6) at (4.0,0) {};
\draw[line width=0.75] (2.5,0) ellipse (1cm and 0.4cm);
\node[font=\Large] at (3.6,-0.3) {1};
\node[font=\Large, color=black!70] at (4.9,-0.3) {1};
\draw[line width=0.75] (3,0) ellipse (1.75cm and 0.6cm);

\draw[line width=0.75] (2.5,1.5) ellipse (1cm and 0.4cm);
\node[font=\Large] at (3.6,1.2) {0};
\node[font=\Large, color=black!70] at (4.9,1.2) {1};
\draw[line width=0.75] (3,1.5) ellipse (1.75cm and 0.6cm);
\end{tikzpicture}
  \end{subfigure}%
  \vspace{0.5cm}
   \begin{subfigure}[b]{\columnwidth}
    \caption{\label{fig:braiding}Braiding}
    \input{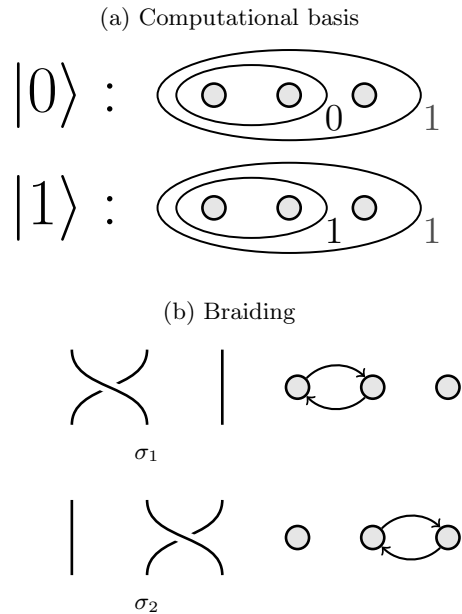}
  \end{subfigure}%
  \caption{Encoding a qubit with three Fibonacci anyons. Grey circles corresponds to particles, numbers near ellipses show particles' collective charge~\cite{HormoziEtAl}. }
\label{fig:anyons}
\end{figure}

There are several ways to simulate quantum circuits with Fibonacci anyon braid patterns.
For each such simulation, we need to define how qubits are encoded with anyons.
In this work, we focus on the three-particle encoding of a qubit~\cite{HormoziEtAl}, shown in Figure~\ref{fig:anyons}, where
the computational basis state $|0\rangle$ corresponds to the first two anyons having collective charge zero, and state $|1\rangle$ corresponds to the first two anyons having collective charge one.
The collective charge of all three anyons in both cases is one.
Measurement of the collective charge of the first two anyons is a projective measurement in the computational basis.
Unitary operations are realized by moving anyons around each other, where the result depends only on topological properties of anyon worldlines; the mathematical object that describes them is a braid group.
Figure~\ref{fig:braiding} shows anyon moves corresponding to the two generators $\sigma_1,\sigma_2$ of the three-strand braid group (corresponding to our the three particles).
We also use $\sigma_1,\sigma_2$ to denote the corresponding unitary operations, where
\[
 \sigma_1 = \w^{6} \l(\begin{array}{cc}
              1 & 0 \\
              0 & \w^7
            \end{array}\r), \w = e^{i\pi/5}.
\]
It is convenient to express $\sigma_2$ with a Fusion matrix $F$, which corresponds to one of the parameters defining the Fibonacci anyon model:
\[
F=\left(\begin{array}{cc}
\t & \st\\
\st & -\t
\end{array}\right), \sigma_2 = F \sigma_1 F.
\]
It has been shown \cite{Freedman1,Freedman2} that unitaries $\sigma_1,\sigma_2$ are approximately universal, e.g., for any single-qubit unitary and given precision $\ve$ there is a circuit consisting of $\sigma_1$ and $\sigma_2$ gates which approximates U within precision $\ve$.

In this paper we describe an algorithm for approximating an arbitrary single-qubit unitary using circuits over a gate set consisting of $\sigma_1$, $\sigma_2$, $\sigma^{-1}_1$, and $\sigma^{-1}_2$.
We call such a circuit a $\sgm$-circuit and refer to the four basis gates as $\sigma$ gates.
The circuit length (or equivalently depth) corresponds to the number of anyon moves needed to perform the given unitary.
It defines how efficiently a given circuit implements a unitary.

More formally, the problem we solve can be stated as:
\begin{problem} \label{prelim:compilation:problem}
Let $U \in U(2)$ be a single-qubit unitary and $\ve>0$ an arbitrary positive number.
Find a $\sgm$-circuit $c$ of length at most $O(\log(1/\ve))$ such that the corresponding unitary is within distance $\ve$ from $U$.
\end{problem}
We use a global phase-invariant distance to measure the quality of approximation
\begin{equation*}
d(U,V)=\sqrt{1-\l|tr(UV^{\dagger})\r|/2}.
\end{equation*}

Problem \ref{prelim:compilation:problem} can reduced to approximating two types of unitaries:
rotations around the $Z$ axis
\[
R_z(\phi) := \l(\begin{array}{cc} e^{-i\phi/2} & 0 \\ 0 & e^{i\phi/2}          \end{array}\r)
\]
and $R_z(\phi)X$, where $X$ is the Pauli $X$ gate.
These two problems are in fact different because the $X$ gate can only be approximated with a Fibonacci anyons braid pattern.
The next lemma shows how the problem of approximating an arbitrary unitary can be reduced to these two types:

\begin{lem} \label{prelim:decomposition:lemma}
Consider a single-qubit unitary $U \in U(2)$. If the upper left entry of $U$ is non-zero then it can be represented as
\begin{equation}
 e^{i \delta} \, R_z(\alpha)\, \F \, R_z(\beta) \, \F \, R_z(\gamma), \alpha, \beta, \gamma, \delta \in \R,
\end{equation}
otherwise it can be represented as $e^{i \delta}R_z(\phi)X$ for $\phi,\delta\in\R$.
\end{lem}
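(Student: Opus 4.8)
The plan is to first remove a global phase so as to land in $SU(2)$, and then handle the two cases of the dichotomy separately. I would pick $\delta\in\R$ with $e^{2i\delta}=\det U$ and set $V:=e^{-i\delta}U$; since $\F$ is an involution we have $\det\F=\pm1$, so $V\in SU(2)$ and $|V_{11}|=|U_{11}|$. Recall that $SU(2)$ is exactly the set of matrices $\left(\begin{smallmatrix}p&q\\-\bar q&\bar p\end{smallmatrix}\right)$ with $|p|^{2}+|q|^{2}=1$, so the two cases of the lemma correspond to $V_{11}=0$ and $V_{11}\ne 0$. When $U_{11}=0$ we have $V=\left(\begin{smallmatrix}0&q\\-\bar q&0\end{smallmatrix}\right)$ with $|q|=1$, and comparing with $R_z(\phi)X=\left(\begin{smallmatrix}0&e^{-i\phi/2}\\ e^{i\phi/2}&0\end{smallmatrix}\right)$ shows $V=e^{i\psi}R_z(\phi)X$ for suitable $\psi,\phi\in\R$; hence $U=e^{i(\delta+\psi)}R_z(\phi)X$, the second asserted form. (If $\det\F$ were some other root of unity $\zeta$ one would instead take $e^{2i\delta}=\zeta^{-2}\det U$ and $V=e^{-i\delta}\zeta^{-1}U$, replacing $\F R_z(\beta)\F$ by $\zeta^{-1}\F R_z(\beta)\F$ in what follows; nothing else changes.)

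The substance is the case $V_{11}\ne 0$, which I would prove as a generalized Euler-angle decomposition, using a conjugate of $\F$ as the ``middle'' rotation. Set $M(\beta):=\F R_z(\beta)\F\in SU(2)$ and write $M(\beta)=\left(\begin{smallmatrix}a(\beta)&b(\beta)\\-\overline{b(\beta)}&\overline{a(\beta)}\end{smallmatrix}\right)$. The one genuinely nontrivial input is that $\beta\mapsto|a(\beta)|$ is continuous, equals $1$ at $\beta=0$, and attains every value in $(0,1]$; equivalently, $\F$ carries the computational basis state $\k 0$ to a state whose Bloch vector lies on the equator, so $\F R_z(\beta)\F$ is (up to a fixed phase) a rotation about an equatorial axis and the image of $\k 0$ under it sweeps the whole punctured Bloch sphere. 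Granting this, choose $\beta$ with $|a(\beta)|=|V_{11}|$, which forces $|b(\beta)|=\sqrt{1-|a(\beta)|^{2}}=|V_{12}|$. A one-line computation gives
\begin{equation*}
R_z(\alpha)\,M(\beta)\,R_z(\gamma)=\begin{pmatrix}e^{-i(\alpha+\gamma)/2}\,a(\beta)&e^{-i(\alpha-\gamma)/2}\,b(\beta)\\ \ast&\ast\end{pmatrix},
\end{equation*}
so, since $a(\beta)\ne 0$, I can choose $\alpha+\gamma$ to rotate $a(\beta)$ onto $V_{11}$ and then $\alpha-\gamma$ to rotate $b(\beta)$ onto $V_{12}$ (the latter vacuous if $V_{12}=0$); as $\alpha+\gamma$ and $\alpha-\gamma$ may be prescribed independently, a pair $(\alpha,\gamma)\in\R^{2}$ exists. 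Then $R_z(\alpha)M(\beta)R_z(\gamma)$ and $V$ have equal first rows and both lie in $SU(2)$, hence are equal, which gives $U=e^{i\delta}R_z(\alpha)\,\F\,R_z(\beta)\,\F\,R_z(\gamma)$.

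The hard part is precisely the claim about the range of $|a(\beta)|$, i.e.\ that $\F$ maps $\k 0$ onto the Bloch equator. Were $\F$ to move $\k 0$ strictly off the equator, $|(\F R_z(\beta)\F)_{11}|$ would be bounded below by a positive constant $c$, only unitaries with $|U_{11}|\ge c$ would admit the first form, and the dichotomy's clean split at $|U_{11}|=0$ would be false; so this is not a step one can wave away. I would establish it by a direct computation from the explicit $2\times2$ matrix of $\F$ given in Section~\ref{sec:exact} — continuity and the value $|a(0)|=1$ are immediate, and surjectivity of $|a(\cdot)|$ onto $(0,1]$ is the real content. The rest — the reduction to $SU(2)$, the phase matching via $R_z(\alpha)$ and $R_z(\gamma)$, and the anti-diagonal case — is routine.
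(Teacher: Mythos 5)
Your architecture is sound and you have correctly put your finger on the one load-bearing claim, but that claim is false, and the ``direct computation'' you postpone is precisely what refutes it. From the explicit matrix for $\F$ one finds $a(\beta)=(\F R_z(\beta)\F)_{11}=\t^{2}e^{-i\beta/2}+\t e^{i\beta/2}$, hence $|a(\beta)|=\t\,|e^{i\beta}+\t|$, whose range over $\beta\in\R$ is $[\t(1-\t),\,\t(1+\t)]=[2\t-1,\,1]=[\sqrt{5}-2,\,1]$, not $(0,1]$. Equivalently, $\F\k{0}=(\t,\st)^{T}$ has Bloch $z$-coordinate $\t^{2}-\t=1-2\t=2-\sqrt{5}\approx-0.236\neq0$: unlike the Hadamard gate, $\F$ does \emph{not} send $\k{0}$ to the equator. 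Your construction therefore reaches only those unitaries with $|U_{11}|\in\{0\}\cup[\sqrt{5}-2,1]$ and breaks down for $0<|U_{11}|<\sqrt{5}-2$. Moreover, as your own contingency remark already anticipates, this is not a patchable detail within this approach: every product $e^{i\delta}R_z(\alpha)\,\F\,R_z(\beta)\,\F\,R_z(\gamma)$ has upper-left modulus $\t|e^{i\beta}+\t|\geq\sqrt{5}-2$, so no choice of angles can represent, say, a special unitary with $U_{11}=1/10$; the first clause of the lemma as literally stated fails on that range.

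For comparison, the paper's proof is the same calculation carried one step less far: it records the upper-left entry as $e^{-i(\alpha+\beta+\gamma)/2}\,\t\,(e^{i\beta}+\t)$, observes only that this is never zero, and asserts the equation $R_z(\alpha)\F R_z(\beta)\F R_z(\gamma)=U'$ is ``not difficult to solve'' whenever $U'_{11}\neq0$ --- overlooking that the displayed expression is bounded away from zero in modulus by $\t(1-\t)$. Your phase-matching step (prescribing $\alpha+\gamma$ and $\alpha-\gamma$ independently, then using that two $SU(2)$ matrices with equal first rows coincide) and your treatment of the anti-diagonal case are both correct; the defect is entirely in the surjectivity claim for $|a(\beta)|$. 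A repaired statement must either restrict to $|U_{11}|\geq\sqrt{5}-2$, or replace the middle factor by a gate that conjugates $R_z$ to a genuinely equatorial rotation (or use a longer alternating word), and the reduction of Problem~\ref{prelim:compilation:problem} would then have to be reworked accordingly.
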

\begin{proof}
First, we factor out an appropriate global phase from $U$ to obtain a special unitary $U'$. The product $R_Z(\alpha)\, \F \, R_Z(\beta) \, \F \, R_Z(\gamma)$ corresponds to a special unitary with upper left entry
\begin{equation*}
e^{-i\,(\alpha+\beta+\gamma)/2} \, \t \, (e^{i \, \beta}+\t)
\end{equation*}
which is always non-zero.
It is not difficult to solve the equality  $R_Z(\alpha)\, \F \, R_Z(\beta) \, \F \, R_Z(\gamma)=U'$ when $U'$ has a non-zero upper left entry; in the other case it is not difficult to find $\phi$ such that $U'=R_z(\phi)X$.
\end{proof}

\section{Exact synthesis algorithm} \label{sec:exact}
The goal of this section is two-fold: first, to classify all single-qubit unitaries that can be implemented exactly by braiding Fibonacci anyons and second, to describe an efficient algorithm for finding a circuit that corresponds to a given exactly implementable unitary. 
We start by introducing rings of integers and use them to describe the most general form of exactly implementable unitaries. 
Next, we introduce a complexity measure over the unitaries by using ring automorphisms. 
Finally, we describe the exact synthesis algorithm: a process, guided by the complexity measure, to find a circuit for an exactly synthesizable unitary.

For the purpose of this section it is more convenient to consider the elementary gates
\[
\T=\left(\begin{array}{cc}
1 & 0\\
0 & \w
\end{array}\right),\F=\left(\begin{array}{cc}
\t & \st\\
\st & -\t
\end{array}\right)
\]
instead of $\sigma_1$ and $\sigma_2$. We call a circuit composed of $\F,\T$ and $\w\Id$ gates an $\FT$-circuit, where $\Id$ is the identity gate and $\w\Id$ is a global phase. 
The following relations imply that any unitary that is implementable by an $\FT$-circuit is also implementable by a $\sgm$-circuit and vice-versa:
\begin{equation}
\begin{array}{lcl}
\sigma_1 = (\w\Id)^{6}\T^7,&\quad& \T = (\w\Id)^{2}(\sigma_1)^3,\\
\sigma_2 = (\w\Id)^{6}\F\T^7\F,&\quad& \F = (\w\Id)^{4}\sigma_1\sigma_2\sigma_1.
\end{array}
\end{equation}
For the applications considered in this paper the global phase $\w\Id$ is irrelevant.

Two rings are crucial for understanding our results. The first is the ring of integers extended by the primitive tenth root of unity $\w$
\[
 \Zw := \set{a+b\w+c\w^2+d\w^3}{a,b,c,d\in\Z}
\]
and the second is its real subring
\[
 \Zt := \set{a+b\t}{a,b\in\Z}.
\]
It is not difficult to check that both definitions are correct: the sets defined above are both rings. It is straightforward to check that both sets $\Zw$ and $\Zt$ are closed under addition. To show that both sets are closed under multiplication it is sufficient to check the following equalities:
\begin{equation}
\begin{array}{ccc}
\w^4 = -1+\w-\w^2+\w^3, & \w^5 = -1, & \t^2 = 1-\t.
\end{array}
\end{equation}
Equality $\t=\w^2-\w^3$ implies that $\Zt$ is a subring of $\Zw$. Both rings are well studied in algebraic number theory. 

In this section we use elementary methods that do not require any background. We discuss how objects and results of the section are related to the broader mathematical picture in Appendix \ref{sec:exact:math}. For example, we show why $\Zt$ is a real subring of $\Zw$ (i.e., equal to $\Zw\cap\R$). For the purpose of this section it is sufficient to note that for any $u$ from $\Zw$ its absolute value squared $|u|^2$ is from $\Zt$.

An \emph{exact} unitary is defined by an integer $k$ and $u,v$ from $\Zw$ such that $|u|^2+\t|v|^2=1$ and
\begin{equation}
\label{exact:reresentable:unitary}
U[u,v,k]:=\left(\begin{array}{cc}
u & v^*\,\sqrt{\t}\,\w^k\\
v \, \sqrt{\t} & -u^*\,\,\w^k
\end{array}\right).
\end{equation}
The following Lemma explains the intuition behind the name.
\begin{lem}[Exact synthesis lemma]
A unitary in $U(2)$ can be expressed as a product of $\F$ and $\T$ matrices if and only if it is exact.\end{lem}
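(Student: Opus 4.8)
The plan is to prove the two implications by quite different arguments: a closure argument for soundness (any product of $\F$ and $\T$ is exact) and a complexity‑driven descent for completeness (every exact unitary is such a product, up to the global phase $\w\Id$ which, by the paper's convention, is a permitted gate of an $\FT$‑circuit). For \emph{soundness} I would first record that each generator is exact: $\T=U[1,0,6]$, $\F=U[\t,1,0]$, $\Id=U[1,0,5]$, $\w\Id=U[\w,0,7]$, each obeying $\Ni u+\t\Ni v=1$ (using $\t^{2}=1-\t$). The point that makes this direction easy is that, for $u,v\in\Zw$, a matrix written in the shape $U[u,v,k]$ is unitary \emph{iff} the scalar identity $\Ni u+\t\Ni v=1$ holds — its two columns are automatically orthogonal and of equal norm $\Ni u+\t\Ni v$. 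So it suffices to check that right multiplication by each generator preserves the shape $U[u,v,k]$ with entries staying in $\Zw$; a one‑line matrix product gives, e.g., $U[u,v,k]\,\T=U[u,v,k{+}1]$ and $U[u,v,k]\,\F=U[\t(u+\w^{k}v^{*}),\,\t v-\w^{k}u^{*},\,k{+}5]$ (and similarly for $\w\Id$), the norm condition being inherited because a product of unitaries is unitary. Induction on the circuit length finishes this direction.

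For the converse I would run an exact‑synthesis descent guided by the nontrivial automorphism $\bullet$ of $\Q(\w)$, namely $\w\mapsto\w^{3}$: it maps $\Zw$ to itself, restricts to the automorphism $\t\mapsto\gc\t=-1-\t$ of $\Zt$ (so $\t\,\gc\t=-1$), and — since the Galois group $(\Z/10)^{\times}$ is abelian — commutes with complex conjugation, so $\cm x=(\Ni x)^{\bullet}$ for $x\in\Zw$. Define $G(U[u,v,k]):=\cm u\in\Zt$. Applying $\bullet$ to $\Ni u+\t\Ni v=1$ gives $G=1+(1+\t)\cm v$, so $G\ge1$ with equality exactly when $v=0$, and $\gc G=\Ni u\in[0,1]$. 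Two facts make $G$ a legitimate induction parameter. First, if $v\ne0$ then $G\ge2$: the inequality $\Ni u\ge0$ forces $\Ni v\le1/\t=1+\t$, while the rational norm $N_{\Q(\w)/\Q}(v)=\Ni v\cdot\cm v$ is a positive integer, so $\cm v\ge\t$ and hence $G\ge1+(1+\t)\t=2$. Second, $G$ takes values in $\{x\in\Zt:x\ge1,\ 0\le \gc x\le1\}$, and this set meets every bounded interval in finitely many points (the two inequalities pin the integer coordinates of $x$ to a bounded region once $x$ is bounded), so it is well ordered and any strictly $G$‑decreasing chain of exact unitaries terminates.

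Now the descent. The base case is $G=1$, i.e.\ $v=0$ and $\Ni u=\cm u=1$; then every Galois conjugate of $u\in\Zw$ has absolute value $1$, so $u$ is a root of unity by Kronecker's theorem, $u=\w^{j}$, and $U[\w^{j},0,k]=\mathrm{diag}(\w^{j},-\w^{k-j})=(\w\Id)^{j}\T^{\,k-2j+5}$ is an $\FT$‑circuit. For the inductive step, with $v\ne0$ and $G\ge2$, I would compute $\F\,\T^{-j}\,U[u,v,k]=U[\t(u+\w^{-j}v),\,u-\t\w^{-j}v,\,k-j+5]$, which is again exact and has complexity $G_{j}=(\gc\t)^{2}\,\Ni{\gc u+\w^{-3j}\gc v}=(2+\t)\,\Ni{\gc u+\w^{-3j}\gc v}$. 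As $j$ runs over $0,\dots,9$ the factor $\w^{-3j}$ runs over all tenth roots of unity, so the ten numbers $\gc u+\w^{-3j}\gc v$ are the vertices of a regular $10$‑gon on the circle of radius $|\gc v|$ about $\gc u$; since $\cm u=1+(1+\t)\cm v>\cm v$ the origin lies strictly outside that circle, so one vertex lies within angle $\pi/10$ of the vertex nearest the origin, whence $\min_{j}\Ni{\gc u+\w^{-3j}\gc v}\le\cm u+\cm v-2\,|\gc u|\,|\gc v|\cos(\pi/10)$. Plugging in $\cm v=(G-1)/(1+\t)$ and $\cm u=G$ and using $G\ge2$, a short golden‑ratio estimate gives $(2+\t)\bigl(\cm u+\cm v-2|\gc u||\gc v|\cos(\pi/10)\bigr)<G$ (at the extreme $G=2$ the left side is $\approx1.32<2$, and the ratio only improves as $G$ grows), so some $j$ yields $G_{j}<G$. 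Then $U[u,v,k]=(\T^{j}\F)\,(\F\,\T^{-j}\,U[u,v,k])$, and the induction hypothesis applied to the second factor finishes the proof.

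Soundness and the base case are bookkeeping; the heart is the inductive step, and there the two delicate points are: (i) the bound $\cm v\ge\t$ — equivalently $G\ge2$ whenever $v\ne0$ — which is what both isolates the base case and secures well‑ordering of the complexity; and (ii) showing that the ``round‑down'' $\F\,\T^{-j}$ genuinely decreases $G$, where the amplifying constant $(\gc\t)^{2}=2+\t>1$ works against us and must be beaten by the pigeonhole gain $2|\gc u||\gc v|\cos(\pi/10)$ — this succeeds only because the unitarity relation $\cm u=1+(1+\t)\cm v$ keeps $|\gc u|$ a definite factor larger than $|\gc v|$. I expect verifying that inequality uniformly in $G$ (not merely at the boundary $G=2$) to be the single most fiddly computation in the argument.
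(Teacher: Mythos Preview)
Your proof is correct and runs the same engine as the paper: closure under the generators for soundness, and for completeness a descent by left-multiplying with $\F\T^{-j}$, where the strict decrease comes from the identical pigeonhole observation that among the ten phases $\w^{-3j}$ one lands within angle $\pi/10$ of the direction minimizing $\lvert\gc u+\w^{-3j}\gc v\rvert$.

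The one real difference is the complexity measure. The paper takes $G(u)=\lvert u\rvert^{2}+\lvert\gc u\rvert^{2}$, observes that this is a rational \emph{integer} (via a positive-definite quadratic-form bound plus a short finite case check), and gets termination for free from integer descent. You instead take $\lvert\gc u\rvert^{2}\in\Zt$ and must argue separately that $\{x\in\Zt:x\ge1,\ 0\le\gc x\le1\}$ is well ordered; this is correct, and your $N_{\Q(\w)/\Q}(v)\ge1\Rightarrow\lvert\gc v\rvert^{2}\ge\t$ argument for the threshold $G\ge2$ is a tidy replacement for the paper's case enumeration. The governing inequality and the limiting decrease ratio $\p^{2}\bigl(1+\t-2\sqrt{\t}\cos(\pi/10)\bigr)$ coincide in both treatments; the paper's integer-valued measure buys a one-line termination, while yours isolates more transparently the single summand $\lvert\gc u\rvert^{2}$ that actually drives the geometry (your worry about verifying the inequality ``uniformly in $G$'' is handled exactly as you suspect: with $t=\sqrt{\t(1-1/G)}<\sqrt{\t}<\cos(\pi/10)$ the expression $1+t^{2}-2t\cos(\pi/10)$ is monotone decreasing in $G$, so the boundary value at $G=2$ is the worst case).
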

\begin{proof}
The ``only if" part of the lemma is straightforward. Since $\t \in \Zw$,  both generators $\T$ and $\F$ have form (\ref{exact:reresentable:unitary}) and multiplying a matrix of this form by either $\T$ or $\F$ preserves the form.

We defer the proof of the converse until we fully develop the exact synthesis algorithm (Figure~\ref{fig:exact:synthesis}) and prove its correctness. The ``if" part of the lemma follows from that correctness proof (see Theorem~\ref{thm:exact:correctness}).
\end{proof}

An automorphism of $\Zw$ that plays a fundamental role in the construction of the exact synthesis algorithm is the mapping
\begin{equation}\label{exact:synthesis:bullet}
\gc{(\ldotp)}:\Zw\mapsto\Zw \text{ such that } \gc{\w}=\w^3.
\end{equation}
By definition, a ring automorphism must preserve the sum and the product. For example, we find that
\[
 \gc{\t} = \gc{(\w^2-\w^3)}=\gc{(\w^2)}-\gc{(\w^3)}=(\gc{\w})^2-(\gc{\w})^3=-\p.
\]
Taking into account that $\p=\t+1$~(e.g., $\p$ is from $\Zt$) we see that $\gc{(\ldotp)}$ restricted on $\Zt$ is also an automorphism of $\Zt$. The other important property of automorphism $\gc{(\ldotp)}$ is its relation to complex conjugation (which is another automorphism of $\Zw$):
\begin{equation}
  \gc{(\gc{x})} = \cc{x}.
\end{equation}
For example, this implies the equality $\gc{(\cc{x})}=\cc{(\gc{x})}$, which is used in several proofs in this work. It also implies that $\cm{x}=\gc{(\Ni{x})}$ because $\Ni{x}=x\cc{x}$.

The Gauss complexity measure $G$ \cite{HLenstraJr} is a key ingredient of the exact synthesis algorithm
\begin{equation}
\G{u}:= \Ni{u} + \cm{u}.
\end{equation}
We also extend the definition to exact unitaries as $\G{U[u,v,k]} = \G{u}$. Roughly speaking, the exact synthesis algorithm~(Figure~\ref{fig:exact:synthesis}) reduces the complexity of the unitary $U_r$ by multiplying it by $\F\T^k$, while $\G{U_r}$ describes how complex $U_r$ is at each step. The example in Figure~\ref{fig:exact:example} illustrates the intuition behind $G$. The following proposition summarizes important properties of the Gauss complexity measure.
\begin{figure}[t]
\begin{tabular}{|c|c|c|}
\hline
$n$ & $u_{n}$ & $\G{u_n}$ \tabularnewline \hline
\hline
$0$ & $1$                & $1$  \tabularnewline \hline
$1$ & $\w^{2}-\w^{3}$    & $3$  \tabularnewline \hline
$2$ & $2-\w+\w^{3}$      & $13$ \tabularnewline \hline
$3$ & $-3+5\w-2\w^{2}-1$ & $57$ \tabularnewline \hline
\end{tabular}
\caption{\label{fig:exact:example}Values of the Gauss complexity measure for the family of unitaries $U[u_n,v_n,k_n]=(\F\T)^n $, $n$ from $\{0,1,2,3\}$.}
\end{figure}
\begin{prop}\label{prop:exact:alternatives}
For any $x$ from $\Zw$, the Gauss complexity measure $\G{x}$ is an integer, and there are three alternatives:
\[ \begin{array}{cl}
    (a) & \G{x} = 0,\text{ then } x = 0, \\
    (b) & \G{x} = 2,\text{ then } x = \w^k \text{ for } k \text{ -- integer}, \\
    (c) & \G{x} \ge 3,
   \end{array} \]
and $\G{a+b\w+c\w^2+d\w^3} \le 5/2\l(a^2+b^2+c^2+d^2\r)$.
\end{prop}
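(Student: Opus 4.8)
The plan is to push everything down to the real subring $\Zt$, where $\G{\cdot}$ becomes a trace. First I would use the two facts recorded above, $\Ni{x}\in\Zt$ and $\cm{x}=\gc{(\Ni{x})}$, and write $\Ni{x}=p+q\t$ with $p,q\in\Z$. Since $\gc{\t}=-\p=-1-\t$ and $\gc{(\cdot)}$ fixes $\Z$, this gives $\cm{x}=(p-q)-q\t$, hence
\[
\G{x}=\Ni{x}+\cm{x}=2p-q\in\Z .
\]
Each of $\Ni{x},\cm{x}$ is the squared modulus of a complex number, so it is a non-negative real; therefore $\G{x}\ge 0$, and $\G{x}=0$ forces $\Ni{x}=0$, i.e.\ $x=0$, which is alternative (a). Integrality and alternative (a) are thus immediate.

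For (b) and (c) the extra ingredient is the product $\Ni{x}\cdot\cm{x}$. Expanding $(p+q\t)\bigl((p-q)-q\t\bigr)$ with $\t^2=1-\t$ collapses the $\t$-term and yields $\Ni{x}\cdot\cm{x}=p^2-pq-q^2=:N\in\Z_{\ge 0}$. By the AM--GM inequality $N=\Ni{x}\cdot\cm{x}\le\bigl((\Ni{x}+\cm{x})/2\bigr)^2=\G{x}^2/4$. If $\G{x}=1$ then $N=0$, so one of $\Ni{x},\cm{x}$ vanishes, so $x=0$, contradicting $\G{x}=1$; hence $\G{x}\ne 1$, and combined with $\G{x}\in\Z_{\ge0}$ this already gives the trichotomy $\G{x}\in\{0\}\cup\{2\}\cup\{3,4,\dots\}$ of alternative (c). If $\G{x}=2$ then $N\in\{0,1\}$, and $N=0$ is excluded as above, so $N=1$; then $\Ni{x}+\cm{x}=2$ and $\Ni{x}\cdot\cm{x}=1$ force $\Ni{x}=\cm{x}=1$. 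In particular $x\cc{x}=1$ with $\cc{x}\in\Zw$, so $x$ is a unit of $\Zw$, and all its archimedean conjugates $x,\gc{x},\cc{x},\cc{(\gc{x})}$ have modulus one. By Kronecker's theorem $x$ is then a root of unity, and since $\Zw=\Z[\w]$ with $[\Q(\w):\Q]=4$, the only roots of unity it contains are the powers of $\w$; hence $x=\w^k$, which is alternative (b).

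The final inequality I would get from a direct computation of $\G{x}$ in the power basis. Writing $x=a+b\w+c\w^2+d\w^3$ and reducing $x\cc{x}$ with $\w^5=-1$ and $\w^4=-1+\w-\w^2+\w^3$ (equivalently, using $\mathrm{Tr}_{\Q(\w)/\Q}(\w^k)=4,1,-1,1$ for $k=0,1,2,3$ together with $\G{x}=\tfrac{1}{2}\mathrm{Tr}_{\Q(\w)/\Q}(x\cc{x})$) produces the closed form
\[
\G{x}=2\bigl(a^2+b^2+c^2+d^2\bigr)+(ab+bc+cd)-(ac+bd)+ad .
\]
One then checks the single algebraic identity
\[
\tfrac{5}{2}\bigl(a^2+b^2+c^2+d^2\bigr)-\G{x}=\tfrac{1}{2}\bigl(a-b+c-d\bigr)^2\ \ge\ 0,
\]
which is exactly the claimed bound (with equality iff $a-b+c-d=0$, e.g.\ $x=1+\w$).

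I expect the one genuine obstacle to be the passage in (b) from $\Ni{x}=\cm{x}=1$ to $x=\w^k$: everything else is bookkeeping, but this step needs the fact that an algebraic integer all of whose conjugates lie on the unit circle is a root of unity, together with the identification of the (finite, cyclic of order ten) group of roots of unity of $\Z[\w]$. If one insists on staying elementary, Kronecker can be replaced by the standard pigeonhole argument — the minimal polynomial of $x$ has degree $\le 4$ and its coefficients are integers bounded in absolute value by $\binom{4}{k}$, so the powers $x^n$ take only finitely many values and $x^m=x^n$ for some $m<n$, whence $x^{n-m}=1$ — but it remains the only place where a number-theoretic fact, rather than a calculation, is used.
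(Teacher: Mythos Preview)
Your proof is correct and takes a genuinely different route from the paper's. The paper treats $\G{a+b\w+c\w^2+d\w^3}$ as a positive definite quadratic form in $(a,b,c,d)$, computes its extreme eigenvalues $1/2$ and $5/2$ to obtain the two-sided bound $\tfrac12(a^2+\cdots+d^2)\le\G{x}\le\tfrac52(a^2+\cdots+d^2)$, and then uses the lower bound to confine the analysis of alternatives (b)/(c) to the finite box $\{-2,\dots,2\}^4$, which it checks by enumeration. Your argument instead exploits the arithmetic structure: you recognize $\G{x}$ as the $\Zt/\Z$-trace of $\Ni{x}$ and the product $\Ni{x}\cdot\cm{x}$ as the absolute norm $N(x)\in\Z_{\ge0}$, then use AM--GM to rule out $\G{x}=1$ and pin down $\Ni{x}=\cm{x}=1$ when $\G{x}=2$, finishing (b) via Kronecker's theorem. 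For the upper bound you give the exact defect $\tfrac52\sum-\G{x}=\tfrac12(a-b+c-d)^2$, which is sharper information than the eigenvalue bound (it identifies the equality locus).

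What each approach buys: the paper's method is entirely self-contained and elementary---no Kronecker, no structure of the unit/root-of-unity group---at the price of a finite case check. Your method avoids all enumeration and explains conceptually \emph{why} $\G{x}=2$ characterizes roots of unity (all archimedean absolute values equal $1$), but imports the Kronecker fact and the identification of $\mu(\Zw)$ with $\langle\w\rangle$. Your pigeonhole alternative to Kronecker keeps the argument elementary while preserving the cleaner logic, so overall your route is at least as good and arguably more informative.
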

\begin{proof}
As observed in \cite{HLenstraJr}, $\G{a+b\w+c\w^2+d\w^3}$ is a positive definite quadratic form when viewed as a function of $a,b,c,d$. We found by direct computation that it takes integer values when $a,b,c,d$ are integers, its minimal eigenvalue is $1/2$ and its maximal eigenvalue is $5/2$. This implies an upper bound on $G$ in terms of $a,b,c,d$ and an inequality
\[
 \G{a+b\w+c\w^2+d\w^3} \ge (a^2+b^2+c^2+d^2)/2.
\]
Implication in $(a)$ follows immediately from above. The inequality also allows us to prove the proposition by considering a small set of special cases
\[
 (a,b,c,d) \in S = \{-2,-1,0,1,2\}^4.
\]
In all other cases, $\G{a+b\w+c\w^2+d\w^3}$ is greater than four, which corresponds to alternative $(c)$. To finish the proof it is sufficient to exclude quadruples corresponding to $0,\w^k$ from $S$ and find that the minimum of  $\G{a+b\w+c\w^2+d\w^3}$ over the remaining set is $3$.
\end{proof}

\begin{figure}[t]
\begin{algorithmic}[1]
\Require $U$ -- exact unitary (i.e., in the form of (\ref{exact:reresentable:unitary}))
\Procedure{EXACT-SYNTHESIZE}{$U$}
\State $g \gets \G{U}, U_r\gets U, C \gets (\text{empty circuit})$
\While{ $g>2$ }
  \State $J \gets {\arg\min}_{j\in\{1,\ldots,10\}} \G{\F\T^j U_r}$
  \State $U_r \gets \F\T^{J}U_r, g \gets \G{U_r}$
  \State Add $\F\T^{10-J}$ to the beginning of circuit $C$
\EndWhile
\State Find $k,j$ such that $U_r=\w^k \T^j$
\State Add $\w^k\T^{j}$ to the beginning of circuit $C$
\EndProcedure
\Ensure $C$ -- circuit over $\langle\F,\T\rangle$ that implements $U$
\end{algorithmic}
\caption{\label{fig:exact:synthesis}Exact synthesis algorithm.}
\end{figure}

To prove the correctness of the exact synthesis algorithm (Figure~\ref{fig:exact:synthesis}) we need the following technical result.

\begin{lem}\label{lem:exact:ratio}
For any $u,v$ from $\Zw$ such that $|u|\ne1$, $|u|\ne0$, $\Ni{u}+\t\Ni{v}=1$,
there exists $k_0(u,v)$ such that:
\[
\begin{array}{cl}
  (a) & \G{(u+\w^{k_0(u,v)}v)\t}/\G{u} < 1, \\
  (b) & \G{(u+\w^{k_0(u,v)}v)\t}/\G{u} < \frac{\p^2}{2}(\sqrt[4]{5}-1)^2 + \\
      & r(\G{u}), \text{ where }r(n) \text{ is in }O(1/n)\\
\end{array}
\]
In addition, for any $k$, the ratio $\G{(u+\w^{k}v)\t}/\G{u}$ is upper bounded by $ \frac{3\p^2}{2}(1 + \sqrt{\t})^2$.
\end{lem}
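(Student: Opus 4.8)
The plan is to control $\G{\t(u+\w^{k}v)}$ through the ring automorphism $\gc{(\ldotp)}$, for which $\gc\w=\w^{3}$ and $\gc\t=-\p$. Because $\gc{(\ldotp)}$ is a ring homomorphism commuting with complex conjugation, $\t(u+\w^{k}v)$ and its image under $\gc{(\ldotp)}$ give the basic identity
\[
\G{\t(u+\w^{k}v)}\;=\;\t^{2}\,\Ni{u+\w^{k}v}\;+\;\p^{2}\,\Ni{\gc u+\w^{3k}\gc v}.
\]
First I would record the ``companion'' of the norm equation: applying $\gc{(\ldotp)}$ to $\Ni u+\t\Ni v=1$ and using $\cm{x}=\gc{(\Ni x)}$ gives $\cm u-\p\cm v=1$, so $\cm u\ge 1$ and $\cm v=(\cm u-1)/\p$. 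Since $|u|\ne 0,1$ forces $\Ni u\in(0,1)$, and since $u\ne 0$ and $u\ne\w^{j}$ force $\G u\ge 3$ by Proposition~\ref{prop:exact:alternatives}, the relation $\cm u=\G u-\Ni u$ places $\cm u$ in $(\G u-1,\G u)\subseteq(2,\infty)$; in particular $v$ and $\gc v$ are non-zero.

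Next I would dispose of the first summand of the identity, uniformly in $k$. Bounding $\re$ by absolute value, $\t^{2}\Ni{u+\w^{k}v}\le\t^{2}\left(|u|+|v|\right)^{2}$, and a one-line Lagrange-multiplier computation shows $\left(|u|+|v|\right)^{2}\le\p^{2}$ on the constraint $\Ni u+\t\Ni v=1$ (the maximum, $\p^{2}$, occurring at $\Ni u=\t^{2}$, $\Ni v=1$), hence $\t^{2}\Ni{u+\w^{k}v}\le\t^{2}\p^{2}=1$. For the last assertion of the lemma I then bound the second summand bluntly: $\Ni{\gc u+\w^{3k}\gc v}\le\left(|\gc u|+|\gc v|\right)^{2}=\cm u+\cm v+2\sqrt{\cm u\,\cm v}$, and since $\cm v\le\cm u/\p$ and $\cm u\le\G u$ this is at most $\G u\left(1+1/\sp\right)^{2}=\G u\left(1+\st\right)^{2}$; combining the two estimates, $\G{\t(u+\w^{k}v)}\le 1+\p^{2}\left(1+\st\right)^{2}\G u\le\tfrac{3\p^{2}}{2}\left(1+\st\right)^{2}\G u$, using $\G u\ge 1$ and $1\le\tfrac{\p^{2}}{2}\left(1+\st\right)^{2}$.

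For parts (a) and (b) the choice of $k_{0}$ is what matters. As $\gcd(3,10)=1$, the numbers $\w^{3k}$ for $k=0,\dots,9$ exhaust the tenth roots of unity, spaced by $\pi/5$; since $\gc v\ne 0$ I can pick $k_{0}$ so that $\w^{3k_{0}}\overline{\gc u}\,\gc v$ lies within angle $\pi/10$ of the negative real axis, so that $\re\!\big(\w^{3k_{0}}\overline{\gc u}\,\gc v\big)\le-\cos(\pi/10)\,\sqrt{\cm u\,\cm v}$. Expanding $\Ni{\gc u+\w^{3k_{0}}\gc v}=\cm u+\cm v+2\re\!\big(\w^{3k_{0}}\overline{\gc u}\,\gc v\big)$, substituting $\cm v=(\cm u-1)/\p$, writing $A:=\cm u$, and invoking the identity $2\cos(\pi/10)\,\st=\sqrt[4]{5}$ (which follows from $4\cos^{2}(\pi/10)=2+\p$ and $(2+\p)\t=2\t+1=\sqrt5$) together with the elementary bound $\sqrt{A(A-1)}\ge A-\tfrac12-\tfrac1{2(2A-1)}$, I obtain
\[
\Ni{\gc u+\w^{3k_{0}}\gc v}\;\le\;(\p-\sqrt[4]{5})\,A\;+\;\big(\tfrac12\sqrt[4]{5}-\t\big)\;+\;\frac{\sqrt[4]{5}}{2(2A-1)}.
\]
Multiplying by $\p^{2}$, adding $\t^{2}\Ni{u+\w^{k_{0}}v}\le 1$, dividing by $\G u$, and using $A=\G u-\Ni u\in(\G u-1,\G u)$ gives
\[
\frac{\G{\t(u+\w^{k_{0}}v)}}{\G u}\;\le\;\p^{2}(\p-\sqrt[4]{5})\;+\;\frac{1+\p^{2}\big(\tfrac12\sqrt[4]{5}-\t\big)}{\G u}\;+\;\frac{\p^{2}\sqrt[4]{5}}{2(2\G u-3)\,\G u}.
\]
Because $\tfrac{\p^{2}}{2}(\sqrt5+1)=\p^{3}$, the leading term equals $\p^{3}-\p^{2}\sqrt[4]{5}=\tfrac{\p^{2}}{2}(\sqrt[4]{5}-1)^{2}$, and the remaining two terms, taken as $r(\G u)$, are $O(1/\G u)$; this is (b). For (a), the right-hand side is decreasing in $\G u$, so it is largest at the least admissible value $\G u=3$, where a direct evaluation gives a value below $1$ (roughly $0.99$).

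The step I expect to be delicate is exactly this numerical check for (a): the leading constant $\tfrac{\p^{2}}{2}(\sqrt[4]{5}-1)^{2}\approx 0.321$ leaves the correction only about $0.68$ of slack, and at $\G u=3$ the correction is already near $0.66$, so the auxiliary estimates — the value $\p^{2}$ for $\max\left(|u|+|v|\right)^{2}$, the lower bound on $\sqrt{A(A-1)}$, and the exact form of the leading constant — must all be tight, and it is essential that $\G u\ge 3$ rather than merely $\G u\ge 2$. That last point is the only place where the hypothesis $|u|\ne 1$ (equivalently, $u$ is not $\w^{j}$, i.e.\ $\G u\ne 2$) is used; everything else is routine algebra in $\t,\p,\w$.
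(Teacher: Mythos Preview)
Your argument is correct and follows essentially the same route as the paper: both proofs pass to the Galois conjugate via $\gc{(\cdot)}$, expand $\Ni{\gc u+\w^{3k}\gc v}$, pick $k_0$ by the angular pigeonhole so that the cross term is at most $-\cos(\pi/10)$ times its modulus, and exploit the companion relation $\cm u-\p\,\cm v=1$ together with the identity $2\cos(\pi/10)\sqrt\t=\sqrt[4]{5}$. Your direct decomposition $\G{\t x}=\t^{2}\Ni x+\p^{2}\cm x$ with the Lagrange bound $\t^{2}\Ni{u+\w^{k}v}\le 1$ is in fact slightly cleaner than the paper's version, which obtains the same bound from unitarity of $\F\T^{k}U_r$ but then has to treat separately the case $\G{\t(u+\w^{k}v)}=2$ before invoking $\G{x}\le\tfrac32\,\cm x$.
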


We first prove that the exact synthesis algorithm is correct and efficient and then proceed to the proof of Lemma \ref{lem:exact:ratio}.

\begin{thm} \label{thm:exact:correctness}
For any exact unitary $U$ (in the form of (\ref{exact:reresentable:unitary})):
\begin{enumerate}
\item the exact synthesis algorithm (Figure~\ref{fig:exact:synthesis})
terminates and produces a circuit that implements $U$,
\item $n$, the minimal length of $\FT$-circuit implementing $U$, is in $\Theta(\log(\G{U}))$,
\item the algorithm requires at most $O(n)$ arithmetic operations and outputs an $\FT$-circuit with $O(n)$ gates
\end{enumerate}
\end{thm}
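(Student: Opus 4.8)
The plan is to treat the three claims in turn: claim (1) comes from turning the decrease estimate Lemma~\ref{lem:exact:ratio}(a) into a termination argument for the $\arg\min$ step; claim (2) needs the sharper rate Lemma~\ref{lem:exact:ratio}(b) for the upper bound on $n$ and a crude ``$\G$ changes by at most a constant factor per gate'' estimate for the lower bound; claim (3) is then essentially bookkeeping on top of (2). First I would set up the loop invariant $U=\Pi(C)\,U_r$, where $\Pi(C)$ is the unitary represented by the current circuit $C$. It holds at initialization, and is preserved at each step because the block prepended to $C$ is (a circuit for) the inverse of the left factor $\F\T^{J}$ applied to $U_r$, while the closing block is a circuit for $U_r=\w^k\T^j$ itself; so correctness follows once termination is proven. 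Since left multiplication by $\T$ or by $\F$ maps exact unitaries to exact unitaries (the ``only if'' part of the Exact synthesis lemma), every $U_r$ encountered stays exact, hence $g=\G{U_r}$ is a nonnegative integer by Proposition~\ref{prop:exact:alternatives}. The key observation is that the loop guard $g>2$ puts us exactly in the hypotheses of Lemma~\ref{lem:exact:ratio}: Proposition~\ref{prop:exact:alternatives} forces $\G u\ge 3$, hence $u\ne 0$, and also $\Ni u\ne 1$, since $\Ni u=1$ would give $\G u=\Ni u+\cm u=1+\gc{1}=2$ and thus $u=\w^k$; together with $\Ni u+\t\Ni v=1$ this is precisely what the lemma assumes. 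A one-line matrix computation shows that the upper-left entry of $\F\T^j U_r$ is $\t(u+\w^j v)$, and as $j$ ranges over $\{1,\dots,10\}$ the factor $\w^j$ runs over \emph{all} tenth roots of unity, in particular over $\w^{k_0(u,v)}$; so the algorithm's choice of $J$ together with part~(a) gives $g_{\mathrm{new}}=\min_j\G{\t(u+\w^jv)}<\G u=g_{\mathrm{old}}$. Thus $\G{U_r}$ strictly decreases through nonnegative integers, the loop terminates, and at exit $g\in\{0,2\}$; $g=0$ is impossible because an exact unitary has $u\ne 0$ (else $\Ni v=\p$, forcing $\cm v=\gc{\p}=-\t<0$, impossible), so $g=2$, which forces $u=\w^{k_1}$, $v=0$, and $U_r$ to be a diagonal matrix with tenth-root-of-unity entries, i.e.\ of the form $\w^k\T^j$, so the last line of the algorithm is well defined. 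As promised in the statement of the Exact synthesis lemma, this also establishes its ``if'' direction.

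For claim (2), for the upper bound I would bound the number of iterations by splitting the run into two phases. Because $\tfrac{\p^2}{2}(\sqrt[4]5-1)^2<1$ and $r(\cdot)\in O(1/\cdot)$, Lemma~\ref{lem:exact:ratio}(b) yields an absolute constant $N$ and a constant $c<1$ such that, as long as $\G{U_r}>N$, one iteration multiplies $\G{U_r}$ by at most $c$; hence after $O(\log\G U)$ iterations $\G{U_r}\le N$, and from there part~(a) (strict decrease of an integer) allows at most $N$ further iterations. So the loop runs $O(\log\G U)$ times, each iteration contributes $O(1)$ gates (one $\F$ and at most nine $\T$'s) and the closing block is $O(1)$ gates, so the synthesized $\FT$-circuit has length $O(\log\G U)$ and $n\le O(\log\G U)$. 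For the matching lower bound, any $\FT$-circuit $G_n\cdots G_1$ for $U$ has all partial products exact; $\G$ of the upper-left entry is unchanged by left multiplication by $\T$ or $\w\Id$ and multiplied by at most an absolute constant $\lambda$ by left multiplication by $\F$ — this is the last clause of Lemma~\ref{lem:exact:ratio} when $\Ni u\notin\{0,1\}$, with the degenerate case $u=\w^m$ (where $\G{\t\w^m}/\G{\w^m}=3/2$) checked by hand — so $\G U\le 2\lambda^{n}$ and $n\ge\Omega(\log\G U)$. The two bounds give $n=\Theta(\log\G U)$.

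Claim (3) is then immediate from the bound in claim (2): the loop runs $O(\log\G U)=O(n)$ times, each iteration uses $O(1)$ ring operations in $\Zw$ (ten evaluations of $\G$ on four-integer tuples, an $\arg\min$ over ten numbers, one update of $U_r$, one prepend), and the closing step solves for $k,j$ in $O(1)$ operations, for a total of $O(n)$ arithmetic operations; the output circuit has $O(\log\G U)=O(n)$ gates by the same count.

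I expect the real work to be the glue rather than any single deep step, since Lemma~\ref{lem:exact:ratio} carries the analytic content. The two places most likely to need care are: verifying that the loop guard $g>2$ translates \emph{exactly} into the hypotheses $\Ni u\notin\{0,1\}$ of Lemma~\ref{lem:exact:ratio} (which leans on Proposition~\ref{prop:exact:alternatives} and the equivalence $\Ni u=1\iff u=\w^k$), and running the two-phase iteration count, because Lemma~\ref{lem:exact:ratio}(b) only controls the ratio once $\G{U_r}$ exceeds an absolute constant and one must fall back on the weaker bound~(a) below that threshold.
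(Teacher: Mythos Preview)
Your proposal is correct and follows essentially the same approach as the paper's proof: termination via the integer-valued strictly-decreasing $\G$ (Lemma~\ref{lem:exact:ratio}(a)), the upper bound on iterations from the asymptotic contraction rate in~(b), and the lower bound on $n$ from the uniform expansion bound in the last clause of the lemma. Two small presentational differences are worth noting: your two-phase argument for the upper bound makes explicit what the paper compresses into ``$m\le\log_3(\G U)+c_1$'' (the paper uses that $\tfrac{\p^2}{2}(\sqrt[4]5-1)^2<1/3$ directly, with the $O(1/\G U)$ correction absorbed into the additive constant), and for the lower bound you track $\G$ gate-by-gate while the paper first rewrites the minimal circuit in the normal form $\w^{k(m+1)}\T^{k(m)}\F\T^{k(m-1)}\cdots\F\T^{k(0)}$ and bounds $\G$ block-by-block; your version is arguably tidier since you explicitly dispatch the boundary case $u=\w^m$ that the block argument needs minimality to rule out.
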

\begin{proof}
The termination of the while loop in the algorithm follows from two facts: the complexity measure of $U$ is an integer and, by Lemma \ref{lem:exact:ratio}, it strictly decreases at each step. Indeed, consider ratio $\G{\F\T^j U_r} / \G{U_r}$. If we denote the upper left entry of $U_r$ by $u$ and the lower left by $v\sqrt{\t}$, then the ratio is precisely equal to the one considered in Lemma~\ref{lem:exact:ratio}. By picking $j$ that minimizes $\G{\F\T^j U_r}$ we ensure that implications $(a)$ and $(b)$ of the Lemma hold. After the loop execution we have $\G{U_r}\le2$. According to Proposition \ref{prop:exact:alternatives}, the only possible values of $G$ of the upper left entry $u$ of $U$ are either $0$ or $2$. There is no exactly synthesizable unitary with $u=0$. In other words, there is no $v$ from $\Zw$ such that equation $\tau|v|^2=1$ is solvable (see Section~\ref{sec:norm-equation}). The only remaining case is $\G{u}=2$. By Proposition~\ref{prop:exact:alternatives}, $u$ must be a power of $\w$,
therefore $U_r$
must be representable as $\w^k\T^j$. Correctness of the algorithm follows from the fact that during the algorithm execution at steps $3$ and $6$, it is always true that $U=U_C U_r,$ where $U_C$ denotes a unitary that is implemented by circuit $C$. This implies that any exactly synthesizable unitary $U$ can be represented as a $\FT$-circuit.

Now we prove the second and the third statements. Taking into account that $\F^2=\Id$ and $\T^{10}=\Id$, any exact unitary can be represented as the following matrix product
\begin{equation}\label{exact:product-representation}
 \w^{k(m+1)}\T^{k(m)}\F\T^{k(m-1)}\F\T^{k(m-2)}\ldots\F\T^{k(0)}
\end{equation}
where $k(j)$ are from $\{0,\ldots,9\}$. The exact synthesis algorithm produces the circuit that leads precisely to representation~(\ref{exact:product-representation}) and $m$ in this case is the number of steps performed by the algorithm. Lemma~\ref{lem:exact:ratio} implies that $m$ is upper bounded by $\log_3(\G{U})+c_1$, as $\frac{\p^2}{2}(\sqrt[4]{5}-1)^2 < 1/3$. This gives an upper bound on the length of the circuit produced by the algorithm and also on $n$, the minimal possible length of any circuit that implements $U$.

Consider now representation~(\ref{exact:product-representation}) obtained from a minimal length $\FT$-circuit implementing $U$. We prove a bound on $\G{U}$ by induction. First, by direct computation, $\G{\F\T^{k(0)}}$ is equal to three. Next, we introduce $V_j=\F\T^{k(j-1)}\ldots\F\T^{k(0)}$ and by the third part of Lemma~\ref{lem:exact:ratio} find
\[
 \G{\F\T^{k(j)} V_j} \le \frac{3\p^2}{2}(1 + \sqrt{\t})^2 \G{V_j}.
\]
We note that multiplication by $\w^{k(m+1)}\T^{k(m)}$ does not change the complexity measure and conclude that $\log(\G{U})$ is upper bounded by some linear function of $m$. It is not difficult to see that $m$ is less than $n$. This finishes the proof of the theorem.
\end{proof}
In Lemma \ref{lem:exact:ratio} we show that the question about the change of the complexity measure is rather geometrical. On the high level, the relative change of complexity measure is related to an angle between certain vectors and the power of $\w$ in the expression $G(\t(u+\w^k v))$ allows to adjust the angle by multiples of $\pi/5$.
\begin{proof}[Proof of Lemma \ref{lem:exact:ratio}]
We first prove inequality $(a)$. Consider the special case when there exists $k$ such that $\G{(u+\w^k v)\t}=2$. We define $k_0(u,v)=k$. Inequality $(a)$ holds because $|u|\ne 1$ and $|u|\ne0$ implies $\G{u}\ge3$ by Proposition \ref{prop:exact:alternatives}.

Now we assume that $\G{(u+\w^k v)\t}\ge3$. It is more convenient to consider ratio $\cm{((u+\w^k v)\t)}/\cm{u}$ instead of $\G{(u+\w^{k_0(u,v)}v)\t}/\G{u}$. The equality $\Ni{\gc{u}}=\G{u}-\Ni{u}$, inequalities $|u|<1$, $\G{(u+\w^k v)\t}\ge3$ and multiplicative property of $\gc{(\ldotp)}$ imply
\[
 \frac{\G{(u+\w^k v)\t}}{\G{u}} < \frac{3\p^2}{2} \cdot \frac{ \cm{(u+\w^k v)}}{\cm{u}}.
\]
Next we expand $\cm{(u+\w^k v)}$ and see that it is equal to
\[
 \cm{u} + \cm{v} +2\re(\gc{u}\cc{(\gc{v})}\w^{-3k}).
\]
Here we used that $\gc{\w}=\w^3$. It is convenient to introduce
\[
e^{i\phi} := \frac{\gc{u}}{|\gc{u}|} \cc{\l( \frac{\gc{v}}{|\gc{v}|} \r)} \,\alpha := \frac{\cm{v}}{\cm{u}}.
\]
In terms of $\alpha,e^{i\phi}$ we find
\begin{equation}\label{eq:ratio}
  \frac{\cm{(u+\w^k v)}}{\cm{u}} = 1 + \alpha +2\re\l(e^{i(\phi-3\pi k/5)}\r)\sqrt{\alpha}.
\end{equation}
It is always possible to chose $k_0(u,v)$ such that
\begin{equation}\label{ineq:subratio}
 \frac{ \cm{(u+\w^{k_0(u,v)} v)}}{\cm{u}} < 1+\alpha - 2\sqrt{\alpha}\cos(\pi/10).
\end{equation}
Using the estimate above we get
\begin{equation}\label{ineq:ratio}
 \frac{\G{(u+\w^{k_0(u,v)} v)\t}}{\G{u}} < \frac{3\p^2}{2} \cdot\l( 1+\alpha - 2\sqrt{\alpha}\cos(\pi/10) \r).
\end{equation}
Note that $\alpha$ is a function of $\cm{u}$. Indeed, $\Ni{u}+\t\Ni{v}=1$ implies
\[
 1 = \gc{(\Ni{u}+\t\Ni{v})} = \cm{u} - \p\cm{v}.
\]
We conclude that $\alpha=\t(\cm{u}-1)/\cm{u}$. Ratio ${\G{(u+\w^{k_0(u,v)} v)\t}}/{\G{u}}$ is upper bounded by the value of the right hand side of inequality (\ref{ineq:ratio}) at $\cm{u}=2$ which is approximately $0.9982$. Indeed, the right hand side is a monotonically decreasing function of $\cm{u}$. In addition, it is always the case that $\cm{u}>2$.
This follows from the relation between $\cm{u}$ and $\G{u}$ and the inequality $\G{u}\ge3$. This completes the proof of $(a)$.

To show $(b)$ we first note that
\[
 \frac{\G{(u+\w^k v)\t}}{\G{u}} < \frac{\G{(u+\w^k v)\t}}{\cm{((u+\w^k v)\t)}} \cdot \frac{ \p^2 \cm{(u+\w^k v)}}{\cm{u}}.
\]
To complete the proof of $(b)$ it is sufficient show that the first ratio in the right hand side of the inequality above is always less than $1+r_1(\G{u})$ and the second one is less than $\frac{\p^2}{2}(\sqrt[4]{5}-1)^2 + r_2(\G{u})$ when $k=k_0(u,v)$, for $r_{1,2}(n)$ from $O(1/n)$. The definition of $\G{\ldotp}$ implies
\[
 \frac{\G{(u+\w^k v)\t}}{\cm{(u+\w^k v)}} < 1 + 1/\cm{(u+\w^k v)}.
\]
It follows from equation~(\ref{eq:ratio}) that there exists $C_1$ such that $1/\cm{(u+\w^k v)} < C_1/\cm{u}$. We conclude that $r_1(x)=C_1/(x-1)$. Next we rewrite the right hand side of equation~(\ref{ineq:subratio}):
\[
 \frac{ \cm{(u+\w^{k_0(u,v)} v)}}{\cm{u}} < 1 + \t + 2\sqrt{\t}\cos(\pi/10) + f(\cm{u}).
\]
Taking into account that $f$ is monotonically decreasing and $\cm{u}>\G{u}-1$, we define $r_2(x)=f(x-1)$. By direct computation, we note that
\[
 1 + \t + 2\sqrt{\t}\cos(\pi/10) = \frac{1}{2}(\sqrt[4]{5}-1)^2
\]
which completes the proof of $(b)$.

Now we obtain the bound for arbitrary $k$. We use equation~(\ref{eq:ratio}) and inequality $\alpha < \t$ to find
\[
   {\cm{(u+\w^k v)}}/{\cm{u}} \le (1 + \sqrt{\t})^2.
\]
As before, by distinguishing cases when $\G{(u+\w^k v)\t}$ is equal to two or greater than two we get
\[
   \frac{\G{(u+\w^k v)\t}}{\G{u}} < \frac{3\p^2}{2}(1 + \sqrt{\t})^2
\]
which finishes the proof.
\end{proof}


\section{norm equation}
\label{sec:norm-equation}

In this section, we begin by explaining the role of norm equations in the approximate synthesis of Fibonacci anyon circuits and proceed with a detailed description of solvability conditions and the essential procedures required for solving norm equations. In the concluding subsection, we present an algorithm for solving a certain class of norm equations in probabilistically polynomial runtime.

\subsection{Motivation}
As outlined in Section \ref{sec:intro}, a method for completing an element from the ring $\Ri$ by other elements from the ring is required, such that together they make up a unitary matrix of the form (\ref{exact:reresentable:unitary}).

For example, consider first the special case where our compilation target is the Pauli $X$ gate
\[
X=\l(\begin{array}{cc}
              0 & 1 \\
             1  & 0
            \end{array}\r) .
\]

As per (\ref{exact:reresentable:unitary}), the off-diagonal element $1$ needs to be represented as $1=v\,\sqrt{\t}, v \in \mathbb{C}$, which makes $v$ equal to $\sqrt{\t^{-1}}=\sqrt{\phi}$. Since the latter does not belong to $\Ri$ it must be approximated by an element of $\Ri$.
Suppose we have found a $u \in \Ri$ such that $|u - \sqrt{\phi}| < \epsilon$. It can be shown that $|u|$ cannot be made exactly equal to $\sqrt{\phi}$, therefore the matrix
\[
\l(\begin{array}{cc}
              0 & -u^*\,\sqrt{\t}  \\
             u\,\sqrt{\t}  & 0
            \end{array}\r)
\]
is going to be subtly non-unitary, thus we must fill the diagonal with elements $v,v^* \in \Ri$, however tiny, such that $|v|^2+|u|^2 \, \t = 1$.
This amounts to solving the equation $|v|^2=1-|u|^2 \, \t$ for $v \in \Ri$.

Now consider a more general case, recalling Lemma \ref{prelim:decomposition:lemma}, where the compilation target is a $Z$-rotation
\[
R_Z(\theta)=\l(\begin{array}{cc}
              e^{-i \, \theta/2} & 0 \\
             0  & e^{i \, \theta/2}
            \end{array}\r).
\]
Suppose $u \in \Ri$ and $|u - e^{-i \, \theta/2}| < \epsilon$.
In general, $|u|$ will be close to $1$ but not exactly $1$, so
\[
\l(\begin{array}{cc}
              u & 0 \\
             0  & u^*
            \end{array}\r)
\]
is likely to be subtly non-unitary.
Again, we need to fill the off-diagonal entries with $v \, \sqrt{\t}; -v^* \, \sqrt{\t}, v \in \Ri$, which amounts to solving $|v|^2 = (1-|u|^2)/\t$ for $v \in \Ri$.

In order to develop a general solution for solving these norm equations, we must consider the previously defined rings $\Zt$ and $\Ri$ and automorphism
$\gc{(\ldotp)}$.

\subsection{Components of the norm equation}

Consider the following \textit{norm maps}, further described in Appendix \ref{sec:exact:math}:
\begin{eqnarray}
N_i : \Ri &\rightarrow& \Zt ; \, N_i(\eta) = \eta \, \eta^*, \\\label{eq:relative:norm}
N_{\t}:\Zt &\rightarrow& \Z ; \, N_{\t}(\xi) = \xi \, \xi^{\bullet},\\\label{eq:tau-norm}
N: \Ri &\rightarrow& \Z ; \, N(\eta) = N_{\t}N_i(\eta),\label{eq:absolute:norm}
\end{eqnarray}
where map (\ref{eq:absolute:norm}) is called the \textit{absolute norm} for the ring $\Ri$ and
map (\ref{eq:relative:norm}) is called the \textit{relative norm} in the ring extension $\Ri/\Zt$ and is shorthand for the squared complex absolute value: $N_i(x)=|x|^2$.
Correctness of its interpretation as a map into the ring $\Zt$ stems from the fact that $\Zt$ is the largest real subring of $\Ri$ or, in other words, $\Zt = \Ri \cap \mathbb{R}$ (see Appendix \ref{sec:exact:math}).
Therefore the real-valued element $\eta \, \eta^* \in \Zt$.


Given a $\xi\in\Zt$, we introduce the relative norm equation that we want to solve for $x\in\Ri$:
\begin{equation}
N_{i}(x)=\xi.\label{eq:rnorm}
\end{equation}
The two necessary conditions for (\ref{eq:rnorm}) to be solvable are $\xi>0$ and $\xi^{\bullet}>0$, but these conditions are in general not sufficient.
We develop the complete set of conditions in the next two subsections.

\subsection{Units and the greatest common divisor}

For ring extensions such as $\Ri/\Zt$, the theory
of norm equations is relatively simple, and the solutions
of such equations are well understood.
It is particularly simple here due to the fact that both $\Zt$ and $\Ri$ are principal ideal domains (PIDs) \cite{Jacobson}.
We describe the PID property and its consequences, and refer the reader to Appendix \ref{sec:exact:math} for details.

First, we note that given a ring $R$ and a principal ideal generated for an element $p \in R$ , i.e., $I(p)=\{ r\, p | r \in R \}$, the representation of the ideal is unique up to an invertible element of $R$, called a \textit{unit}.
If $u \in R$ is a unit, any element $ r \, p \in I(p) $ is equal to $ (r \, u^{-1}) \, (u \, p) $ and thus belongs to $I(u\,p)$.
It follows that the converse is also true.

For example, when $R=\mathbb{Z}$ the only units are $+1$ and $-1$.
The ideal $I(2)$ contains both positive and negative even numbers and thus coincides with $I(-2)$.
However, the group of units of $\Zt$ is infinite (see Lemma \ref{UNITS:LEMMA}) and therefore each ideal in $\Zt$ has an infinite set of equivalent representations.

Second, when $R$ is a principal ideal domain any two elements of the ring have at least one greatest common divisor.
For $a,b \in R$, the ideal $I(a,b) = \{ v\, a + w \, b | v,w \in R\}$ must be generated by some $g \in R$ and any common divisor of $a$ and $b$ divides $g$.
Again, a greatest common divisor is only unique up to multiplicative unit of the ring $R$.
Units of a number ring $R$ play an important role in several of the algorithms below.

The following definition will be useful:
\begin{defin}
(1) The set of of all units of a ring $R$ is a group with respect to multiplication called the \textbf{unit group} of $R$ and denoted by $U(R)$.

(2) Two elements $r_1, r_2 \in R$ are \textbf{associate} if there exists a unit $u \in U(R)$ such that $r_2 = u \, r_1$.
\end{defin}

\begin{example} \label{associate:example}
(1) As per lemma \ref{UNITS:LEMMA}, the unit group $U(\Zt)$ is an infinite group generated by $\{-1,\t\}$ and it consist of all elements $\{\pm\t^k, \, k \in \Z\}$.

(2) By direct computation in $\Zt$, $\, (2-\t)^{\bullet} = (3+\t) = \t^{-2}\,(2-\t)$. Thus $(2-\t)^{\bullet}$ is associate with $(2-\t)$.
\end{example}

Both rings  $\Zt$ and $\Ri$ are principal ideal domains; in Section \ref{subsection:GCD:Zomega} an efficient algorithm for computing the greatest common divisor (GCD) in $\Ri$ is presented.
Below, $GCD_{\Ri}$ refers to a greatest common divisor computed by this method or otherwise.

The goal of the following subsections is to develop an algorithm for solving the norm equation (\ref{eq:rnorm}) and to prove that its runtime is probabilistically polynomial in the bit size of the righthand side of (\ref{eq:rnorm}) provided a prime factorization of the righthand side is available.

\subsection{Solvability and solutions}

First we identify the \textit{primitive element} $\theta \in \Ri$:
\begin{equation}
\theta=\w+\w^{4}=-1+2\,\w-\w^{2}+\w^{3},\label{eq:primitive}
\end{equation}
where by direct computation
\begin{equation}
\theta^2=\t-2\label{eq:thetaSquare}.
\end{equation}
Since $\t-2 < 0$, the value of $\theta$ in $\mathbb{C}$ is purely imaginary.



We begin by investigating Eq (\ref{eq:rnorm}) for the special case where $\xi$ is prime.
First, consider the case when the norm of $\xi$ is 5:
\begin{observ} \label{normeq:norm5}
If $\xi>0, \xi^{\bullet}>0$ and $N_{\t}(\xi)= 5$, there exists a $k\in\Z$, such that $x=\pm \t^{k}\,\theta$ is a solution of (\ref{eq:rnorm}), where $\theta$ is the primitive element (\ref{eq:primitive}).
\end{observ}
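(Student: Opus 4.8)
The plan is to show that the hypotheses on $\xi$ force $\xi=\t^{2k}(2-\t)$ for some $k\in\Z$, and then to read off the solution. The computation that gets everything started is the value of $N_i(\th)$. By (\ref{eq:thetaSquare}) we have $\th^{2}=\t-2<0$, so $\th$ is purely imaginary; hence $\cc{\th}=-\th$ and $N_i(\th)=\th\cc{\th}=-\th^{2}=2-\t$. Note that $2-\t$ itself satisfies the hypotheses of the observation: $2-\t>0$, $\gc{(2-\t)}=2+\p=3+\t>0$, and $\Nt{2-\t}=(2-\t)(3+\t)=5$. Since $\cc{\t}=\t$, for every $k$ one has $N_i(\t^{k}\th)=\t^{2k}N_i(\th)=\t^{2k}(2-\t)$, with $\t^{k}\th\in\Ri$ because $\t$ is a unit of $\Ri$. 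So once the factorization $\xi=\t^{2k}(2-\t)$ is established, $x=\pm\t^{k}\th$ solves (\ref{eq:rnorm}); the sign is immaterial, as $N_i$ is invariant under $x\mapsto-x$.

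To obtain the factorization I would work in $\Zt$, using that it is a principal ideal domain, hence a unique factorization domain. Both $2\t+1$ and $2-\t$ are prime in $\Zt$: recalling $\gc{\t}=-\p$, a direct computation gives $\Nt{2\t+1}=-5$ and $\Nt{2-\t}=5$, and an element of $\Zt$ whose norm is $\pm$ a rational prime is irreducible, hence prime. Two further identities, checked by direct computation, connect these primes: $5=(2\t+1)^{2}$ and $2-\t=\t\,(2\t+1)$, so $2-\t$ and $2\t+1$ are associate. Now take an arbitrary $\xi$ with $\Nt{\xi}=5$. Since $N_\t$ is multiplicative and $\left|\Nt{p}\right|\ge 2$ for every non-unit $p$, the prime factorization of $\xi$ in $\Zt$ contains exactly one prime $p$, with $\left|\Nt{p}\right|=5$; then $p$ divides $\Nt{p}=p\,\gc{p}$, which divides $5=(2\t+1)^{2}$, so $p$ is associate to $2\t+1$, hence to $2-\t$. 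Thus $\xi=\eta\,(2-\t)$ for some unit $\eta\in U(\Zt)$.

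It remains to pin down $\eta$. By Example \ref{associate:example}, $U(\Zt)=\{\pm\t^{j}\mid j\in\Z\}$, and $\t\,\gc{\t}=-\t\p=-1$, so $\Nt{\pm\t^{j}}=(-1)^{j}$. Applying $N_\t$ to $\xi=\eta\,(2-\t)$ gives $5=\Nt{\eta}\cdot 5$, whence $\Nt{\eta}=1$, forcing $\eta=\pm\t^{2k}$ for some $k\in\Z$. Finally $\t^{2k}>0$ and $2-\t>0$, so the hypothesis $\xi>0$ rules out the minus sign, leaving $\xi=\t^{2k}(2-\t)$; together with the first paragraph this yields $x=\pm\t^{k}\th$ as a solution of (\ref{eq:rnorm}). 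Incidentally the second hypothesis $\gc{\xi}>0$ is then automatic, as $\gc{\xi}=\p^{2k}(3+\t)>0$.

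The only step I expect to carry real content is ``every $\xi$ with $\Nt{\xi}=5$ is associate to $2-\t$'': it rests on unique factorization in $\Zt$ together with the identity $5=(2\t+1)^{2}$, that is, on the fact that $5$ is a square in $\Zt$ (equivalently, that $5$ ramifies in $\Q(\sqrt{5})$). Everything else — the evaluation of $N_i(\th)$, the multiplicative norm bookkeeping, and the sign analysis — is routine.
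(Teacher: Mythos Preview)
Your proof is correct and follows essentially the same route as the paper: compute $N_i(\theta)=2-\t$, show that any $\xi$ with $N_\t(\xi)=5$ is associate to $2-\t$, then pin down the unit. Your argument is somewhat more self-contained---you prove the associate claim directly via the ramification identity $5=(2\t+1)^2$ rather than deferring to Appendix~\ref{sec:app:B}, and you extract the parity of the unit's exponent from $N_\t(\eta)=1$ rather than from the pair of positivity conditions (indeed you correctly observe that $\gc{\xi}>0$ is redundant here, since $\xi\,\gc{\xi}=5>0$ and $\xi>0$ already force it)---but these are minor variations on the same idea.
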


\begin{proof}
Note first that $N_i(\theta)=2-\t$ and $N_{\t}(2-\t)=5$.
As follows from Appendix \ref{sec:app:B}, any other solution of $N_{\t}(\xi)=5$ is associate with either $(2-\t)$ or $(2-\t)^{\bullet}$.
However, the $(2-\t)$ element is exceptional in $\Zt$. It is associate with its adjoint (see Example \ref{associate:example}). Therefore $\xi = u \, (2-\t)$ where $u \in \Zt$ is a unit. Units of $\Zt$ are characterized in Lemma \ref{UNITS:LEMMA}. For $\xi>0, \xi^{\bullet}>0$ to hold, $u$ must be of the form $u = \t^{2k}, k \in \Z$, and our observation immediately follows.
\end{proof}

A general solution is based on the following:
\begin{thm}\label{solve:prime:xi}
Given a prime element $\xi\in\Zt$, the norm equation $N_{i}(x)=\xi$
is solvable in $\Ri$ if and only if the following two conditions
are satisfied:
\begin{enumerate}
\item $\xi>0,\xi^{\bullet}>0$,
\item $p=N_{\t}(\xi)$ is an integer prime that is either of the form $p=5\,m+1, m \in \Z$ or $p=5$.
\end{enumerate}
Assuming these conditions are satisfied, then

(a) $\t-2=m^{2}\m\xi$ for some $m\in\Z$,

(b) when $p\neq\pm5$, the equation (\ref{eq:rnorm}) has at least two distinct solutions $x=\tau^{k}\, s$ and $x^{*}=\tau^{k}\, s^{*}$ for a certain $k\in\Z$,
where $s=GCD_{\Ri}(\xi,m-\theta)$.
\end{thm}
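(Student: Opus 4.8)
The plan is to analyze the norm equation $N_i(x)=\xi$ by passing to the absolute norm and exploiting the factorization of the rational prime $p=N_\t(\xi)$ in $\Ri$. First I would establish the necessity of the two conditions. Condition~1 is immediate: $N_i(x)=|x|^2>0$ forces $\xi>0$, and applying the automorphism $\gc{(\ldotp)}$ to $N_i(x)=\xi$ gives $\gc{(\Ni{x})}=\cm{x}>0$ (using $\gc{(\Ni{x})}=\cm{x}$ as noted in the excerpt), so $\xi^\bullet>0$. For condition~2, if $N_i(x)=\xi$ then taking absolute norms gives $N(x)=N_\t(\xi)=p$ is the integer prime; but the absolute norm $N(x)=N_\t N_i(x)$ is the norm form of the degree-$4$ field $\Q(\w)$, whose prime splitting is governed by the cyclotomic reciprocity law. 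The prime $5$ ramifies (this is the $p=5$ case, already handled by Observation~\ref{normeq:norm5}), and an unramified rational prime $q$ is a norm from $\Zw$ only if $q$ splits completely, i.e. $q\equiv1\pmod{10}$; since $\Q(\w)$ contains no real primes of even residue degree, the relevant condition reduces to $p\equiv1\pmod5$. This gives $p=5m+1$ or $p=5$.

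Next I would prove sufficiency together with part~(a). Part~(a) asserts $\t-2\equiv m^2\pmod\xi$ for some integer $m$; since $\theta^2=\t-2$ by~(\ref{eq:thetaSquare}), this says $\theta$ reduces modulo $\xi$ to an integer. The key point is that the residue ring $\Ri/\xi\Ri$ has order $N(\xi)=p$ (because $\xi$ is prime in $\Zt$ and $p\equiv1\pmod5$ splits further, so $\xi\Ri$ has absolute norm $p$), hence $\Ri/\xi\Ri\cong\F_p$ and every element, in particular $\theta$, is congruent to a rational integer $m$. Then $m-\theta\equiv0\pmod\xi$ in $\Ri$. Now I would compute $N_i(m-\theta)=(m-\theta)(m-\theta)^*=(m-\theta)(m+\theta)=m^2-\theta^2=m^2-\t+2$, which lies in $\Zt$ and is divisible by $\xi$ (since $m^2-\t+2=m^2-(\t-2)\equiv0$). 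Setting $s=GCD_{\Ri}(\xi,m-\theta)$, the fact that both $\Zt$ and $\Ri$ are PIDs (as stated in the excerpt) lets me argue that $N_i(s)$ is exactly $\xi$ up to a unit of $\Zt$: on one hand $s\mid\xi$ forces $N_i(s)\mid N_i(\xi)=\xi^2$ in $\Zt$, and on the other hand $\xi$ being prime in $\Zt$ means $s\mid\xi$ gives $s$ associate to $\xi$ or to a prime of $\Ri$ above $\xi$, and a counting/norm argument pins down $N_i(s)$ as an associate of $\xi$. Adjusting by a power of $\t$ (the unit group of $\Zt$ is $\{\pm\t^k\}$ by Lemma~\ref{UNITS:LEMMA}, and positivity of $\xi,\xi^\bullet$ forces the sign and even parity as in the proof of Observation~\ref{normeq:norm5}) yields $N_i(\t^k s)=\xi$, establishing solvability and giving the solution $x=\t^k s$. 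Applying complex conjugation, $x^*=\t^k s^*$ is also a solution, and when $p\neq5$ these are distinct because $s$ and $s^*$ generate distinct primes of $\Ri$ above $\xi$ (the splitting is into two conjugate primes, not one), whereas for $p=5$ the prime is fixed by conjugation up to units, matching Observation~\ref{normeq:norm5}.

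The main obstacle I anticipate is the structural input that $\xi$ prime in $\Zt$ with $N_\t(\xi)=p\equiv1\pmod5$ actually splits into two conjugate primes in $\Ri$ (equivalently, that $\Ri/\xi\Ri\cong\F_p$ rather than $\F_{p^2}$), since this is what makes $\theta\bmod\xi$ rational and what guarantees the GCD $s$ is a genuine proper divisor with the right norm. This is a statement about the decomposition group of $p$ in $\Q(\w)/\Q$ and the fact that $\Zt=\Q(\sqrt5)\cap\Ri$ sits as the fixed field of complex conjugation; I would lean on the cyclotomic reciprocity facts collected in Appendix~\ref{sec:exact:math} and Appendix~\ref{sec:app:B} rather than reprove them. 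A secondary technical nuisance is making the "up to units" bookkeeping precise—ensuring the unit adjusting $N_i(s)$ to $\xi$ is a square times $\pm1$ so that a genuine $\t^k$ (not $\t^{k/2}$) works—but this follows the same positivity argument already used for the norm-$5$ case.
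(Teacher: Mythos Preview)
Your overall strategy coincides with the paper's: reduce solvability to the complete splitting of $p$ in $\Q(\w)$, use cyclotomic reciprocity (Washington's Theorem~2.13 in the paper) to characterize this as $p\equiv 1\pmod 5$ or $p=5$, then realize a solution as $s=\gcd_{\Ri}(\xi,m-\theta)$ and adjust by the unit $\t^k$ exactly via Lemma~\ref{UNITS:LEMMA}. That skeleton is correct and matches Appendix~\ref{sec:app:B}.

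There is, however, a concrete error in your argument for part~(a). You claim that $\Ri/\xi\Ri$ has order $N(\xi)=p$ and hence is $\F_p$. Both assertions are false: since $\xi$ is real, $N_i(\xi)=\xi^2$ and $N(\xi)=N_\t(\xi)^2=p^2$; correspondingly $\Ri$ is free of rank~$2$ over $\Zt$, so $|\Ri/\xi\Ri|=|\Zt/\xi\Zt|^2=p^2$. The dichotomy is not $\F_p$ versus $\F_{p^2}$ but $\F_p\times\F_p$ (split) versus $\F_{p^2}$ (inert). In particular, $\Ri/\xi\Ri$ is never a field of order $p$, so you cannot conclude directly that $\theta$ is congruent to a rational integer modulo $\xi\Ri$. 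What you need is precisely the equivalence you flag as the ``main obstacle'': $\xi$ splits in $\Ri$ if and only if the minimal polynomial $X^2-(\t-2)$ of $\theta$ factors over $\Zt/\xi\Zt\cong\F_p$, i.e.\ if and only if $\t-2$ is a square mod~$\xi$. The paper gets this from the Kummer--Dedekind theorem (Algorithm~\ref{norm:neukurch:algo}) together with the cyclotomic input that $p\equiv1\pmod 5$ forces full splitting in $\Ri/\Z$ and hence of $\xi$ in $\Ri/\Zt$. Once you work modulo a prime $\mathfrak p$ of $\Ri$ above $\xi$ (where $\Ri/\mathfrak p\cong\F_p$ genuinely holds), your computation $N_i(m-\theta)=m^2-(\t-2)\equiv 0\pmod\xi$ and the GCD argument go through; the paper's Corollary~\ref{special:case} makes the ``counting/norm argument'' you allude to precise by invoking the transitive Galois action on the primes above $\xi$.
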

The proof requires extensive algebraic number theory and is outlined in Appendix \ref{sec:app:B}.

Powerful general algorithms exist for solving relative norm
equations in algebraic field extensions \cite{HCohen99,DSimon,UGPARI}
, however Thm \ref{solve:prime:xi} suggests an algorithm that has probabilistically polynomial time complexity for important cases.
The following methods are required:
\begin{enumerate}
\item an algorithm for computing square root modulo a prime,
\item an algorithm for computing GCD in $\Ri$,
\item an algorithm for computing $\log_{\t}(unit)$ (i.e., the one that recovers the integer $m$ given the value of $\t^m$).
\end{enumerate}

Before proceeding, we first present a general theorem describing solutions of (\ref{eq:rnorm}) where the righthand side is not necessarily prime.

Since $\Zt$  is a PID, we can factor any element into a product of a complete square
and a square-free part, and then split the square-free part into prime factors.
Assuming this representation, we claim the following:
\begin{thm}\label{solve:general:xi}
For any $\xi \in \Zt$ such that $\xi>0,\xi^{\bullet}>0$
there exists a factorization
\begin{equation}
\xi=\eta^{2}\xi_{1}...\xi_{r},\eta,\xi_{j}\in\Zt,r\in\Z,r\geq0,j=1,...,r\label{th3factorization}
\end{equation}
where $\xi_{1},...,\xi_{r}$ are $\Zt$-primes such that $\xi_j>0,\xi_j^{\bullet}>0, j=1,\ldots,r$.

Given such a factorization, the norm equation $N_{i}(x)=\xi$ is solvable in $\Ri$ if and
only if $p_{j}=N_{\t}(\xi_{j})$ is an integer prime for each $j\in\{1,...,r\}$ and either $p_{j} = 1 \m 5$ or $p_{j}= 5$.
\end{thm}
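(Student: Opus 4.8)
The plan is to reduce the general norm equation $N_i(x)=\xi$ to the prime case covered by Theorem~\ref{solve:prime:xi} via multiplicativity of the relative norm $N_i$. First I would establish the existence of the factorization \eqref{th3factorization}. Since $\Zt$ is a PID, every element factors (uniquely up to units) into a product of primes; I would group the primes into pairs wherever a prime appears with even multiplicity, folding those into $\eta^2$, and keep the square-free remainder as $\xi_1\cdots\xi_r$. The remaining point is to fix signs: each $\Zt$-prime is determined only up to a unit $\pm\t^k$, and since $\t>0$ while $\t^\bullet=-\p<0$, I can multiply each $\xi_j$ by a suitable $\t^{2\ell_j}$ to arrange $\xi_j>0$; the hypothesis $\xi>0,\xi^\bullet>0$ together with $\eta^2>0$ and $(\eta^2)^\bullet=(\eta^\bullet)^2>0$ then forces $\xi_1\cdots\xi_r$ and its $\bullet$-image to be positive, and a short parity argument (using that a product of $\Zt$-primes with $p_j\equiv1\ (\mathrm{mod}\ 5)$ or $p_j=5$ will be relevant) shows each $\xi_j^\bullet>0$ as well; the residual sign can be absorbed into $\eta^2$ by noting $-1$ is not a valid leftover if we also allow adjusting an even number of $\xi_j$. (If a clean sign normalization of all $\xi_j$ simultaneously is not possible one can instead carry a global $\pm1$ and note it never obstructs solvability since $N_i$ of a unit covers $\pm\t^k$... but actually $N_i(x)>0$ always, so the product must genuinely be totally positive — this is where the hypotheses are used.)

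Next I would prove the "if" direction. Assume each $p_j=N_\t(\xi_j)$ is an integer prime with $p_j\equiv1\ (\mathrm{mod}\ 5)$ or $p_j=5$. Each $\xi_j$ satisfies the hypotheses of Theorem~\ref{solve:prime:xi} (conditions 1 and 2 hold by assumption), hence there exists $x_j\in\Ri$ with $N_i(x_j)=\xi_j$. Also $N_i(\eta)=\eta\eta^*$ by definition, so $N_i(\eta^2)=N_i(\eta)^2$... more directly $N_i(\eta)=\eta\cc{\eta}$ and setting $x=\eta\,x_1\cdots x_r$ gives, by multiplicativity of $x\mapsto x\cc{x}$,
\[
N_i(x)=N_i(\eta)\,N_i(x_1)\cdots N_i(x_r)=\eta\cc{\eta}\cdot\xi_1\cdots\xi_r.
\]
This is not quite $\xi=\eta^2\xi_1\cdots\xi_r$ unless $\eta\cc{\eta}=\eta^2$, i.e.\ unless $\eta\in\Zt$ is real — which it is, since $\eta\in\Zt=\Ri\cap\R$, so $\cc{\eta}=\eta$ and $N_i(\eta)=\eta^2$. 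Hence $N_i(x)=\xi$, proving solvability.

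For the "only if" direction, suppose $N_i(x)=\xi$ is solvable. Apply the absolute norm $N=N_\t\circ N_i$: then $N(x)=N_\t(\xi)=N_\t(\eta)^2\prod_j N_\t(\xi_j)=N_\t(\eta)^2\prod_j p_j$, and $N(x)=N_\t(x\cc{x})$ is a nonnegative integer that is a perfect-square-times-product structure. The key is to show each $p_j$ is an integer prime of the stated congruence class. I would argue prime by prime using the fact that $\xi_j$ is a $\Zt$-prime: from $N_i(x)=\xi$ one gets $\xi_j \mid N_i(x)$ in $\Zt$, and since $\Ri$ is a PID I can look at how the prime ideal $(\xi_j)$ behaves in $\Ri$ — it must split or ramify for a solution to exist, and the splitting behavior in $\Ri/\Zt$ is governed exactly by whether $\t-2$ is a square mod $\xi_j$, equivalently (via the residue field having order $p_j$) by whether $5$ is a square mod $p_j$, which by quadratic reciprocity means $p_j\equiv\pm1\ (\mathrm{mod}\ 5)$; the totally-positive constraint $\xi_j^\bullet>0$ then rules out $p_j\equiv-1$, leaving $p_j\equiv1\ (\mathrm{mod}\ 5)$ or the ramified case $p_j=5$. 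That $p_j$ must itself be an integer prime (rather than an integer prime power or a product) follows because $\xi_j$ is a $\Zt$-prime and $N_\t$ of a prime in this quadratic ring is either an integer prime or the square of one, and the square case corresponds to an inert integer prime which makes the local equation unsolvable — this exclusion is the place I expect to spend the most care.

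The main obstacle will be the sign/unit bookkeeping in constructing \eqref{th3factorization} and, on the "only if" side, cleanly translating "$N_i(x)=\xi$ solvable" into a statement purely about the integer prime $p_j$ and its residue mod $5$ without invoking more machinery than the PID structure and Theorem~\ref{solve:prime:xi} already give; I would lean on Appendix~\ref{sec:app:B} for the splitting-of-primes input and keep the main-text argument at the level of multiplicativity of norms plus reduction to the prime case.
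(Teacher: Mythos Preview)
Your plan and the ``if'' direction are fine, but the ``only if'' direction contains a genuine error. You claim that $\xi_j$ splitting in $\Ri/\Zt$ --- i.e.\ $\t-2$ being a square modulo $\xi_j$ --- is equivalent to $5$ being a square modulo $p_j$, hence to $p_j\equiv\pm1\pmod5$, and that total positivity of $\xi_j$ then rules out $-1$. Both steps fail. The condition ``$5$ is a QR mod $p_j$'' governs splitting of $p_j$ in $\Zt/\Z$, which is already given (since $p_j=N_\t(\xi_j)$ is a rational prime); it does \emph{not} govern the further splitting of $\xi_j$ in $\Ri/\Zt$. Concretely, $\xi=5+\t$ is a totally positive $\Zt$-prime with $N_\t(\xi)=19\equiv-1\pmod5$; here $5$ is a QR mod $19$ but $\t-2$ is not, so $\xi$ is inert in $\Ri$ and $N_i(y)=\xi$ is unsolvable. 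Total positivity of $\xi_j$ has nothing to do with distinguishing $p_j\equiv1$ from $p_j\equiv-1$; the correct criterion is that $p_j$ split \emph{completely} in $\Q(\zeta_5)$, which is $p_j\equiv1\pmod5$.

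The paper bypasses this by a direct descent rather than a splitting computation: given $N_i(x)=\xi$ with $\xi$ square-free, let $s_j$ generate the ideal $\xi_j\Ri+x\Ri$ in the PID $\Ri$. Then $N_i(s_j)=s_js_j^*$ divides both $\xi_j^2$ and $xx^*=\xi$, so square-freeness forces $N_i(s_j)$ to be a unit multiple of $\xi_j$; that unit is totally positive and hence an even power of $\t$ by Lemma~\ref{UNITS:LEMMA}(2), giving $N_i(\t^ks_j)=\xi_j$. Thus each prime equation is individually solvable, and Theorem~\ref{solve:prime:xi} supplies the congruence on $p_j$. (Your sign bookkeeping in the factorization lemma is also off: $\t^{2\ell}$ is totally positive and changes neither the sign of $\xi_j$ nor of $\xi_j^\bullet$. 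The paper instead first flips the even number of negative $\xi_j$ by $-1$, then pairs up the even number with $\xi_j^\bullet<0$, multiplying one of each pair by $\t$ and the other by $\t^{-1}$.)
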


The proof is given in Appendix \ref{sec:app:B}.

In the context of Thm \ref{solve:general:xi}, we make the following:
\begin{observ}
$N_i(x)=\xi$ has at least $2^r$ distinct solutions.
\end{observ}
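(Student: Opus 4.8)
The plan is to bootstrap from the single-prime case (Theorem~\ref{solve:prime:xi}) by building solutions multiplicatively out of the prime factorization~(\ref{th3factorization}). The key observation is that $N_i$ is multiplicative: if $N_i(x_j) = \xi_j$ for each $j$, then $N_i(\eta \, x_1 \cdots x_r) = N_i(\eta)\,N_i(x_1)\cdots N_i(x_r) = |\eta|^2 \, \xi_1 \cdots \xi_r = \xi$, so any choice of prime-level solutions assembles into a solution of~(\ref{eq:rnorm}). Thus it suffices to count the number of \emph{inequivalent} choices we can make.

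First I would recall from Theorem~\ref{solve:prime:xi}(b) that for each prime factor $\xi_j$ with $p_j = N_\t(\xi_j) \neq \pm 5$, the equation $N_i(x) = \xi_j$ has the two solutions $s_j$ and $s_j^*$, which are genuinely distinct: they differ by complex conjugation, and since the value of each $\xi_j$ in $\mathbb{C}$ is real and positive while the solutions are not real (one checks $\Ri \cap \R = \Zt$ and $\xi_j$ is not a square in $\Zt$ in these cases, so $s_j \notin \R$), we have $s_j \neq s_j^*$. Then I would form the $2^r$ products
\[
 x_{\vec\epsilon} := \eta \prod_{j=1}^{r} s_j^{(\epsilon_j)}, \qquad \vec\epsilon \in \{0,1\}^r,
\]
where $s_j^{(0)} = s_j$ and $s_j^{(1)} = s_j^*$. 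Each $x_{\vec\epsilon}$ solves $N_i(x) = \xi$ by the multiplicativity computation above (using $N_i(s_j^*) = N_i(s_j)^* = \xi_j^* = \xi_j$ since $\xi_j \in \Zt \subset \R$). The (small) prime $p_j = 5$ is handled separately: by Observation~\ref{normeq:norm5} such a factor contributes the solution $\pm\t^k\theta$, which one can absorb into the product without reducing the count coming from the remaining factors (if all $p_j = 5$ the statement is vacuous-ish but still holds with the trivial bound, since $2^r$ counts sign/associate freedom; more carefully one notes $r \geq 1$ forces at least the two associates $\pm$).

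The main obstacle — and the step deserving the most care — is showing the $2^r$ products $x_{\vec\epsilon}$ are pairwise \emph{distinct} as elements of $\Ri$, not merely up to units. Here I would argue that if $x_{\vec\epsilon} = x_{\vec\epsilon'}$ then, comparing the ideals generated, the prime ideal factorizations must coincide; since the rational prime $p_j$ splits in $\Ri$ (this is exactly condition $p_j \equiv 1 \bmod 5$ or $p_j = 5$, cf. the role of $\t - 2 = m^2 \bmod \xi$ in Theorem~\ref{solve:prime:xi}(a)) into two conjugate prime ideals $\mathfrak{p}_j, \bar{\mathfrak{p}}_j$ lying over $\xi_j$, the choice $s_j$ versus $s_j^*$ corresponds precisely to which of $\mathfrak{p}_j, \bar{\mathfrak{p}}_j$ divides $x$. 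Distinct sign vectors $\vec\epsilon$ therefore give $x_{\vec\epsilon}$ with distinct ideal factorizations, hence distinct elements. I would make this rigorous by induction on $r$, peeling off one prime at a time and using that $GCD_{\Ri}(x_{\vec\epsilon}, \xi_j \text{-part})$ recovers $\epsilon_j$. The conjugate-distinctness at the bottom of the induction is the only genuinely delicate point; everything else is routine multiplicativity. This argument should occupy no more than a short paragraph in the appendix, which is presumably why the authors only record it as an observation.
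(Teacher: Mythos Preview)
Your approach is essentially the same as the paper's, only far more detailed: the paper's entire justification is the single sentence ``Assuming that each $N_i(y)=\xi_j$ is individually solvable, each factor equation has exactly two solutions,'' leaving multiplicativity and distinctness of the $2^r$ products implicit. Your ideal-factorization argument for distinctness is correct for the split primes and more than the authors bother to say; the only soft spot is your treatment of the ramified factor $p_j=5$, where $s_j$ and $-s_j$ generate the \emph{same} ideal so the ideal argument alone does not separate them---but since at most one $\xi_j$ can have norm $5$ (the $\xi_j$ generate pairwise distinct prime ideals), flipping that bit simply negates the product, which keeps all $2^r$ elements distinct. This is easy to patch and does not affect the overall correctness.
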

Assuming that each $N_i(y)=\xi_j$ is individually solvable, each factor equation has exactly two solutions.
The $\eta^2$ factor does not affect the solvability since $N_i(\eta)=\eta^2$, however, depending on the value of $\eta$, the $N_i(z)=\eta^2$ equation may have a number of other solutions besides $\eta$.
In general, solving factorization (\ref{th3factorization}) is as hard as factorizing an arbitrary rational integer.
This part of a general solution procedure cannot be done in polynomial running time, however we may consider solving norm equations where the righthand side happens to be factorizable at a lower cost.
An algorithm that solves a subclass of norm equations over $\Ri$ is summarized in Section \ref{normeq:algorithm:summary}.
We now present subalgorithms needed to implement Thm \ref{solve:prime:xi}.

\subsection{Square root modulo prime}

We present the following well-known fact (c.f., quadratic extensions in \cite{HCohen99}) as a segway into the modular square root algorithm:
\begin{thm}\label{solve:theModxi:theorem}
If $p=N_{\tau}(\xi)$ is an integer prime then $\Zt/(\xi)$ is effectively isomorphic
to $\mathbb{Z}_{p}$.
\end{thm}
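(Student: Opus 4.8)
The plan is to build the isomorphism $\Zt/(\xi)\cong\mathbb{Z}_p$ explicitly by sending the class of $\t$ to a suitable residue modulo $p$. First I would observe that the natural composite $\mathbb{Z}\hookrightarrow\Zt\twoheadrightarrow\Zt/(\xi)$ is a ring homomorphism, and that its kernel contains $p=N_\t(\xi)=\xi\gc\xi$, since $p\in(\xi)$. Because $p$ is a rational prime, the kernel is either $p\mathbb{Z}$ or all of $\mathbb{Z}$; the latter is impossible because $1\notin(\xi)$ (as $\xi$ is not a unit — its norm is $p\neq\pm1$). Hence we obtain an injective ring homomorphism $\mathbb{Z}_p\hookrightarrow\Zt/(\xi)$. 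It then remains to show this map is surjective, equivalently that every element of $\Zt/(\xi)$ is congruent mod $\xi$ to a rational integer.

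For surjectivity I would argue that $\Zt/(\xi)$ is spanned over $\mathbb{Z}_p$ by the single class $[\t]$, and then that $[\t]$ itself is already a rational integer mod $\xi$. The first point: since $\Zt=\mathbb{Z}+\mathbb{Z}\t$ as an abelian group, every element of $\Zt/(\xi)$ is of the form $[a]+[b][\t]$ with $a,b\in\mathbb{Z}$, and reducing $a,b$ mod $p$ shows $\Zt/(\xi)$ has at most $p^2$ elements and is generated as a $\mathbb{Z}_p$-algebra by $[\t]$. The second, crucial point is that $\t$ is congruent mod $\xi$ to a rational integer: because $\Zt$ is a PID and $\xi$ is prime with $N_\t(\xi)=p$ a rational prime, $\xi$ cannot be associate to a rational integer (that would force $p$ to be a prime \emph{square}), so $\xi$ is a ``degree-one'' prime lying over $p$, which forces $\Zt/(\xi)$ to be the prime field $\mathbb{Z}_p$ and hence $[\t]=[c]$ for some $c\in\mathbb{Z}$. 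Concretely, one shows $\t$ satisfies $\t^2=1-\t$, so its image $c$ in $\mathbb{Z}_p$ must satisfy $c^2+c-1\equiv0\pmod p$; the hypothesis $N_\t(\xi)=p$ prime guarantees this quadratic has a root mod $p$ (indeed $\xi\mid(\t-c)$ for the appropriate root $c$), and then $a+b\t\equiv a+bc\pmod\xi$ for all $a,b$, giving surjectivity.

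Finally I would assemble these into an effective isomorphism: the forward map $\Zt/(\xi)\to\mathbb{Z}_p$ sends $[a+b\t]\mapsto a+bc\bmod p$ where $c$ is the stored root of $X^2+X-1$ mod $p$, and the inverse sends $[n]\mapsto[n]$; both are computable in time polynomial in $\log p$ once $c$ is known (and $c$ is recovered by the modular square root subalgorithm developed in the subsequent subsection, since $c=(-1\pm\sqrt5)/2$ requires a square root of $5$ mod $p$, which exists precisely when $p=5m+1$ or $p=5$, matching condition (2) of Theorem~\ref{solve:prime:xi}).

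The main obstacle is the surjectivity step — specifically, ruling out the possibility that $\xi$ is inert or ramified over $p$ in a way that would make $\Zt/(\xi)$ larger than $\mathbb{Z}_p$. The clean way around it is the norm count: $|\Zt/(\xi)|$ equals the absolute norm of $\xi$, and $N_\t(\xi)=p$ forces this cardinality to be exactly $p$, leaving $\mathbb{Z}_p$ as the only option. Making that norm-counting statement rigorous (that $|\Zt/(\xi)|=|N_\t(\xi)|$ for $\xi\neq0$ in a PID of integers) is the one place where genuine algebraic number theory, rather than direct computation, is needed, and I would cite the relevant standard fact rather than reprove it.
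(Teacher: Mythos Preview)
Your proposal is correct, but the paper's proof is both more elementary and more constructive than either of the routes you outline. Writing $\xi=a+b\t$, the paper observes from the norm formula $p=a^{2}-ab-b^{2}$ that $p\nmid b$ (otherwise $p\mid a$ too, forcing $p^{2}\mid p$); hence $b$ has an inverse $b_{1}$ modulo $p$, and since $a+b\t\equiv 0\pmod{\xi}$ one gets $\t\equiv -a\,b_{1}\pmod{\xi}$ directly. That single line gives surjectivity of $\mathbb{Z}_{p}\to\Zt/(\xi)$ and, simultaneously, the explicit effective isomorphism $c+d\t\mapsto c-d\,a\,b_{1}\bmod p$, with no appeal to the general fact $|\mathcal{O}/(\xi)|=|N(\xi)|$ and no need to compute a modular square root of $5$.

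By contrast, your main surjectivity argument is somewhat circular as written (you invoke ``$\xi\mid(\t-c)$ for the appropriate root $c$'' to justify that such a $c$ exists, which is the very thing to be shown), and you then fall back on the norm-counting black box you said you would rather cite than prove. That fallback is valid, but it obscures the ``effectively'' in the theorem statement: your proposed way of recovering $c$ via Tonelli--Shanks on $\sqrt{5}\bmod p$ is unnecessary work when the paper's formula $c=-a\,b_{1}\bmod p$ reads the answer straight off the coefficients of $\xi$. Indeed, the paper's subsequent SPLITTING-ROOT procedure uses exactly this $\t\equiv -a\,b_{1}$ reduction as its first step.
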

\begin{proof}
Note that $\Zt/(\xi)$ is a field and that $p=\xi^{\bullet}\xi=0\m\xi$.
Therefore the natural embedding $\mathbb{Z}_{p}\rightarrow\Zt/(\xi),k\m p\rightarrow k\m\xi$
is well-defined.

Writing $\xi=a+b\tau,a,b\in\Z$, we prove that $b\neq0\m p$. Indeed,
$p=a^{2}-a\, b-b^{2}$. If $p$ divides $b$, then $p$ divides $a^{2}$
hence $p$ divides $a$. Hence $p$ would be proportional to $p^{2}$ which
is impossible.

Thus $b$ is invertible $\m p$, i.e., $\exists b_{1}\in\Z:bb_{1}=1\m p=1\m\xi$.
Then $\tau=-a\, b_{1}\m\xi$ is congruent to an integer $\m\xi$,
hence any element of $\Zt$ is congruent to an integer $\m\xi$. Therefore
the above embedding is epimorphic and in fact an isomorphism.
\end{proof}

In the context of Thm \ref{solve:theModxi:theorem} we conclude that $\tau-2$ is congruent to $(-a\, b_{1}-2) \m\xi$, and that finding an integer $m$ such that $m^2 = \tau-2 \m\xi$ is equivalent to finding a $m$ such that $m^2 = (-a\, b_{1}-2) \m p$.
In view of Thm \ref{solve:prime:xi}, the existence of such $m$ is guaranteed whenever $p=5\,l+1, l \in \Z$ or $p=5$.
In this case, computing a square root of $-a\, b_{1}-2$ modulo $p$ is performed, constructively, using the \textit{Tonelli-Shanks Algorithm} (\cite{DShanks},\cite{HCohen96}, Sec.~1.5), also given in Fig.~\ref{fig:Tonelli:Shanks}.
The algorithm is known to be on average probabilistically linear in bit sizes of the radicand and $p$, and probabilistically quadratic in the bit size of $p$ in the worst case (c.f., \cite{EBach}).
For convenience, we present Thm \ref{solve:theModxi:theorem} and the Tonelli-Shanks algorithm in a single procedure called SPLITTING-ROOT, given in Fig.~\ref{fig:Splittig:Root}.

\begin{figure}[t]
\begin{algorithmic}[1]
\Require $n, p \in \Z$, assume $p$ is an odd prime and $n$ is a quadratic residue $\m p$.
\Procedure{TONELLI-SHANKS}{$n$,$p$}
\State{Represent $p-1$ as $p-1=q\,2^s, \, q$ odd.}
\If {s=1}
return $\pm n^{(p+1)/4} \m p$;
\EndIf
\State By randomized trial select a quadratic non-residue $z$ , i.e. $z \in \{2,\ldots,p-1\}$ such that $z^{(p-1)/2} = -1 \m p$.
\State Let $c=z^q \m p$.
\State Let $r=n^{(q+1)/2} \m p, \, t=n^q \m p, \, m = s$.
\While {$t \neq 1 \m p$}
\State By repeated squaring find the smallest $i \in \{1,\ldots,m-1\}$ such that $t^{2^i}=1 \m p$
\State Let $b = c^{2^{m-i-1}} \m p$, $r \leftarrow r \, b, \, t \leftarrow t \, b^2, \, c\leftarrow b^2, \, m \leftarrow i$
\EndWhile
\EndProcedure
\Ensure $\{r, p-r\}$.

\end{algorithmic}
\caption{\label{fig:Tonelli:Shanks} Tonelli-Shanks algorithm.}
\end{figure}

\begin{figure}[t]
\begin{algorithmic}[1]
\Require $\xi \in \Zt$, assume $p=N_\t(\xi)$ is an odd prime and $p=1 \m 5$.
\Procedure{SPLITTING-ROOT}{$\xi=a + b \, \t$}
\State{$p \gets N_\t(\xi)$}, assert $b \neq 0 \m p$
\State{$b_1 \gets b^{-1} \m p$}
\State \Return TONELLI-SHANKS($-a\,b_1-2$,$p$)
\EndProcedure
\caption{\label{fig:Splittig:Root} Procedure SPLITTING-ROOT: Square root of $\t-2$ modulo $\xi$.}
\end{algorithmic}
\end{figure}

\subsection{Greatest common divisor in $\Ri$}  \label{subsection:GCD:Zomega}

Next we need an algorithm for computing the greatest common divisor in the $\Ri$ ring.
We consider a ``generalized binary" greatest common divisor algorithm for $\Ri$ that draws on ideas from \cite{DamFran}
and implements the general method given in \cite{Wikstrom}.
It requires the following:
\begin{lem}
For any  element of $\eta \in \Ri$, either $1+\w$ divides $\eta$ or $\eta$ is associate to an element $\zeta \in \Ri$ such that $\zeta=1 \m (1+\omega)$.
\end{lem}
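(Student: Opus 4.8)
The plan is to analyze the residue ring $\Zw/(1+\w)$; the lemma amounts to the statement that $U(\Zw)$ surjects onto the multiplicative group of that residue ring (which turns out to be $\mathbb{F}_5^{\times}$). First I would show that every element of $\Zw$ is congruent modulo $1+\w$ to one of $0,1,2,3,4$. Since $\w+1\in(1+\w)$ we have $\w\equiv-1$, hence $\w^{2}\equiv1$ and $\w^{3}\equiv-1$, so $a+b\w+c\w^{2}+d\w^{3}\equiv a-b+c-d\m(1+\w)$ and every element of $\Zw$ is congruent to a rational integer. Moreover $1+\w$ divides $5$: writing $P(x)=x^{4}-x^{3}+x^{2}-x+1$, we have $P(\w)=0$ (this is the defining relation $\w^{4}=-1+\w-\w^{2}+\w^{3}$) and $P(-1)=5$, so $P(x)-5$ is divisible by $x+1$ in $\Z[x]$; substituting $x=\w$ gives $5=-(1+\w)\,Q(\w)$ with $Q(x)=x^{3}-2x^{2}+3x-4$. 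Thus $5\equiv0\m(1+\w)$, and every element of $\Zw$ is congruent to its residue in $\{0,1,2,3,4\}$.

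Next I would exhibit a unit that generates all nonzero residues. The element $\t=\w^{2}-\w^{3}$ is a unit of $\Zw$ because $\t\,\p=1$ with $\p=1+\t\in\Zt\subseteq\Zw$. From $\w\equiv-1$ we get $\t\equiv1-(-1)=2\m(1+\w)$, and since $5\equiv0\m(1+\w)$, reduction modulo $1+\w$ on the integers is governed by reduction modulo $5$; hence $\t^{j}\equiv2^{j}\m(1+\w)$, and as $2$ has order $4$ in $(\Z/5\Z)^{\times}$ the powers $\t^{0},\t^{1},\t^{2},\t^{3}$ are congruent to $1,2,4,3$ respectively, i.e. they run over all of $\{1,2,3,4\}$.

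Finally, let $\eta\in\Zw$ be arbitrary and let $s\in\{0,1,2,3,4\}$ be its residue modulo $1+\w$. If $s=0$, then $1+\w\mid\eta$, which is the first alternative. Otherwise $s\in\{1,2,3,4\}$, so by the previous paragraph there is $j\in\{0,1,2,3\}$ with $\t^{j}\equiv s^{-1}\m(1+\w)$ (the inverse taken in $(\Z/5\Z)^{\times}$); setting $\zeta:=\t^{j}\eta$, the element $\zeta$ is associate to $\eta$ since $\t^{j}\in U(\Zw)$, and $\zeta\equiv\t^{j}s\equiv1\m(1+\w)$, which is the second alternative. The only points requiring care are verifying that $1+\w$ divides $5$ (so that the residue system is finite; in fact $\Zw/(1+\w)\cong\mathbb{F}_{5}$) and that the single unit $\t$ already reduces to a generator of the nonzero residues — both follow from short computations with the relations $\w^{4}=-1+\w-\w^{2}+\w^{3}$ and $\t^{2}=1-\t$, so I do not expect any genuine obstacle here.
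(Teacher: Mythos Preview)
Your proof is correct and follows essentially the same strategy as the paper: reduce modulo $1+\w$ to the residue field $\mathbb{F}_5$ (using $\w\equiv-1$ and $5\equiv0$), then show that units of $\Zw$ cover all nonzero residues. The only cosmetic difference is the choice of units: the paper uses $-1$ together with $1-\w$ (noting $2=(1-\w)+(1+\w)$, so multiplying by $(1-\w)^{-1}$ sends residue $2$ to residue $1$), whereas you use powers of the single unit $\t\equiv2$, which generates $(\Z/5\Z)^{\times}$; your version is arguably tidier but the arguments are otherwise identical.
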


\begin{proof}
Any element is congruent to a rational integer $\m (1+\w)$.
Since $5=N(1+\w)=(1+\w^3)(2-\w+\w^2-\w^3)(1-\w^2)(1+\w)$, it follows that $\eta \m (1+\w) = \eta \m (1+\w) \m 5$.
Thus any element is congruent to one of  $\{0,\pm 1, \pm 2\}$ modulo $1+\w$.
Since $-1$ is a unit, any element is associate to one that is congruent to either $0$, $1$ or $2$.
It remains to note that $2=u\,1+(1+\w)$, where $u=1-\w$ is a unit and therefore $2$ is associate to $1 \m (1+\w)$.
\end{proof}
Note that the integer remainder of an element $\eta \in \Ri$ modulo $(1+\w)$ is computed by substituting $-1$ for $\w$ in $\eta$ .
Computing the quotient and remainder of $\eta$ with respect to $(1+\w)$ requires a total of no more than 8 integer additions.
After reducing the remainder $\m 5$, the selection of a unit needed to associate $\eta$ with the desired $\zeta$ is straightforward and immediate.

The ``generalized binary" GCD algorithm based on the above Lemma is presented in Fig.~\ref{fig:GCD:Zomega}.
It follows from the analysis in \cite{Wikstrom} that this algorithm converges in a number of steps (or recursion depth) that is quadratic in bit sizes of the norms of inputs (and hence the number of steps is polylogarithmic in the magnitudes of the norms).

\begin{figure}[t]
\begin{algorithmic}[1]
\Require $a,b \in \Ri$
\Procedure{BINARY-GCD}{$a$,$b$}
\If {one of the inputs is zero}
\State{return the other input}
\EndIf
\If {$(1+\w)$ divides both inputs}
\State {compute $a_1$: $a=(1+\w)a_1$;
compute $b_1$: $b=(1+\w)b_1$;
return $(1+\w)$BINARY-GCD$(a_1,b_1)$}
\EndIf
\State $u=v=1$
\If {$(1+\w)$ divides neither of the inputs}
\State {select unit $u$ such that $u \,a = 1\m (1+\w)$;
select unit $v$ such that $v \,b = 1\m (1+\w)$}
\EndIf
\State {let $c \in \{a,b\}$ be the input with smaller norm}
\Ensure BINARY-GCD$(c,u\,a-v\,b)$.
\EndProcedure
\end{algorithmic}
\caption{\label{fig:GCD:Zomega} Procedure BINARY-GCD: GCD in $\Ri$.}
\end{figure}

\subsection{Discrete logarithm base $\t$}

Finally, we need an algorithm for the discrete logarithm of a unit in $\Zt$.
For completeness, we prove the following elementary lemma and then derive the desired algorithm from the proof.
\begin{lem}\label{UNITS:LEMMA}
(1) The group of units $U(\Zt)$ is generated by $-1$ and $\t$.

(2) If a unit $u\in U(\Zt)$ is such that $u>0,u^{\bullet}>0$ then
$u$ is a perfect square in $U(\Zt)$.
\end{lem}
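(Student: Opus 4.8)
The plan is to treat the two parts in sequence, using the real embedding of $\Zt$ and its conjugate $\gc{(\ldotp)}$ to pin down units by a size argument. First I would recall that $\t = (\sqrt{5}-1)/2$ under the standard real embedding, so $\gc{\t} = -\p = -(\sqrt{5}+1)/2$, and an element $u = a + b\t \in \Zt$ is a unit if and only if $N_\t(u) = u\,\gc{u} = \pm 1$; indeed $N_\t$ is multiplicative with integer values, so units have norm $\pm 1$, and conversely $\gc{u}$ (or $-\gc{u}$) is the inverse. For part (1), I would argue as in the classical proof that $\Z[\p]$ (the ring of integers of $\Q(\sqrt5)$, which equals $\Zt$) has unit group $\{\pm\t^k\}$: suppose $u$ is a unit with $u > 1$; I claim $u \ge \p = 1/\t$. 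Writing $u = a+b\t$ with $u\gc{u} = \pm1$, if $1 < u < 1/\t$ then $|\gc{u}| = 1/|u| \in (\t,1)$, and a short check on the possible integer values of $a,b$ (bounded because both $u$ and $\gc u$ are bounded) shows no such unit exists; hence the smallest unit exceeding $1$ is $\t^{-1}$. Then for an arbitrary unit $u$, after multiplying by $\pm1$ we may assume $u > 0$, and choosing the integer $k$ with $\t^{-k} \le u < \t^{-(k+1)}$ forces $u\t^k \in [1, \t^{-1})$ to be a unit, hence equal to $1$, so $u = \t^{-k}$. This gives $U(\Zt) = \{\pm\t^k : k \in \Z\}$.

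For part (2), I would take a unit $u$ with $u > 0$ and $\gc u > 0$ and use part (1) to write $u = \pm \t^k$. The sign must be $+$ since $u > 0$. Now I would observe that $\gc{(\t^k)} = (\gc\t)^k = (-\p)^k = (-1)^k \p^k$, and $\p^k > 0$, so $\gc u = \gc{(\t^k)}$ has the sign of $(-1)^k$. The hypothesis $\gc u > 0$ therefore forces $k$ to be even, say $k = 2m$, and then $u = \t^{2m} = (\t^m)^2$ is a perfect square in $U(\Zt)$, as claimed.

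The main obstacle is the finite case-check inside part (1) — establishing that there is no unit strictly between $1$ and $\t^{-1}$. This is where one must actually use that $\Zt$ is exactly $\Z[\t]$ with $\t$ satisfying $\t^2 = 1-\t$ (so $\p = \t+1$, $\t\p = 1$): from $u\gc u = \pm 1$ with $1 < u < \p$ one gets $|\gc u| < 1$, hence $|a + b\t| < \p$ and $|a - b\p| < 1$ simultaneously, which bounds $a$ and $b$ to a tiny finite set that can be enumerated directly to see only $u=1$ survives. Everything after that is routine. A remark worth making, for the discrete-logarithm algorithm promised afterwards, is that the proof is constructive: given a unit $u$ one recovers $k$ (and hence $\log_\t$) by comparing $u$ against powers of $\t$ under the real embedding, i.e.\ $k = \lfloor \log u / \log \t \rfloor$ up to sign, which is exactly the routine the next subsection extracts from this lemma.
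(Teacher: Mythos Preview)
Your proof is correct, and part (2) matches the paper's argument essentially verbatim. For part (1), however, the paper takes a genuinely different route. Rather than the classical real-embedding size argument (show the smallest unit exceeding $1$ is $\p$, then trap an arbitrary positive unit between consecutive powers), the paper introduces the integer-valued invariant $\mu(u)=a\,b$ for $u=a+b\t$ and shows that, after normalizing to $a>0$, multiplying $u$ by $\t$ (when $\mu(u)>1$) or by $\t^{-1}$ (when $\mu(u)<-1$) strictly decreases $|\mu(u)|$ while keeping it positive, respectively negative. This drives $|\mu(u)|$ down to $\le 1$, and the finitely many units with $|ab|\le 1$ are then listed as $\{\pm 1,\pm\t,\pm\t^2,\pm\t^{-1}\}$.

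What each approach buys: your argument is the textbook Dirichlet-style proof and is arguably cleaner conceptually, but it relies on the real embedding and a floating-point comparison to extract $k$. The paper's $\mu$-descent is purely integer-arithmetic and \emph{is} the UNIT-DLOG algorithm: each step replaces $(a,b)$ by $(b,a-b)$ or $(a+b,-b)$ and adjusts $k$ by $\pm 1$. So your closing remark that the algorithm ``is exactly $k=\lfloor \log u/\log\t\rfloor$ up to sign'' is not what the paper actually does; the paper's point in proving the lemma this way is precisely to avoid any real logarithm and obtain an exact integer procedure whose step count is $O(\log\max\{|a|,|b|\})$.
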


\begin{proof}

(1) Both $-1,\t$ are clearly units.
Let $u=a+b\,\t\in U(\Zt)$ and define $\mu(u)=a\, b$. We show that there exists a certain $k\in\Z$ and $\delta=\pm1$ such
that $|\mu(u\,\delta\,\t^{k})|\leq1$.

Since $-1$ is a unit we can assume w.l.o.g. that $a>0$.
Now, consider the case of $\mu(u)>1$, implying $b>0.$ Since $N_{\t}(u)=(a-b)(a+b)-a\, b=\pm1$
and $a-b=(a\, b\pm1)/(a+b)$ it follows that $a>b$. Consider the
new unit $u'=u\,\t=a'+b'\t$ where $a'=b,b'=a-b$. We observe that
$a'>0,b'>0$ and $0<\mu(u')=a\, b-b^{2}<\mu(u)$. Thus $\mu(u)$
strictly decreases when the unit is multiplied by $\t$ but will not
become $<1$, as long as $\mu(u)>1$ . Therefore, there exists a positive
integer $k$ such that $\mu(u\,\t^{k})=1$ .

The case of $\mu(u)<-1$ is handled similarly, however, we repeatedly
multiply the unit times $\t^{-1}$instead of $\t$.
Units with $|a\, b|\leq1$ are easily enumerated and are found to be
$\{\pm1,\pm\t,\pm(1-\t)=\pm\t^{2},\pm(1+\t)=\pm\t^{-1}\}$.

(2) For a unit to $u$ to be positive, $u$ must be of the form $u=\t^{m},m\in\Z$.
Then $u^{\bullet}=(-(\t+1))^{m}$ is positive if and only if $m$
is even. Thus $u=(\t^{m/2})^{2}$.
\end{proof}

This proof suggests a straightforward algorithm for ``decoding" a unit, called UNIT-DLOG,
presented in Fig.~\ref{fig:Unit:Representation}.

\begin{figure}[h]
\begin{algorithmic}[1]
\Require unit $u = a + b\, \t \in U(\Zt)$
\Procedure{UNIT-DLOG}{$u$}
\State $s\gets 1$, $k \gets 0$
\If {$a < 0$}
\State{$a \leftarrow -a$; $b \leftarrow -b$; $s \leftarrow -s$ }
\EndIf
\State $\mu\gets a\,b$
\While {$|\mu|>1$}
\If {$\mu > 1$}
\State { $(a,b)\leftarrow (b,a-b)$; $k \leftarrow k-1$ }
\Else
\State { $(a,b)\leftarrow (a,a-b)$;  $k \leftarrow k+1$ }
\EndIf
\State $\mu \leftarrow a\,b$
\EndWhile
\Comment {$|\mu|=1$ here}
\State match $v=a+b \, \t$ with one of the $\{\pm1,\pm\t,\pm\t^{2},\pm\t^{-1}\}$
\Statex[1] adjust $s,k$ accordingly
\State \Return $(s,k)$
\EndProcedure
\Ensure $(s ,k)$ such that $s=-1,1,k$~-- integer and $u=s \t^k$
\end{algorithmic}
\caption{\label{fig:Unit:Representation} Procedure UNIT-DLOG. Finds a discrete logarithm of the unit $u$. The procedure runtime is in $O(\log(\max\{|a|,|b|\}))$.}
\end{figure}

\subsection{The EASY-SOLVABLE predicate}
The combination of Thms \ref{solve:prime:xi} and \ref{solve:general:xi} yields a principled constructive description of solutions of a norm equation over $\Ri$ where the only computationally hard part is the factorization of the righthand side of the equation.
%
%
A definition of what is easy to solve depends on how good we are at factorization in $\Zt$.
We make this dependency explicit in the algorithm presented in Fig.~\ref{fig:solve-norm-equation} and give an example of a viable EASY-FACTOR procedure, also given in Fig.~\ref{fig:easy-factor}; as future work, further enhancements of EASY-FACTOR could lead to even better compiled circuits.

Procedure EASY-SOLVABLE (Fig.~\ref{fig:easy-solvable}) has a single input that is assumed to be a list of factors belonging to $\Zt$ with their multiplicities.
A factor of the form $\eta^{2s}, s\in \Z$, contributes a factor of $\eta^s$ to the overall solution and its presence or absence does not affect the solvability of the equation.
A factor of multiplicity $1$ is either hard to factorize or it is prime.
As per Thm \ref{solve:prime:xi}, a prime factor $\xi \in \Zt$ is potentially a witness that the overall equation is not solvable, unless $p=N_{\t}(\xi)$ is an integer prime and either $p=5$ or $p=1 \m 5$.
We use a primality test~(subroutine IS-PRIME in procedure EASY-SOLVABLE) that has probabilistically polynomial runtime and negligible probability of returning a false positive.
Procedure EASY-FACTOR in Fig.~\ref{fig:easy-factor} is an example of a minimum-effort factorizer that is sufficient for our purposes.

\begin{figure}[h]
\begin{algorithmic}[1]
\Require $fl : List\langle \Zt \times \Z \rangle$
\Procedure{EASY-SOLVABLE}{$fl$}
\For {$i \in \{0..length(fl)-1\}$}
\State match $fl[i]$ with $(\xi,k), \xi \in \Zt, k \in \Z$
\If {$k = 1 \m 2$}
\If {$\xi \neq 5$}
\State $p\gets N_{\t}(\xi)$
\State $r\gets p \m 5$
\If {not  IS-PRIME($p$) or $r \notin \{0,1\}$)}
\State \Return FALSE
\EndIf
\EndIf
\EndIf
\EndFor
\State \Return TRUE
\EndProcedure
\Ensure TRUE if $fl$ is a factorization of an easy and solvable instance of the equation.
\end{algorithmic}
\caption{\label{fig:easy-solvable} Procedure EASY-SOLVABLE: checks if the given instance of the norm equation can be solved in polynomial time.}
\end{figure}

\begin{figure}[h]
\begin{algorithmic}[1]
\Require $\xi \in \Zt$
\Procedure{EASY-FACTOR}{$\xi=a + b\, \t , a,b \in \Z$}
\State $c \gets GCD(a,b); a_1 \gets a/c; b_1 \gets b/c; \xi_1 \gets a_1 + b_1 \, \t$
\If {$c=d^2 , d \in \Z$}
\State $ret \gets List((d,2))$
\Else
\If { $c=5\,d^2, d \in \Z$ }
\State $ret \gets List((d,2),(5,1))$
\Else
\State \Return $List((\xi,1))$
\Statex[1] \Comment equation is not going to be solvable
\EndIf
\EndIf
\State $n \gets N_\t(\xi_1)$
\If {$n=0 \m 5$}
\State $\xi_2 \gets \xi_1/(2-\t)$
\State \Return $ret+((2-\t),1)+(\xi_2,1)$
\Else
\State \Return $ret+(\xi_1,1)$
\EndIf
\EndProcedure
\Ensure Returns lightweight factorization of input.
\end{algorithmic}
\caption{\label{fig:easy-factor} Procedure EASY-FACTOR: a minimum-effort factorizer.}
\end{figure}

\subsection{The algorithm} \label{normeq:algorithm:summary}

Using the described procedures, we present an algorithm for solving norm equations, SOLVE-NORM-EQUATION, given in Fig.~\ref{fig:solve-norm-equation}.
The algorithm first checks the necessary conditions $\xi > 0 ,  \xi^{\bullet} > 0$ on the righthand side of the equation, and provided the conditions are satisfied, invokes EASY-FACTOR to preprocess $\xi$.
If the resulting list of factors is EASY-SOLVABLE, we consider each factor to either have even multiplicity or be a power of an allowed prime in $\Zt$.
An allowed prime is either $5$ or $2-\t$ with norm $5$, or some other prime with norm $p$ such that $p=1 \m 5$.
$5$ is equal to the norm of $2\,\t+1$.
In the case of $2-\t$, we exploit the identity $|\w+\w^4|^2 = 2-\t$ and induce the factor $(\w+\w^4)$ into the solution.
In the more general case we need to perform all the steps prescribed by Thm \ref{solve:prime:xi}, specifically:
(1) represent $\t-2$ as a square of integer $M$ modulo $\xi$  (as in Fig.~\ref{fig:Splittig:Root}),
(2) compute the GCD $y$ of $\xi$ and $M-(\w+\w^4)$ in $\Ri$ (as in Fig.~\ref{fig:GCD:Zomega}) ,
(3) obtain a unit $u$ that associates $\xi$ with $|y|^2$,
(4) represent the unit $u$ as $\t^m$ using the UNIT-DLOG procedure (as in Fig.~\ref{fig:Unit:Representation}).
Having performed these steps, we a obtain a factor of the desired solution corresponding to the allowed prime factor of the righthand side.
The algorithm terminates when all the factors of the righthand side have been inspected.

\begin{figure}[h]
\begin{algorithmic}[1]
\Require $\xi \in \Zt$
\Procedure{SOLVE-NORM-EQUATION}{$\xi$}
\If {$\xi < 0 $ or $\xi^{\bullet} < 0$}
\State \Return UNSOLVED
\EndIf
\State $fl\gets$EASY-FACTOR($\xi$)
\If {not EASY-SOLVABLE($fl$)}
\State \Return UNSOLVED
\EndIf
\State $x \gets 1$
\For {$i \in \{0..length(fl)-1\}$}
\State match $fl[i]$ with $(\xi_i,m) , \xi_i \in \Zt, m \in \Z$
\State $x \gets x \, \xi_i^{m/2}$
\If {$m=1 \m 2$}
\Comment assert $\xi_i$ is easy factor
\If {$\xi_i = 5$}
\State $x \gets x \, (2\,\t+1)$
\Else
\If {$\xi_i = 2-\t$}
\State $x \gets x \, (\w+\w^4)$
\Else
\State $M\gets\text{SPLITTING-ROOT}(\xi_i)$
\Statex[1] \Comment $M^2 = \t-2 \m \xi_i$
\State $y\gets\text{BINARY-GCD}(\xi_i,M{-}(\w{+}\w^4))$
\Statex[1] \Comment $(\w+\w^4)^2=\t-2$
\State $u\gets \xi_i/|y|^2$
\Statex[1] \Comment $u$ -- unit, $u>0$, $\gc{u}>0$
\State $(s,m)\gets\text{UNIT-DLOG}(u)$
\Statex[1] \Comment $s=1,$ $m$ -- even
\State $x \gets x \, \t^{m/2} y$
\EndIf
\EndIf
\EndIf
\EndFor
\State \Return $x$
\EndProcedure
\Ensure $x$ from $\Zw$ such that $|x|^2=\xi$
\end{algorithmic}
\caption{\label{fig:solve-norm-equation}Procedure SOLVE-NORM-EQUATION: finds a solution to an ``easy'' instance of the norm equation in probabilistic polynomial time.}
\end{figure}

\begin{example}
Consider the relative norm equation
\begin{equation}
N_i(x)=\xi=760-780\,\t.
\end{equation}
This equation turns out to be EASY-SOLVABLE with one of the solutions
\begin{equation} \label{example:normeq:soluiton}
x=2\,(4+3\,\t)(12-20\,\w+15\,\w^2-3\,\w^3).
\end{equation}
\end{example}

Following the EASY-FACTOR procedure, it is relatively easy to obtain the following list of factors for $\xi$:
\[fl=\{(2,2),(5,1),((2-\t),1),((15-8\,\t),1)\}.
\]
All but the last factor in this list yield easy partial solutions: $N_i(2)=2^2$, $N_i(2\,\t+1)=5$,$N_i(\w+\w^4)=2-\t$.
For the last factor, we find that $p=N_\t(15-8\,\t)=281$ is prime and $ p = 1 \m 5$.
SPLITTING-ROOT($15-8\,\t$) yields $63$ and BINARY-GCD($15-8\,\t$, $63-(\w+\w^4)$) yields $y=3+2\,\w-7\,\w^2+7\,\w^3$.
By direct computation, $(15-8\,\t)/|y|^2 = 5+3\,\t = \t^{-4}$ and thus
$N_i(\t^{-2}\, (3+2\,\w-7\,\w^2+7\,\w^3))=15-8\,\t$.
The value in (\ref{example:normeq:soluiton}) is a routine simplification of $2\,(2\,\t+1)\,\t^{-2}(\w+\w^4)(3+2\,\w-7\,\w^2+7\,\w^3)$.
(Note that further simplification is possible using $\t=\w-\w^3$ and is left as an exercise.)

\section{Approximation }
\label{sec:appr}

In this section we combine methods developed in previous sections and describe an algorithm for approximating unitaries of the form $R_z(\phi)X$ and $R_z(\phi)X$ with $\FT$-circuits (or equivalently, $\sgm$-circuits).
Recall that we measure the quality of approximation $\ve$ using the global phase-invariant distance
\begin{equation}
d(U,V)=\sqrt{1-\l|tr(UV^{\dagger})\r|/2}.
\end{equation}
The quality of approximation $\ve$ defines the problem size. Our algorithm produces circuits of length $O(\log(1/\ve))$ which meets the asymptotic worst-case lower bound~\cite{HRC} for such a circuit. The algorithm is probabilistic in nature and on average requires running  time in $O(\log^{c}(1/\ve))$ to find an approximation where $c$ is  a constant close to but smaller than $2.0$, according to our empirical estimates .

We first discuss all details of the algorithm for approximating $R_z(\phi)$ and then show how the same tools allow us to find approximations of $R_z(\phi)X$.

There are two main stages in our algorithm: the first stage approximates $R_z(\phi)$ with an exact unitary $U[u,v,0]$ and then uses the exact synthesis algorithm~(Figure~\ref{fig:exact:synthesis}) to find a circuit implementing $U[u,v,0]$. The second stage is completely described in Section~\ref{sec:exact}; here we focus on the first stage.

The expression for the quality of approximation in the first case simplifies to
\[
 d(R_z(\phi),U[u,v,0])=\sqrt{1-\l|\re(ue^{i\phi/2})\r|}
\]
We see that the quality of approximation depends only on $u$, the top left entry of $U[u,v,0]$. Therefore, to solve the first part of the problem it is sufficient to find $u$ from $\Zw$ such that $\sqrt{1-\l|\re(ue^{i\phi/2})\r|}\le\ve$. However, there is an additional constraint that must be satisfied: there must exist a $v$ from $\Zw$ such that $U[u,v,0]$ is unitary, in other words the following equation must be solvable:
\begin{equation}
  |v|^2 = \xi,\text{ for } \xi=\p(1-|u|^2).
\end{equation}
This is precisely the equation studied in Section~\ref{sec:norm-equation}. As discussed, in general the problem of deciding whether such a $v$ exists and then finding it is hard. It turns out, however, that in our case there is enough freedom to pick (find) ``easy" instances and obtain a solution in polynomial time without sacrificing too much quality.

There is an analogy to this situation: it is well known that factoring a natural number into prime factors is a hard problem. However, checking that the number is prime can be done in polynomial time. In other words, given a natural number $N$ one can efficiently decide if it is an easy instance for factoring. Now imagine the following game: one is given a uniformly chosen random number from the interval $[0,N]$ and one wins each time they can factor it. How good is this game? The key here is the Prime Number Theorem. It states that there are $\Theta(N/\log(N))$ primes in the interval $[0,N]$. Therefore one can win the game with probability at least $\Omega(1/\log(N))$. In other words, the number of trials one needs to make before winning scales as $O(\log(N))$. In our case the situation is somewhat similar and $N$ is of order $1/\ve$.

At a high level, during our approximation procedure~(Figure~\ref{fig:rzappr:algorithm}) we perform a number of trials. During each trial we first randomly pick a $u$ from $\Zw$ that achieves precision $\ve$ and then check that the instance of the norm equation can be easily solved. Once we find such an instance we compute $v$ and construct a unitary~$U[u,v,0]$.

\begin{figure}[t]
\begin{algorithmic}[1]
\Require $\phi$ -- defines $R_z(\phi)$, $\ve$ -- precision
\State $C \gets \sqrt{\p/4}$
\State $m \gets \l\lceil \log_{\tau}(C\ve)\r\rceil+1$
\State Find $k$ such that $\th = -\phi/2-\pi k/5 \in [0,\pi/5]$
\State \emph{not-found} $\gets$ true, $u\gets0,v\gets0$
\While{\emph{not-found}}
  \State $u_0 \gets \text{RANDOM-SAMPLE}(\theta,\ve,1)$ \Comment See Figure~\ref{fig:appr:sampling}
  \State $\xi \gets \p\l(\p^{2m} - \l|u_0\r|^2\r)$
  \State $fl \gets \text{EASY-FACTOR}(\xi)$
  \If{$\text{EASY-SOLVABLE}(fl)$}
    \State \emph{not-found} $\gets$ false
    \State $u \gets \w^k\t^m u_0$
    \State $v \gets \t^m\text{SOLVE-NORM-EQUATION}(\xi)$
  \EndIf
\EndWhile
\State $C \gets \text{EXACT-SYNTHESIZE}(U[u,v,0])$
\Ensure Circuit $C$ such that $d(C,R_z(\phi))\le\ve$
\end{algorithmic}
\caption{\label{fig:rzappr:algorithm} The algorithm for approximating $R_z(\phi)$ by an $\FT$-circuit with $O(\log(1/\ve))$ gates and precision at most $\ve$. Runtime is probabilistically polynomial as a function of $\log(1/\ve)$. }
\end{figure}

We generate a random element $u$ from $\Zw$ that has the desired precision using procedure RANDOM-SAMPLE. To achieve a better constant factor in front of $\log(1/\ve)$ for the length of the circuit we randomly chose $u_0=u\p^m$ instead of $u$. It is easy to recover $u$ as $\t=\p^{-1}$ and $u=u_0\t^n$.

In Figure \ref{fig:appr:eps-region}, when $r=1$, the light gray circular segment corresponds to such $u_0$ that $U[u,v,0]$ is within $\ve$ from $R_z(\phi)$. The element $u_0$ is a complex number and, as usual, the $x$-axis of the plot corresponds to the real part and the $y$-axis to the imaginary part. All random samples that we generate belong to the dark gray parallelogram and have the form $a_x+b_x\t+i\sqrt{2-\t}(a_y+b_y\t)$~(note that $i\sqrt{2-\t}$ is equal to $\w+\w^4$ and belongs to $\Zw$). We first randomly choose an imaginary part and then a real part. To find an imaginary part
we randomly choose a real number $y$ and then approximate it with $\sqrt{2-\t}(a_y+b_y\t)$ using the APPROX-REAL~(Figure \ref{fig:appr:algorithm})
procedure. Once we
find $\sqrt{2-\t}(a_y+b_y\t)$, we choose the $x$-coordinate as shown in Figure \ref{fig:appr:eps-region} and approximate it with $a_x+b_x\t$.

\begin{figure}[t]
\begin{algorithmic}[1]
\Require $\theta$ -- angle between $0$ and $\pi/5$, $\ve$ -- precision, $r\ge1$
\Procedure{RANDOM-SAMPLE}{$\theta,\ve,r$}  \Comment See Fig. \ref{fig:appr:eps-region}
\State $C \gets \sqrt{\p/(4r)}$
\State $m \gets \l\lceil \log_{\tau}(C\ve r)\r\rceil+1$
\State $N \gets \l\lceil \p^m \r\rceil$
\State $\ym \gets r\p^m(\sin(\th )-\ve\l(\sqrt{4-\ve ^2}\cos(\th )+\ve\sin(\th)\r)/2)$
\State $\yM \gets r\p^m(\sin(\th )+\ve\l(\sqrt{4-\ve ^2}\cos(\th )-\ve\sin(\th)\r)/2)$
\State $\xM \gets r\p^m((1 - \ve^2/2)\cos(\th) - \ve\sqrt{1-\ve^2/4}\sin(\th))$
\State $x_c \gets \xM - r\ve^2\p^m/(4\cos(\theta)) $
\State Pick random integer $j$ from $[1,N-1]$
\State $y \gets \ym+j(\yM-\ym)/N$
\State $a_y+\t b_y \gets \text{APPROX-REAL}(y/\sqrt{2-\t},m)$ \Comment Fig.~\ref{fig:appr:algorithm}
\State $x \gets x_c - ((a_y+b_y\t)\sqrt{2-\t} - \ym)\tan(\th)$
\State $a_x+\t b_x \gets \text{APPROX-REAL}(x,m)$ \Comment Fig.~\ref{fig:appr:algorithm}
\State \Return $a_x+\t b_x + \sqrt{\t-2}(a_y+\t b_y)$
\EndProcedure
\end{algorithmic}
\caption{\label{fig:appr:sampling} The algorithm for picking a random element of $\Zw$ that is in the dark gray region in Figure \ref{fig:appr:eps-region}. Number of different outputs of the algorithm is in $O(1/\ve)$.
}
\end{figure}

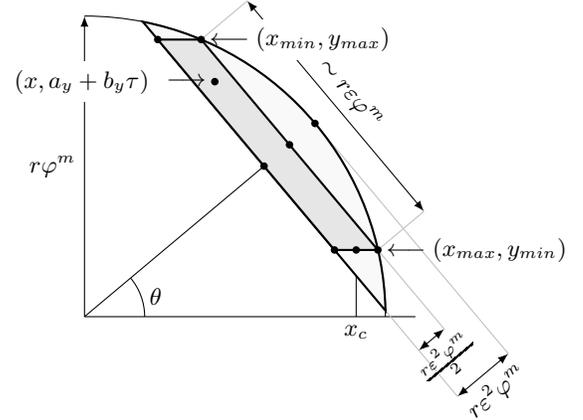
\begin{figure}[t]
\newcommand*{\fTh}{40}%
\newcommand*{\fEps}{0.47}%

\begin{tikzpicture}

\pgfmathsetmacro{\fsc}{4}
\pgfmathsetmacro{\fxmin}{cos(\fTh)-1/2*\fEps*\fEps*cos(\fTh)-1/2*\fEps*sqrt(4-\fEps*\fEps)*sin(\fTh)}%
\pgfmathsetmacro{\fxmax}{cos(\fTh)-1/2*\fEps*\fEps*cos(\fTh)+1/2*\fEps*sqrt(4-\fEps*\fEps)*sin(\fTh)}%
\pgfmathsetmacro{\fymin}{sin(\fTh)-1/2*\fEps*\fEps*sin(\fTh)-1/2*\fEps*sqrt(4-\fEps*\fEps)*cos(\fTh)}%
\pgfmathsetmacro{\fymax}{sin(\fTh)-1/2*\fEps*\fEps*sin(\fTh)+1/2*\fEps*sqrt(4-\fEps*\fEps)*cos(\fTh)}%

\pgfmathsetmacro{\fxminf}{cos(\fTh)-1/2*\fEps*\fEps*2*cos(\fTh)-1/2*\fEps*sqrt(2)*sqrt(4-\fEps*\fEps*2)*sin(\fTh)}%
\pgfmathsetmacro{\fxmaxf}{cos(\fTh)-1/2*\fEps*\fEps*2*cos(\fTh)+1/2*\fEps*sqrt(2)*sqrt(4-\fEps*\fEps*2)*sin(\fTh)}%
\pgfmathsetmacro{\fyminf}{sin(\fTh)-1/2*\fEps*\fEps*2*sin(\fTh)-1/2*\fEps*sqrt(2)*sqrt(4-\fEps*\fEps*2)*cos(\fTh)}%
\pgfmathsetmacro{\fymaxf}{sin(\fTh)-1/2*\fEps*\fEps*2*sin(\fTh)+1/2*\fEps*sqrt(2)*sqrt(4-\fEps*\fEps*2)*cos(\fTh)}%
\pgfmathsetmacro{\fxw}{\fEps*\fEps/2/cos(\fTh)*\fsc}%
\pgfmathsetmacro{\fxp}{cos(\fTh)*\fsc}%
\pgfmathsetmacro{\fyp}{sin(\fTh)*\fsc}%
\pgfmathsetmacro{\fxps}{cos(\fTh)*\fsc*(1-\fEps*\fEps)}%
\pgfmathsetmacro{\fyps}{sin(\fTh)*\fsc*(1-\fEps*\fEps)}%
\pgfmathsetmacro{\fxpsh}{cos(\fTh)*\fsc*(1-\fEps*\fEps/2)}%
\pgfmathsetmacro{\fypsh}{sin(\fTh)*\fsc*(1-\fEps*\fEps/2)}%
\pgfmathsetmacro{\fxth}{cos(\fTh/2)*\fsc*0.2}%
\pgfmathsetmacro{\fyth}{sin(\fTh/2)*\fsc*0.2}%
\pgfmathsetmacro{\fxc}{\fxmax*\fsc-0.5*\fxw}%
\pgfmathsetmacro{\fy}{(\fymin * 0.2 + \fymax* 0.8)*\fsc}%
\pgfmathsetmacro{\fx}{(\fxmax *0.2 + \fxmin*0.8)*\fsc-0.5*\fxw }%
\pgfmathsetmacro{\fyoff}{\fsc*cos(\fTh)}%
\pgfmathsetmacro{\fxoff}{\fsc*sin(\fTh)}%

\coordinate (center) at (0,0);
\coordinate (1) at (\fxp,\fyp);
\coordinate (2) at (\fxps,\fyps);
\coordinate (1a) at (\fxp+\fxoff,\fyp-\fyoff);
\coordinate (2a) at (\fxps+\fxoff,\fyps-\fyoff);
\coordinate (2b) at (\fxps+\fxoff*0.8,\fyps-\fyoff*0.8);
\coordinate (3) at (\fxminf*\fsc,\fymaxf*\fsc);
\coordinate (4) at (\fxmaxf*\fsc,\fyminf*\fsc);
\coordinate (5) at (\fsc,0);
\coordinate (6) at (\fxmin*\fsc,\fymax*\fsc);
\coordinate (7) at (\fxmax*\fsc,\fymin*\fsc);
\coordinate (6a) at (\fxmin*\fsc+\fyoff*0.2,\fymax*\fsc+\fxoff*0.2);
\coordinate (7a) at (\fxmax*\fsc+\fyoff*0.2,\fymin*\fsc+\fxoff*0.2);
\coordinate (8) at (\fxmax*\fsc-\fxw,\fymin*\fsc);
\coordinate (9) at (\fxmin*\fsc-\fxw,\fymax*\fsc);
\coordinate (xc) at (\fxc,\fymin*\fsc);
\coordinate (s) at (\fx,\fy);
\coordinate (10) at (\fxpsh,\fypsh);
\coordinate (10a) at (\fxpsh+\fxoff*0.8,\fypsh-\fyoff*0.8);

\tikzset{>=latex}
\draw (5) arc(0:90:\fsc);
\draw[->] (center) -- (0,\fsc);
\draw (center)--(1);
\draw (center)--(\fsc*1.1,0);
\draw (xc)--(\fxc,0);
\draw[draw=gray!50] (xc)--(s);
\draw[draw=gray!50] (1)--(1a);
\draw[draw=gray!50] (2)--(2a);
\draw[draw=gray!50] (10)--(10a);
\draw[<->] (2a)--(1a) node [midway,sloped,below] {$r\ve^2\p^m$};
\draw[<->] (10a)--(2b) node [midway,sloped,below] {$\frac{r\ve^2\p^m}{2}$};

\draw[<->] (6a)--(7a)  node [midway,sloped,above] {$\sim r\varepsilon\p^m$};;
\node[right,fill=white] at (6) {$\longleftarrow(x_{min},y_{max})$};
\draw[draw=gray!50] (6)--(6a);
\draw[draw=gray!50] (7)--(7a);

\draw(0.2*\fxp,0.2*\fyp) arc(\fTh:0:\fsc*0.2);
\filldraw[draw=black, fill=gray!5 ,thick]
   let   \p1 = ($(3) - (center)$),
       \p2 = ($(4) - (center)$),
       \n0 = {veclen(\x1,\y1)},
       \n1 = {atan(\y1/\x1)},
       \n2 = {atan(\y2/\x2)}
    in
     (3) arc(\n1:\n2:\n0)  -- cycle;

\filldraw[draw=black, fill=gray!20, thick]
     (6) -- (7) -- (8) -- (9)  -- cycle;

\node[right] at (7) {$\longleftarrow(x_{max},y_{min})$};
\node[right] at (\fxth,\fyth) {$\theta$};
\node[below] at (\fxc,0) {$x_c$};
\node[left] at (0,\fsc/2) {$r\p^m$};
\node[fill=white] at (-0.2,\fy) {$\quad(x,a_y+b_y\t)$};
\node[left] at (s) {$\longrightarrow$};

\foreach \dot in {1,2,6,7,8,9,xc,s,10} {
  \fill (\dot) circle(0.05);

}
\end{tikzpicture}

\caption{\label{fig:appr:eps-region}An $\ve$-region and visualization of variables used in RANDOM-SAMPLE procedure (Figure~\ref{fig:appr:sampling}). }
\end{figure}

The problem of approximating real numbers with numbers of the form $a+bz$ for integers $a,b$ and irrational $z$ is well studied. The main tool is continued fractions.
It is also well known that Fibonacci numbers $\{F_n\}$,
\[
F_0=0, F_1=1, F_n=F_{n-1}+F_{n-2}, n \ge 2,
\]
 are closely related to the continued fraction of the Golden section $\p$ and its inverse $\t$. The correctness of procedure APPROX-REAL~(Figure \ref{fig:appr:algorithm}) is based on the following very well-known result connecting $\t$ and the Fibonacci numbers; we state it in a convenient form and provide the proof for completeness.
\begin{prop} \label{appr:prop:cfr}
For any integer $n$
\[
 \l|\t-\frac{F_n}{F_{n+1}}\r| \le \frac{\t^n}{F_{n+1}}
\]
and $F_n \ge (\p^n-1)/\sqrt{5}$.
\end{prop}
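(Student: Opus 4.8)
The plan is to derive both inequalities from the Binet closed form for the Fibonacci numbers together with a few elementary identities in $\Zt$. Recall that $\t^2 = 1-\t$ implies $\p = \t+1 = \t^{-1}$ and $\p^2 = \p+1$, so $\p$ and $-\t$ are precisely the two roots of $x^2 = x+1$; moreover $\p - (-\t) = 2\t+1 = \sqrt{5}$. A short induction on $n$ (base cases $n=0,1$; inductive step using $\p^2 = \p+1$ and $(-\t)^2 = (-\t)+1$, so that the right-hand side below obeys the Fibonacci recursion) then gives, for all $n \ge 0$,
\[
F_n = \frac{\p^n - (-\t)^n}{\sqrt{5}}.
\]

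For the first inequality I would substitute this into $\t F_{n+1} - F_n$. Since $\t\p = 1$, the two $\p$-power terms cancel, leaving $\t F_{n+1} - F_n = (-1)^n \t^n (\t^2+1)/\sqrt{5}$. Now $\t^2 + 1 = 2-\t$, and the identity $\sqrt{5}\,\t = (2\t+1)\t = 2\t^2 + \t = 2(1-\t) + \t = 2-\t$ shows $(\t^2+1)/\sqrt{5} = \t$, hence $\t F_{n+1} - F_n = (-1)^n \t^{n+1}$. Dividing by $F_{n+1} > 0$ and using $0 < \t < 1$ to bound $\t^{n+1} \le \t^n$ yields
\[
\Bigl| \t - \frac{F_n}{F_{n+1}} \Bigr| = \frac{\t^{n+1}}{F_{n+1}} \le \frac{\t^n}{F_{n+1}} ,
\]
which is the claimed estimate (in fact a slightly stronger one).

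For the lower bound, the same closed form gives $F_n - (\p^n - 1)/\sqrt{5} = (1 - (-1)^n\t^n)/\sqrt{5}$, which is nonnegative because $(-1)^n\t^n \le \t^n \le 1$ for $n \ge 0$. I do not anticipate a genuine obstacle here; the only points needing care are spotting the identity $\sqrt{5}\,\t = 2-\t$ that collapses the error term to the clean shape $(-1)^n\t^{n+1}$ (absent that observation, the stated bound still follows from the crude estimate $\t^2+1 \le \sqrt{5}$, equivalently $\t \ge 1/3$), and restricting attention to $n \ge 0$ so that $F_{n+1} \neq 0$ and $\t^n \le 1$.
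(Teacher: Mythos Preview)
Your proof is correct and actually yields the sharper exact identity $\bigl|\t - F_n/F_{n+1}\bigr| = \t^{n+1}/F_{n+1}$. The paper takes a different route: it introduces the continued-fraction iteration $f_1(x)=x$, $f_n(x)=1/(1+f_{n-1}(x))$, observes that $f_n(\t)=\t$ and $f_n(1)=F_n/F_{n+1}$, and then proves the bound inductively via the contraction identity $|f_{n+1}(\t)-f_{n+1}(1)| = f_{n+1}(\t)f_{n+1}(1)\,|f_n(\t)-f_n(1)|$, using $f_{n+1}(\t)=\t$; the Binet formula is invoked only for the second inequality. Your approach is the more direct and elementary one---pure algebra with the closed form, and the identity $\sqrt{5}\,\t = 2-\t$ that collapses the error term is a nice touch (equivalently, this is the standard relation $(-\t)^{n+1}=F_n-F_{n+1}\t$ coming from $-\t$ being a root of $x^2=x+1$). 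The paper's iterative argument, by contrast, makes explicit the continued-fraction mechanism underlying the convergents $F_n/F_{n+1}$, which fits the surrounding narrative but is slightly less economical for this particular statement.
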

\begin{proof}
First we define the family of functions $\{f_n\}$ such that $f_1(x)=x$ and $f_n(x)=1/(1+f_{n-1}(x))$ for $n\ge2$. It is not difficult to check that $f_n(\t)=\t$. It can be shown by induction on $n$ that $f_n(1)$ is equal to $F_n/F_{n+1}$. Therefore, to prove the first part of the proposition we need to show that
\[
\l|f_n(\t) - f_n(1) \r| \le \t^n / F_{n+1}.
\]
We proceed by induction. The statement is true for $n$ equal to $1$. Using the definition of $f_n$ it is not difficult to show
\[
\l|f_{n+1}(\t) - f_{n+1}(1) \r| = f_{n+1}(\t)f_{n+1}(1)\l|f_n(\t) - f_n(1) \r|.
\]
Using $f_{n+1}=\t$ and the inequality for $\l|f_n(\t) - f_n(1) \r|$ we complete the proof of the first part.

To show the second part it is enough to use the well-known closed form expression for Fibonacci numbers $F_n = (\p^n - (-\t)^n)/\sqrt{5}$.

\end{proof}
\begin{figure}[t]
\begin{algorithmic}[1]
\Require $x$ -- real number, $n$ -- defines precision as $\t^{n-1}(1-\t^n)$
\Procedure{APPROX-REAL}{$x$,$n$}
\State $p \gets F_n, q \gets F_{n+1}$ \Comment $F_n$ -- Fibonacci numbers
\State $u \gets {(-1)}^{n+1} F_n, v \gets (-1)^{n} F_{n-1}$ \Comment $up+vq=1$
\State $c \gets \lfloor xq \rceil$
\State $a \gets cv + p \lfloor cu/q \rceil$
\State $b \gets cu - q \lfloor cu/q \rceil $
\State \Return $a+\t b$
\EndProcedure
\Ensure $a+b\t\text{ s.t. }|x-(a+b\t)| \le \t^{n-1}(1-\t^n), |b|\le\p^n$
\end{algorithmic}
\caption{\label{fig:appr:algorithm}The algorithm for finding integers $a,b$ such that $a+b\t$ approximates the real number $x$ with precision $\t^{n-1}(1-\t^n)$ }
\end{figure}
 At a high level, in procedure APPROX-REAL we approximate $\t$ with a rational number $p/q$ and find the approximation of $x$ by a rational of the form $a+bp/q$. To get good resulting precision we need to ensure that $b(\t-p/q)$ is small, therefore we pick $b$ in such a way that $|b|\le q/2$.  The details are as follows.
\begin{lem}\label{lem:appr:real}
Procedure APPROX-REAL (Figure~\ref{fig:appr:algorithm}) outputs integers $a,b$ such that $|x-(a+b\t)|\le\t^{n-1}(1-\t^n)$, $|b|\le\p^n$ and terminates in time polynomial in $n$.
\end{lem}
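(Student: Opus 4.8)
The plan is to unfold the definitions in APPROX-REAL (Figure~\ref{fig:appr:algorithm}) and reduce the whole statement to one triangle inequality plus a short Fibonacci estimate. The first step is the B\'ezout identity $up+vq=1$ for $p=F_n$, $q=F_{n+1}$, $u=(-1)^{n+1}F_n$, $v=(-1)^{n}F_{n-1}$: this is exactly Cassini's identity $F_{n-1}F_{n+1}-F_n^2=(-1)^n$, which also drops out of Binet's formula $F_k=(\p^k-(-\t)^k)/\sqrt{5}$ already used in Proposition~\ref{appr:prop:cfr}. Writing $c=\lfloor xq\rceil$ and $w=\lfloor cu/q\rceil$, so that the outputs are $a=cv+pw$ and $b=cu-qw$, a direct substitution using $up+vq=1$ gives the key algebraic identity
\[
 a+b\,\frac{p}{q}=cv+pw+(cu-qw)\frac{p}{q}=\frac{c\,(vq+up)}{q}=\frac{c}{q}.
\]
In other words $a+b\t$ is the rational approximant $c/q$ of $x$, with $\t$ replaced by $p/q=F_n/F_{n+1}$, perturbed by $b\,(\t-p/q)$; the role of the rounding in $w$ is only to keep $|b|$ small.

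Second, I would bound the two error contributions separately. The rounding in $c$ gives $|x-c/q|=|xq-c|/q\le 1/(2q)=1/(2F_{n+1})$, and $b=q\,(cu/q-w)$ with $|cu/q-w|\le 1/2$ gives $|b|\le q/2=F_{n+1}/2$. Hence
\[
 |x-(a+b\t)|\le|x-c/q|+|b|\,\Bigl|\t-\tfrac{F_n}{F_{n+1}}\Bigr|\le\frac{1}{2F_{n+1}}+\frac{F_{n+1}}{2}\Bigl|\t-\tfrac{F_n}{F_{n+1}}\Bigr|.
\]
Plugging in the continued-fraction estimate — concretely the sharp form $|\t-F_n/F_{n+1}|=\t^{n+1}/F_{n+1}$, which follows from Binet together with the identity $2-\t=\sqrt{5}\,\t$ and strengthens Proposition~\ref{appr:prop:cfr} — turns the right-hand side into $\tfrac{1}{2F_{n+1}}+\tfrac{\t^{n+1}}{2}$, and the exact value $F_{n+1}=(\p^{n+1}-(-\t)^{n+1})/\sqrt{5}$ then reduces the precision claim to an elementary inequality in $\t$ and $n$ that one checks to be at most $\t^{n-1}(1-\t^n)$. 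The companion bound $|b|\le F_{n+1}/2\le\p^n$ is immediate from Binet, since it amounts to $\p+\t^{2n+1}\le 2\sqrt{5}$.

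Finally, for the running time: $F_{n-1},F_n,F_{n+1}$ are produced by $O(n)$ additions of integers of $O(n)$ bits, and the remainder of APPROX-REAL is a constant number of multiplications, integer divisions and roundings on numbers of bit length $O(n)$ (plus the bit length of the input $x$), so the procedure runs in time polynomial in $n$.

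The one non-routine point I expect is the precision inequality in the second step: substituting the bound of Proposition~\ref{appr:prop:cfr} verbatim is slightly too lossy, so the argument has to retain the exact continued-fraction discrepancy $\t^{n+1}/F_{n+1}$ and the exact Binet value of $F_{n+1}$, and then balance $1/(2F_{n+1})$ against $\t^{n+1}/2$ tightly enough to land below $\t^{n-1}(1-\t^n)$; everything else is bookkeeping.
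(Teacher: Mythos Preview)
Your argument is essentially the paper's own: both use Cassini's identity $F_{n+1}F_{n-1}-F_n^2=(-1)^n$ to get $up+vq=1$, rewrite $a+b\,p/q=c/q$ (equivalently $c=aq+bp$), and then bound $|x-(a+b\t)|$ by the triangle inequality as $|x-c/q|+|b|\,|\t-p/q|$ with $|x-c/q|\le 1/(2q)$ and $|b|\le q/2$. Your observation that the literal bound of Proposition~\ref{appr:prop:cfr} is slightly too lossy and that one should keep the exact discrepancy $|\t-F_n/F_{n+1}|=\t^{n+1}/F_{n+1}$ and the exact Binet value of $F_{n+1}$ is a genuine refinement over the paper's more casual bookkeeping, but the structure and ideas are identical.
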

\begin{proof}
First, identity $F_{n+1}F_{n-1}-F_n^2= (-1)^n$ for Fibonacci numbers implies $uv+pq=1$. Next, by choice of $c$ we have $|xq-c|\le1/2$, therefore $|x-c/q|\le1/2q$. By Proposition~\ref{appr:prop:cfr}, $1/2q$ is less than $\t^n\sqrt{5}(1-\t^n)/2$; it remains to show that $c/q$ is within distance $\t^n/2$ from $a+b\t$ by the triangle inequality. By choice of $a,b$ we have $c=aq+bp$ and
\[
\l|c/q-(a+b\t)\r|= |b|\l|\t-p/q\r|.
\]
Using Proposition~\ref{appr:prop:cfr}, equality $q=F_{n+1}$ and inequality $|b|\le q/2$ we conclude that $\l|c/q-(a+b\t)\r|\le\t^n/2$.

The complexity of computing $n^{th}$ Fibonacci number is polynomial in $n$. Assuming that the number of bits used to represent $x$ is proportional to $n$ all arithmetic operations also have complexity polynomial in $n$.
\end{proof}
There are two main details of the RANDOM-SAMPLE procedure we need to clarify. The first is that the result is indeed inside the dark gray parallelogram on Figure~\ref{fig:appr:eps-region}. This is achieved by picking real values $x,y$ far enough from the border of the parallelogram and then choosing the precision parameter $m$ for APPROX-REAL in such a way that $a_x + b_x\t$(close to $x$) and $a_y+b_y\t$(close to $y/\sqrt{2-t}$) stay inside the parallelogram.

The second important detail is the size of resulting coefficients $a_x,b_x,a_y,b_y$. It is closely related to the number of gates in the resulting circuit. Therefore it is important to establish an upper bound on coefficients size. The following lemma provides a rigorous summary:
\begin{lem} \label{lem:appr:sampling}
When the third input $r\ge1$, procedure RANDOM-SAMPLE has the following properties:
\begin{itemize}
 \item there are  $O(1/\ve)$ different outputs and each of them occurs with the same probability,
 \item the procedure outputs an element $u_0$ of $\Zw$ from the dark gray parallelogram $P$ in Figure~\ref{fig:appr:eps-region},
 \item the Gauss complexity measure of $u_0$ is in $O(1/\ve)$.
\end{itemize}
\end{lem}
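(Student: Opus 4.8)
The only randomness in RANDOM-SAMPLE (Figure~\ref{fig:appr:sampling}) is the single draw of the integer $j\in\{1,\dots,N-1\}$; everything after it is deterministic, so the returned value is a function $u_0(j)$ of $j$. I would proceed in four steps: (i) locate $N$; (ii) show $u_0(j)\in\Zw\cap P$ for every admissible $j$; (iii) show $j\mapsto u_0(j)$ is injective; (iv) bound the coordinates of $u_0$ in $\{1,\w,\w^2,\w^3\}$. For (i): from $m=\l\lceil\log_\t(C\ve r)\r\rceil+1$ and $C=\sqrt{\p/(4r)}$ one reads off $\t^{m-1}\le C\ve r<\t^{m-2}$, whence $\p^m=\Theta\!\l(1/(C\ve r)\r)=\Theta\!\l(1/(\ve\sqrt r)\r)$ and $N=\l\lceil\p^m\r\rceil=\Theta(1/\ve)$ for $r\ge1$. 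Since $j$ ranges over $N-1$ values there are at most $O(1/\ve)$ distinct outputs, each $j$ occurring with probability $1/(N-1)$; granting (iii), the outputs are pairwise distinct, so there are exactly $N-1=\Theta(1/\ve)$ of them, each of probability $1/(N-1)$.

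For (ii), membership in $\Zw$ is immediate: the returned value is a $\Z$-combination of $1,\t,\sqrt{\t-2},\t\sqrt{\t-2}$, and $\t=\w^2-\w^3\in\Zw$ while $\sqrt{\t-2}=\w+\w^4\in\Zw$ by~(\ref{eq:thetaSquare}); since $\w+\w^4=i\sqrt{2-\t}$, the output $u_0=(a_x+\t b_x)+(\w+\w^4)(a_y+\t b_y)$ has real part $a_x+\t b_x$ and imaginary part $\sqrt{2-\t}\,(a_y+\t b_y)$. For $u_0\in P$: the chosen $y=\ym+j(\yM-\ym)/N$ lies at distance $\ge(\yM-\ym)/N$ from each end of $[\ym,\yM]$, and $x$ is computed from the \emph{already rounded} imaginary part $\sqrt{2-\t}(a_y+\t b_y)$, hence sits exactly on the mid-line of $P$ at the correct height, i.e.\ at horizontal distance equal to half the width $r\p^m\ve^2/(2\cos\th)$ of $P$ from each slanted side. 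By Lemma~\ref{lem:appr:real} the two APPROX-REAL calls displace the imaginary part of $u_0$ by at most $\sqrt{2-\t}\,\t^{m-1}(1-\t^m)$ and the real part by at most $\t^{m-1}(1-\t^m)$. The value $C=\sqrt{\p/(4r)}$ is chosen precisely so that both displacements stay strictly below the corresponding margins: feeding in $\yM-\ym=r\p^m\ve\sqrt{4-\ve^2}\cos\th$, $N\le\p^m+1$, $\t^{m-1}\le C\ve r$, the width of $P$, and $\cos\th\ge\cos(\pi/5)$, $\tan\th\le\tan(\pi/5)$ for $\th\in[0,\pi/5]$, each requirement becomes an explicit numerical inequality valid throughout the relevant range, so $u_0$ never leaves $P$.

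For (iii): by Figure~\ref{fig:appr:algorithm}, the output of APPROX-REAL depends on its real argument $z$ only through $c=\l\lfloor z F_{m+1}\r\rceil$, and distinct $c$ produce distinct $(a,b)$ (immediate from the formulas there, using $up+vq=1$). Hence it suffices to check that consecutive $j$ move $(y/\sqrt{2-\t})F_{m+1}$ by at least $1$, i.e.\ $(\yM-\ym)F_{m+1}\ge N\sqrt{2-\t}$; substituting $\yM-\ym=r\p^m\ve\sqrt{4-\ve^2}\cos\th$, $N\le\p^m+1$, $F_{m+1}\ge(\p^{m+1}-1)/\sqrt5$ from Proposition~\ref{appr:prop:cfr}, and the lower bound on $\p^m$ from step (i), this is once again a calibrated numerical inequality; when it holds the imaginary parts of $u_0(j)$ and $u_0(j')$ differ for $j\ne j'$, so $j\mapsto u_0(j)$ is injective. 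For (iv), expand
\[
 u_0=(a_x+\t b_x)+(\w+\w^4)(a_y+\t b_y)
\]
in $\{1,\w,\w^2,\w^3\}$ via $\t=\w^2-\w^3$, $\w^4=-1+\w-\w^2+\w^3$ and $\w^5=-1$: each coordinate of $u_0$ is a fixed $\Z$-linear combination of $a_x,b_x,a_y,b_y$ with coefficients in $\{-2,\dots,2\}$. By Lemma~\ref{lem:appr:real}, $|b_x|,|b_y|\le\p^m$, and since the reals passed to APPROX-REAL, namely $x$ and $y/\sqrt{2-\t}$, have magnitude $O(r\p^m)$ (all of $\ym,\yM,\xM,x_c$ are $O(r\p^m)$ and $\tan\th$ is bounded on $[0,\pi/5]$), also $|a_x|,|a_y|=O(r\p^m)$. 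Thus every coordinate of $u_0$ is $O(r\p^m)$, and the estimate $\G{a+b\w+c\w^2+d\w^3}\le\tfrac{5}{2}(a^2+b^2+c^2+d^2)$ of Proposition~\ref{prop:exact:alternatives} yields $\G{u_0}=O\!\l((r\p^m)^2\r)$, which is polynomial in $1/\ve$ because $r\p^m=\Theta(\sqrt r/\ve)$; this is exactly the quantity that controls the depth of the circuit produced by the exact synthesis step.

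The genuinely delicate point --- and the main obstacle --- is the calibration used in (ii) and (iii): verifying that the APPROX-REAL error bounds of Lemma~\ref{lem:appr:real} stay below the geometric margins of $P$ and below the spacing of the $y$-grid, \emph{uniformly} in $\th\in[0,\pi/5]$ and $r\ge1$. This is precisely why the constant has the value $C=\sqrt{\p/(4r)}$, and it is the one place where careful interval arithmetic, rather than asymptotics, cannot be avoided.
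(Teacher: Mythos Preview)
Your plan follows the paper's proof almost step for step: the same $N=\Theta(1/\ve)$ count, the same use of Lemma~\ref{lem:appr:real} to keep the rounded point inside $P$, and the same coefficient bound $|a_x|,|b_x|,|a_y|,|b_y|=O(\p^m)$ fed into the quadratic estimate of Proposition~\ref{prop:exact:alternatives}. Your injectivity argument is phrased slightly differently---you argue that consecutive $j$ give distinct values of $c=\lfloor (y/\sqrt{2-\t})F_{m+1}\rceil$ and then that distinct $c$ force distinct $(a_y,b_y)$---whereas the paper simply shows $|\sqrt{2-\t}(a_y+b_y\t)-y|\le(\yM-\ym)/2N$, so that the imaginary parts inherit the $(\yM-\ym)/N$ separation of the $y_j$. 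The paper's version is a little cleaner (it simultaneously gives distinctness and the containment $\mathrm{Im}(u_0)\in[\ym,\yM]$), but yours is equivalent once you note $aq+bp=c$ from the APPROX-REAL formulas.

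One point worth flagging: the third bullet asserts $\G{u_0}=O(1/\ve)$, but the route through Proposition~\ref{prop:exact:alternatives} that both you and the paper take only yields $\G{u_0}\le\tfrac{5}{2}\sum\alpha_k^2=O(\p^{2m})=O(1/\ve^2)$. You are candid about this, writing ``polynomial in $1/\ve$''; the paper's last sentence simply asserts $O(1/\ve)$ without justification beyond the quadratic bound, which appears to be a slip. Since $|u_0|\sim r\p^m$ already forces $\G{u_0}\ge|u_0|^2=\Omega(\p^{2m})$, the $O(1/\ve)$ claim cannot hold as stated. For the downstream use in Theorem~\ref{thm:appr:main} this is harmless---circuit length is $\Theta(\log\G{U})$, and $\log(1/\ve^2)=2\log(1/\ve)$---so your observation that ``this is exactly the quantity that controls the depth'' is the right way to close the argument.
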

\begin{proof}
By construction, the algorithm produces $N-1$ outputs with equal probability. It is not difficult to check that $N$ is in $O(1/\ve)$. We first show that the outputs are all distinct and their $y$ coordinate is in $[\ym,\yM]$. This follows from an estimate
\[
 \l|y-(a_y+b_y\t)\r|\sqrt{2-\t}\le (\yM - \ym) / 2N
\]
because each randomly generated $y$ is at least distance $(\yM - \ym) / N$ from any other randomly generated $y$ and also $\ym,\yM$. To show the estimate we use the result of Lemma~\ref{lem:appr:real} and check that
\[
  \t^{m-1}(1-\t^m)\sqrt{2-\t} \le (\yM - \ym) / 2N
\]
which is straightforward, but tedious. If we concentrate only on terms that are first order in $\ve$ we get:
\begin{equation}\label{eq:appr:estimate}
  \t^{m-1}(1-\t^m) \lesssim C\ve r, (\yM - \ym) / 2N \gtrsim \ve\cos(\theta)r.
\end{equation}
The constraint on $\th$ gives $\cos(\theta) \ge \p/2$. From $C\sqrt{2-\t}<\p/2$ we conclude that the inequality is true for the terms that are first order in $\ve$.

To show that the procedure output belongs to the parallelogram $P$, it is sufficient to check that $a_x+b_x\t$ is within distance $\p^m r\ve^2/4$ (half of the parallelogram height) from $x_c - (a_y+b_y\t - \ym)\sin(\th)$. Again using Lemma~\ref{lem:appr:real}, it is sufficient to show that
\[
 \t^{m-1}(1-\t^m) \le \p^m r\ve^2/4.
\]
We again analyze the expression up to the first order terms in $\ve$. We note that $\p^m r \ve^2/4 \eqsim \ve/(4C\t)$; combining it with the inequality above, using (\ref{eq:appr:estimate}) and $C=\sqrt{\p/(4r)}$, we conclude that all outputs of the algorithm are inside parallelogram~$P$.

To show the last property, we note that by Lemma~\ref{lem:appr:real} both $|b_x|,|b_y|$ are bounded by $\p^m$. The same is true for $|a_x|,|a_y|.$ Indeed, $|x|,|y|$ are both of order $\p^m$ and
\[
 |a_y| \lesssim |y/\sqrt{2-\t}-b_x\t|+C\ve r,\,|x_y| \lesssim |x-b_x\t| + C\ve r.
\]
This implies that if we write $u_0$ as $\sum_{k}\alpha_k\w^k$ each integer $\alpha_k$ will be of order $\p^m$ which is the same as $O(1/\ve)$. Using the upper bound on the Gauss complexity measure $\G{u_0}$ in terms of $\alpha_k$ from Proposition \ref{prop:exact:alternatives} we conclude that $\G{u_0}$ is in $O(1/\ve)$.
\end{proof}

The technical tools that we have developed so far are sufficient to verify that our approximation algorithm achieves the required precision and produces circuits of length $O(1/\log(1/\ve))$. The remaining part is to show that on average the algorithm requires $O(\log^c(1/\ve))$ steps. It relies on the following conjecture, similar in nature to the Prime Number Theorem:
\begin{cnj} \label{cnj:appr:distr}
Let $\pi(M)$ be the number of elements $\xi$ from $\Zt$ such that $N_\t(\xi)$ is a prime representable as $5n+1$ and less than $M$, then $\pi(M)$ is in $\Theta(M/\log(M))$.
\end{cnj}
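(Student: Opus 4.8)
The plan is to deduce the estimate from classical analytic number theory for the real quadratic field $\Q(\sqrt5)$, whose ring of integers is exactly $\Zt$ (Appendix~\ref{sec:exact:math}). First one should pin down what $\pi(M)$ counts: since $\Zt$ is a PID whose unit group is $\{\pm\t^{k}:k\in\Z\}$ (Lemma~\ref{UNITS:LEMMA}), an element $\xi$ with $N_{\t}(\xi)$ a rational prime is a prime element, and two such elements generating the same ideal are associate; so I read $\pi(M)$ as the number of prime ideals $\mathfrak p\subset\Zt$ whose norm is a rational prime $p<M$ with $p\equiv1\m5$ (without this normalization the set of \emph{elements} is infinite).

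Next I would invoke the decomposition law in $\Q(\sqrt5)/\Q$. For $p\neq5$, the prime $p$ splits as a product of two distinct prime ideals, each of norm $p$, precisely when $\left(\tfrac{5}{p}\right)=1$, i.e. when $p\equiv\pm1\m5$; it is inert (so there is no ideal of norm $p$) when $p\equiv\pm2\m5$; and $p=5$ ramifies but $5\not\equiv1\m5$. Hence each rational prime $p\equiv1\m5$ contributes exactly two prime ideals to $\pi(M)$, and no other prime contributes anything, so $\pi(M)=2\,\#\{p<M\text{ prime}:p\equiv1\m5\}$.

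Finally I would apply the prime number theorem for arithmetic progressions (Dirichlet, made quantitative by de la Vall\'ee Poussin): the number of primes $p<M$ with $p\equiv1\m5$ is asymptotic to $\tfrac14\,M/\log M$, whence $\pi(M)\sim M/(2\log M)\in\Theta(M/\log M)$. Equivalently one can quote the Landau prime ideal theorem for $\Q(\sqrt5)$ together with the Chebotarev density theorem for $\Q(\zeta_{5})/\Q$ to separate the class $p\equiv1$ from $p\equiv4\m5$. Either way the argument is unconditional, and I would present it as the proof of the statement as written.

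The reason it is nonetheless prudent to keep a hypothesis on record --- and where I expect the real difficulty to sit for the application --- is the step from this global count to what the runtime analysis actually uses: the distribution of $N_{\t}(\xi)$ over the sparse family $\xi=\p(\p^{2m}-|u_0|^{2})$ produced by RANDOM-SAMPLE. One needs \emph{these particular} values to hit primes $\equiv1\m5$ with frequency $\Theta(1/\log(1/\ve))$; equivalently, one needs the values of a fixed integral polynomial --- namely $N_{\t}$ composed with the quadratic parametrization of $u_0$ --- at the admissible integer sample points to be prime at the rate predicted by the Bateman--Horn / Hardy--Littlewood heuristics. Unconditional results on primes represented by polynomials are essentially unavailable (one cannot even show that $n^{2}+1$ is prime infinitely often), so here I would fall back on heuristic consistency with those conjectures and on the numerical evidence reported below; isolating this genericity-of-the-samples step is, in my view, the crux.
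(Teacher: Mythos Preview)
The paper does not prove Conjecture~\ref{cnj:appr:distr}; it is stated as a conjecture and supported only by the numerical evidence in Figure~\ref{fig:trials}. So there is no paper proof to compare against.

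Your analysis is correct and goes further than the paper. You rightly observe that, as literally worded, the count is infinite (the unit group $\{\pm\t^{k}\}$ is infinite), and that under the natural normalization---counting prime ideals of $\Zt$, equivalently elements up to association---the statement follows unconditionally from the splitting law in $\Q(\sqrt5)$ together with the prime number theorem for arithmetic progressions. Your computation $\pi(M)=2\,\#\{p<M\text{ prime}:p\equiv1\m5\}\sim M/(2\log M)$ is correct.

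You also put your finger on the genuine gap: the proof of Theorem~\ref{thm:appr:main} does not actually use a global ideal count. It assumes that the specific values $\xi=\p(\p^{2m}-|u_0|^{2})$ output by RANDOM-SAMPLE have $N_\t(\xi)$ prime and $\equiv1\m5$ with probability $\Omega(1/\log(1/\ve))$. These values lie in a thin polynomial family, and the required density statement is of Bateman--Horn type, for which no unconditional proof is known. So while the conjecture \emph{as stated} (once disambiguated) is a theorem, the conjecture \emph{as used} remains heuristic; your final paragraph correctly isolates the locus of the assumption. If you were revising the paper, the cleanest fix would be to replace Conjecture~\ref{cnj:appr:distr} by a hypothesis about the sample distribution directly, and to remark separately that the ideal-counting version is an unconditional consequence of Dirichlet's theorem.
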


The conjecture defines the frequency of easy instances of the norm equation during the sampling process. Finally we prove the main theorem.

\begin{thm} \label{thm:appr:main}
Approximation algorithm~(Figure~\ref{fig:rzappr:algorithm}) outputs a $\FT$-circuit $C$ of length $O(\log(1/\ve))$ such that $d(C,R_z(\phi))\le\ve.$ On average the algorithm runtime is in $O(\log^c(1/\ve))$ if Conjecture \ref{cnj:appr:distr} is true.
\end{thm}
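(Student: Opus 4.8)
The plan is to prove the three assertions of the theorem in order — correctness at the stated precision, the $O(\log(1/\ve))$ circuit length, and the $O(\log^{c}(1/\ve))$ expected running time — with Conjecture~\ref{cnj:appr:distr} entering only in the last. Correctness and precision amount to assembling the results of Sections~\ref{sec:exact}--\ref{sec:appr}. When the \textbf{while} loop of Figure~\ref{fig:rzappr:algorithm} halts, it returns $u=\w^{k}\t^{m}u_{0}$ and $v=\t^{m}x$ with $x\in\Zw$, $|x|^{2}=\xi=\p(\p^{2m}-|u_0|^{2})$ by correctness of \textsc{solve-norm-equation} on the instance just certified \textsc{easy-solvable}. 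Using $\t\p=1$, a one-line computation gives $|u|^{2}+\t|v|^{2}=\t^{2m}|u_0|^{2}+\t^{2m}\xi=1$; and $\xi>0$ because $u_{0}\in P$ lies in the (open) disc of radius $\p^{m}$ — the parallelogram is bounded away from the arc of Figure~\ref{fig:appr:eps-region} by the construction of $x_c,\ym,\yM$ — so $U[u,v,0]$ is a bona fide exact unitary. For the distance, substituting $u=\w^{k}\t^{m}u_{0}$ and $\th=-\phi/2-\pi k/5$ into the formula $d(R_z(\phi),U[u,v,0])=\sqrt{1-|\re(ue^{i\phi/2})|}$ of Section~\ref{sec:appr} yields $\re(ue^{i\phi/2})=\t^{m}\re(u_0e^{-i\th})$; by the middle bullet of Lemma~\ref{lem:appr:sampling}, $P$ lies inside the light-grey $\ve$-region of Figure~\ref{fig:appr:eps-region}, which is exactly the set of $u_0$ with $\t^{m}\re(u_0e^{-i\th})\ge 1-\ve^{2}$, hence $d\le\ve$. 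Finally, Theorem~\ref{thm:exact:correctness}(1) guarantees that \textsc{exact-synthesize} returns a circuit implementing $U[u,v,0]$.

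For the length, I would verify that every algebraic quantity feeding \textsc{exact-synthesize} is polynomially bounded in $1/\ve$. By (the proof of) Lemma~\ref{lem:appr:sampling} the $\w$-basis coefficients of $u_0$ have magnitude $O(\p^{m})=O(1/\ve)$; since $\t^{m}\in\Zw$ likewise has $\w$-coefficients of magnitude $O(\p^{m})$, and $v=\t^{m}x$ with $\xi$ bounded and hence $x$ polynomially bounded, the coefficients of $u$ and $v$ are all $\mathrm{poly}(1/\ve)$. The upper bound in Proposition~\ref{prop:exact:alternatives} then gives $\G{U[u,v,0]}=\G{u}=\mathrm{poly}(1/\ve)$, and Theorem~\ref{thm:exact:correctness}(2)--(3) yields a circuit of length $\Theta(\log\G{u})=O(\log(1/\ve))$.

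For the running time I would split the cost into (a) one iteration of the loop, (b) the number of iterations, and (c) the post-processing by \textsc{solve-norm-equation} and \textsc{exact-synthesize}. Parts (a) and (c) are routine: $m=O(\log(1/\ve))$, so every quantity handled (Fibonacci numbers included, by Lemma~\ref{lem:appr:real}) is an integer of bit size $O(\log(1/\ve))$, whence \textsc{random-sample}, \textsc{easy-factor} and \textsc{easy-solvable} each run in $\mathrm{poly}(\log(1/\ve))$ time (the latter two performing a bounded number of gcd's, square tests, norm computations and one probabilistic primality test); \textsc{solve-norm-equation} on an \textsc{easy-solvable} instance is probabilistically $\mathrm{poly}(\log N_{\t}(\xi))=\mathrm{poly}(\log(1/\ve))$ by the quoted analyses of Tonelli--Shanks, \textsc{binary-gcd} and \textsc{unit-dlog}; and \textsc{exact-synthesize} uses $O(\log\G{u})=O(\log(1/\ve))$ arithmetic operations on $\mathrm{poly}(\log(1/\ve))$-bit numbers. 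So the only thing left is to bound the expected value of (b).

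The number of iterations is geometric with success probability $p=\#\{\text{good }u_0\}/(N-1)$, where by Lemma~\ref{lem:appr:sampling} there are $N-1=\Theta(1/\ve)$ pairwise distinct, equiprobable outputs $u_0$ and ``good'' means the induced $\xi$ passes \textsc{easy-solvable}. I would use the sufficient condition that $u_0$ is good whenever $p_\xi:=N_{\t}(\xi)$ is a rational prime with $p_\xi\equiv 1\bmod 5$: then the content of $\xi$ (the gcd of its $\Zt$-coordinates) is $1$, \textsc{easy-factor} returns essentially the single factor $(\xi,1)$, and \textsc{easy-solvable} accepts it. Next, for $u_0\in P$ one has $\xi=\p(\p^{2m}-|u_0|^{2})=O(1)$ while $\xi^{\bullet}=\t(|u_0^{\bullet}|^{2}-\t^{2m})=\mathrm{poly}(1/\ve)$, so all sampled $p_\xi=\xi\xi^{\bullet}$ lie in an interval $[0,M]$ with $M=\mathrm{poly}(1/\ve)$. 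The decisive step is then to conclude that an $\Omega(1/\log M)=\Omega(1/\log(1/\ve))$ fraction of the $\Theta(1/\ve)$ sampled $\xi$'s have prime norm $\equiv 1\bmod 5$; this is precisely what Conjecture~\ref{cnj:appr:distr} asserts about the family of $\xi$ enumerated by \textsc{random-sample} — such elements having density $\Theta(1/\log M)$ among all $\xi\in\Zt$ of norm below $M$. Granting it, $p=\Omega(1/\log(1/\ve))$, the expected number of iterations is $O(\log(1/\ve))$, and combining with (a) and (c) the total expected running time is $O(\log^{c}(1/\ve))$. This last step — transferring the density statement from ``elements of $\Zt$ of bounded norm'' to the deterministic family that \textsc{random-sample} actually emits — is the main obstacle: it is not implied by anything proved earlier, it is exactly the unproven input of Conjecture~\ref{cnj:appr:distr} (and what the paper's numerics support), and alongside it one must still make the routine checks that \textsc{easy-factor}'s special treatment of the content and of the exceptional norm-$5$ prime $2-\t$ does not spoil things when $p_\xi$ is a good prime, and that degenerate samples ($\xi=0$, or $\xi^{\bullet}\le 0$ making the norm equation unsolvable) are rejected by \textsc{easy-solvable}/\textsc{solve-norm-equation}, so that the loop always exits with a valid pair $(u,v)$.
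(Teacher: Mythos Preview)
Your proposal is correct and follows the same three-part structure as the paper's proof (precision via Lemma~\ref{lem:appr:sampling}, length via $\G{u}=\mathrm{poly}(1/\ve)$ and Theorem~\ref{thm:exact:correctness}, expected number of iterations via Conjecture~\ref{cnj:appr:distr}).

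The one place where the paper is sharper is the sign of $\xi^{\bullet}$. You list ``$\xi^{\bullet}\le 0$'' among degenerate cases to be rejected, but note the control flow: once EASY-SOLVABLE returns true the loop has already exited, so SOLVE-NORM-EQUATION must not return UNSOLVED---rejection there would break the algorithm, not merely cost an iteration. The paper instead proves $\gc{\xi}\ge 0$ for \emph{every} sample: writing $u=\w^{k}\t^{m}u_{0}$ one checks $\gc{\xi}=\t^{2m+1}(|\gc{u}|^{2}-1)$; since $u\neq 0$ gives $\G{u}=|u|^{2}+|\gc{u}|^{2}\ge 2$ by Proposition~\ref{prop:exact:alternatives} while $|u|^{2}\le 1$, one gets $|\gc{u}|^{2}\ge 1$ and hence $\gc{\xi}\ge 0$. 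Your implicit alternative also works---EASY-SOLVABLE returning true forces $N_{\t}(\xi)>0$, which together with your $\xi>0$ yields $\xi^{\bullet}>0$---but you should state this inference explicitly rather than speaking of rejection. Apart from this point the two arguments are essentially the same, including the acknowledged heuristic step of transferring the density in Conjecture~\ref{cnj:appr:distr} to the specific family of $\xi$ emitted by RANDOM-SAMPLE.
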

\begin{proof}
First we show that the algorithm achieves the desired precision and produces a $\FT$-circuit of length $O(\log(1/\ve))$. Both statements follow from Lemma~\ref{lem:appr:sampling}. It is not difficult to check that the light gray segment on Figure~\ref{fig:appr:eps-region} defines all $u_0$ such that $d(U[u_0\t^m\w^k,v,0],R_z(\phi))$ is less than $\ve$. Therefore, by picking samples from the dark gray parallelogram $P$, we ensure that we achieve precision~$\ve$. The value of the Gauss complexity measure is in $O(1/\ve)$, therefore by Theorem~\ref{thm:exact:correctness} the length of the resulting circuit is in $O(1/\log(\ve))$.

There are two necessary conditions for predicate EASY-SOLVABLE to be true:

(1) $N_\t(\xi)$ is prime and equals $5n+1$ for integer $n$,

(2) $\xi > 0, \gc{\xi} > 0.$

Let $p_M$ be the probability that the first condition is true when we choose $\xi$ uniformly at random and $N_\t(\xi)$ is bounded by $M$. Following Conjecture~\ref{cnj:appr:distr}, we assume that $p_M$ is in $O(1/\log(M))$. In our case $M$ is of order $\p^{2m}$ and therefore the probability of getting an instance solvable in polynomial time is in $O(1/\log(1/\ve))$.

Now we show that the second condition is satisfied by construction. Part $\xi>0$ is trivial because procedure RANDOM-SAMPLE always generates $u_0$ such that $|u_0|\le\p^m$. For the second part we use Proposition~\ref{prop:exact:alternatives} and note that for non-zero $u_0\t^{n}$ the value of the Gauss complexity measure is
\[
 \G{u_0\t^{n}} = \cm{(u_0\t^{n})}+\Ni{u_0\t^{n}} \ge 2.
\]
We conclude that  $\cm{(u_0\t^{n})} \ge 1$ which gives
\[
\gc{\xi}=\t^{2m+1}(\cm{(u_0\t^{n})}-1) \ge 0
\]
as required.

In summary, checking that an instance of $\xi$ is easily solvable can be done in time polynomial in $\log(1/\ve)$ using, for example, Miller-Rabin Primality Test, the average number of loop iterations is in $O(\log(1/\ve))$, an instance of the norm equation when $\xi$ is prime can be solved in time that is on average is in $O(\log^d(1/\ve))$ for some positive $d$. We conclude that on average the algorithm runs in time $O(\log^c(1/\ve))$ for some positive constant $c$.
\end{proof}

\begin{figure}[t]
\begin{algorithmic}[1]
\Require $\phi$ -- defines $R_z(\phi)X$, $\ve$ -- precision
\State $r\gets \sqrt{\p}, C \gets \sqrt{\p/(4r)}$
\State $m \gets \l\lceil \log_{\tau}(C\ve r)\r\rceil+1$
\State Find $k$ such that $\th = \phi/2+\pi/2-\pi k/5 \in [0,\pi/5]$
\State \emph{not-found} $\gets$ true, $u\gets0,v\gets0$
\While{\emph{not-found}}
  \State $u_0 \gets \text{RANDOM-SAMPLE}(\theta,\ve,r)$ \Comment See
  Figure~\ref{fig:appr:sampling}
  \State $\xi \gets \p^{2m} - \t \l|u_0\r|^2$
  \State $fl \gets \text{EASY-FACTOR}(\xi)$
  \If{$\text{EASY-SOLVABLE}(fl)$}
    \State \emph{not-found} $\gets$ false
    \State $v \gets \w^k \t^m u_0$
    \State $u \gets \t^m\text{SOLVE-NORM-EQUATION}(\xi)$
  \EndIf
\EndWhile
\State $C \gets \text{EXACT-SYNTHESIZE}(U[u,v,0])$
\Ensure Circuit $C$ such that $d(C,R_z(\phi)X)\le\ve$
\end{algorithmic}
\caption{\label{fig:rzxappr:algorithm} The algorithm for approximating $R_z(\phi)X$ by an $\FT$-circuit with $O(\log(1/\ve))$ gates and precision at most $\ve$. Runtime is probabilistic polynomial as a function of $\log(1/\ve)$. }
\end{figure}

The algorithm for approximating $R_z(\phi)X$~(Figure~\ref{fig:rzxappr:algorithm}) can now be easily constructed based on ideas discussed above. First we simplify the expression for the distance
\[
 d(U[u,v,0],R_z(\phi)X) = \sqrt{1-\sqrt{\t}\l|\re(v e^{-i(\phi/2+\pi/2)})\r|}
\]
and notice that in this case it depends only on the bottom left entry of the unitary $U[u,v,0]$. Now $u$ and $v$ have opposite roles in comparison to the algorithm for approximating $R_z(\phi)$. Again, to get a better constant factor in front of $\log(1/\ve)$ in the circuit size, we randomly pick $v_0$ such that
$d(U[u,\p^m v_0,0],R_z(\phi)X)\le\ve$. We use procedure RANDOM-SAMPLE to generate random $v_0$. When calling the procedure, we set the third input parameter $r$ to $\sqrt{\p}$ to take into account that bottom left entries of exact-unitaries are rescaled by factor $\sqrt{\t}$. Once we picked $v_0$ we check that there exist an exact unitary with bottom right entry $v=\t^m v_0 \sqrt{\t}$. In other words we solve norm equation
\[
 |u|^2 = \xi \text{ for }\xi=1 - \t|v|^2.
\]
The necessary condition $\gc{\xi}\ge0$ is always satisfied because $1+\p|\gc{v}|^2$ is always positive. Once we find an ``easy" instance of the norm equation we solve it and construct an exact unitary that gives the desired approximation. Our result regarding the approximation algorithm for $R_z(\phi)X$ is summarized by the following theorem.

\begin{thm}
Approximation algorithm~(Figure~\ref{fig:rzxappr:algorithm}) outputs a $\FT$-circuit $C$ of length $O(\log(1/\ve))$ such that $d(C,R_z(\phi)X)\le\ve.$ On average the algorithm runtime is in $O(\log^c(1/\ve))$ if Conjecture \ref{cnj:appr:distr} is true.
\end{thm}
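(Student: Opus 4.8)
The plan is to follow the proof of Theorem~\ref{thm:appr:main} almost verbatim, with two adjustments: the sampled entry now occupies the off‑diagonal position of the exact unitary, and RANDOM-SAMPLE is invoked with radius parameter $r=\sqrt{\p}>1$ (to absorb the factor $\sqrt{\t}$ by which the lower‑left entry of an exact unitary is rescaled). First I would settle correctness and the circuit‑length bound. From the simplified distance
\[
 d(U[u,v,0],R_z(\phi)X)=\sqrt{1-\sqrt{\t}\,\l|\re(v\,e^{-i(\phi/2+\pi/2)})\r|},
\]
which depends only on $v$, substituting $v=\w^k\t^m v_0$ (with $k$ chosen so that $\th=\phi/2+\pi/2-\pi k/5\in[0,\pi/5]$) simplifies the phase to $\re(v\,e^{-i(\phi/2+\pi/2)})=\t^m\re(v_0 e^{-i\th})$, so the requirement $d\le\ve$ becomes a lower bound on $|\re(v_0 e^{-i\th})|$ that pins $v_0$ to the light‑gray circular segment of Figure~\ref{fig:appr:eps-region} drawn at radius $r\p^m$ with $r=\sqrt{\p}$. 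Because Lemma~\ref{lem:appr:sampling} was stated for every $r\ge1$, it applies at $r=\sqrt{\p}$ and guarantees that RANDOM-SAMPLE$(\th,\ve,\sqrt{\p})$ returns one of $O(1/\ve)$ equiprobable elements of $\Zw$, all lying in the dark‑gray parallelogram $P$ inside that segment and hence all achieving $d\le\ve$.

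The one place where the argument genuinely departs from Theorem~\ref{thm:appr:main} is the bound on the gate count. There the sampled entry is the upper‑left one, so $\G{U[u,v,0]}=\G{u}$ is bounded directly by Lemma~\ref{lem:appr:sampling}; here the upper‑left entry $u=\t^m x$ is the \emph{completed} entry, $x$ being the output of SOLVE-NORM-EQUATION$(\xi)$ with $\xi=\p^{2m}-\t\Ni{v_0}$. I would bound $\G{u}=\Ni{u}+\cm{u}$ by hand: $\Ni{u}=\t^{2m}\xi=1-\t\Ni{v}\le1$, and a short computation (using $\gc{\p}=-\t$ and $\cm{v_0}=\gc{(\Ni{v_0})}$) gives $\gc{\xi}=\t^{2m}+\p\,\cm{v_0}$, hence $\cm{u}=\cm{\t^m}\,\gc{\xi}=\p^{2m}\l(\t^{2m}+\p\,\cm{v_0}\r)$. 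Since the $\w$‑power coefficients of $v_0$ are $O(\p^m)=O(1/\ve)$ (from the proof of Lemma~\ref{lem:appr:sampling}), Proposition~\ref{prop:exact:alternatives} bounds $\cm{v_0}\le\G{v_0}$ by a polynomial in $1/\ve$, so $\G{u}$ is polynomially bounded in $1/\ve$. Consequently $\log\G{u}=O(\log(1/\ve))$, and Theorem~\ref{thm:exact:correctness} produces an $\FT$-circuit with $O(\log(1/\ve))$ gates implementing $U[u,v,0]$.

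For the expected runtime I would reuse the closing argument of Theorem~\ref{thm:appr:main}. The predicate EASY-SOLVABLE succeeds only when (i) $N_\t(\xi)$ is an integer prime of the form $5n+1$ or equal to $5$, and (ii) $\xi>0$ and $\gc{\xi}>0$. Condition (ii) holds by construction: $\gc{\xi}=\t^{2m}+\p\,\cm{v_0}>0$ automatically, and $\xi=\p^{2m}-\t\Ni{v_0}>0$ because RANDOM-SAMPLE places $v_0$ inside the disk of radius $\sqrt{\p}\,\p^m$, whence $\t\Ni{v_0}<\t\p\,\p^{2m}=\p^{2m}$. For condition (i), $N_\t(\xi)$ ranges over integers of magnitude $O(\p^{2m})=O(1/\ve^2)$, so Conjecture~\ref{cnj:appr:distr} gives a per‑iteration success probability $\Omega(1/\log(1/\ve))$ and hence an expected $O(\log(1/\ve))$ iterations of the while loop; each iteration runs EASY-FACTOR and EASY-SOLVABLE (the latter a Miller--Rabin primality test, probabilistically polynomial in $\log(1/\ve)$), and once an easy instance is reached SOLVE-NORM-EQUATION finishes in expected time $O(\log^{d}(1/\ve))$ for some constant $d$; multiplying these yields the claimed expected running time $O(\log^{c}(1/\ve))$. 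As in Theorem~\ref{thm:appr:main}, the single genuine obstacle is that the per‑iteration success probability relies on Conjecture~\ref{cnj:appr:distr}; everything else is the routine bookkeeping indicated above.
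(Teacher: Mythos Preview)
Your proposal is correct and follows exactly the approach the paper intends: the paper itself states that the proof is completely analogous to that of Theorem~\ref{thm:appr:main} and omits the details, and you have supplied precisely those details (invoking Lemma~\ref{lem:appr:sampling} at $r=\sqrt{\p}$, verifying $\xi>0$ and $\gc{\xi}>0$ by the direct computations the paper sketches in the surrounding text, and appealing to Conjecture~\ref{cnj:appr:distr} for the expected loop count). The one place you add something the paper leaves entirely implicit is the bound on $\G{u}$ for the \emph{completed} entry $u=\t^m x$; your computation $\cm{u}=\p^{2m}\gc{\xi}=\p^{2m}(\t^{2m}+\p\,\cm{v_0})$, combined with the polynomial bound on $\G{v_0}$ from Lemma~\ref{lem:appr:sampling}, correctly yields $\log\G{u}=O(\log(1/\ve))$ as required for Theorem~\ref{thm:exact:correctness}.
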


The proof is completely analogous to the proof of Theorem~\ref{thm:appr:main} and we do not present it here.

\section{Experimental results}

In this section we evaluate the approximation quality of our algorithm as a function of $\sgm$-circuit size (depth) and the algorithm runtime. 
We not only confirm the results established in previous sections, but also show that constants hidden in the big-O notation are quite reasonable, making our algorithm useful in practice. 

\begin{table*}[tb]
  \centering
  \bgroup
\def\arraystretch{1.5}%
\pgfplotstabletypeset[
	column type=@{}l@{},
	every head row/.style={
		before row={%
			\hline
			& \multicolumn{3}{c|}{Unitaries} & \multicolumn{3}{c|}{Precisions} & \multicolumn{1}{l|}{Runs per}\\
		},
		after row=\hline,
	},
	every last row/.style={after row=\hline},
	every even row/.style={before row={\rowcolor[gray]{0.9}}},
	columns/name/.style       ={column name=~Name,column type=|@{}l@{}},
	columns/formula/.style  ={column name=~formula,column type=|@{}l@{}},
	columns/nmin/.style={column name=$k_{min}\quad$},
	columns/nmax/.style  ={column name=$k_{max}\quad$},
	columns/eformula/.style  ={column name=~formula$\quad$,column type=|@{}l@{}},
	columns/enmin/.style={column name=$k_{min}\quad$},
	columns/enmax/.style  ={column name=$k_{max}\quad$},
	columns/runs/.style={column name=~unitary,column type=|@{}l@{}|},
	col sep=&,row sep=\\,
	string type,
]{
name       & formula                                       & nmin &  nmax           & eformula  & enmin & enmax & runs \\
~U-RZ-SMALL$\quad$ &~$R_z(\frac{2\pi k}{10^{3}})$      & 1    & $1\cdot10^{3}$  & ~$10^{-k}$ & 2     & 14    &~1    \\
~U-RZ-BIG   &~$R_z(\frac{2\pi k}{4\cdot10^{3}})$       & 1    & $4\cdot10^{3}$  & ~$10^{-k}$ & 2     & 30    &~1    \\
~RZ-HIGH    &~$R_z(\frac{\pi }{2^k})$                  & 2    & $6\cdot10^{1}$  & ~$10^{-k}$ & 2     & 100   &~10   \\
~U-RZ-LOW   &~$R_z(\frac{2\pi k}{10^{3}})$             & 1    & $1\cdot10^{3}$  & ~$10^{-k/8}$&8     & 12    &~1    \\
~U-RZ       &~$R_z(\frac{\pi k}{10^{4}})$              & 1    & $1\cdot10^{4}$  & ~---       & ---   & ---   &~---  \\
~U-RZX-BIG  &~$R_z(\frac{2\pi k}{4\cdot10^{3}})X\quad$ & 1    & $4\cdot10^{3}$  & ~$10^{-k}$ & 2     & 30    &~1    \\
~U-RZX-LOW  &~$R_z(\frac{2\pi k}{10^{3}})X$            & 1    & $1\cdot10^{3}$  & ~$10^{-k/8}$&8     & 12    &~1    \\
~U-RZX      &~$R_z(\frac{\pi k}{10^{4}})X$             & 1    & $1\cdot10^{4}$  & ~---       & ---   & ---   &~---  \\
~X-HIGH     &~X                                            & ---  & ---             & ~$10^{-k}$ & 2     & 100   &~500  \\
}
\egroup

  \caption{ \label{table:experiments}Sets of inputs used for the experiments. }
\end{table*}

We experiment over several input sets of input unitaries and precisions, as summarized in Table~\ref{table:experiments}. 
Each experiment is performed similarly.
First, we request an approximation of a set of unitaries for certain precisions. 
In some experiments, we run the algorithm for the same unitary and precision several times to see the influence of the probabilistic nature of the algorithm on the result. 
Next, we aggregate collected data for a given precision by taking the mean, min or max of the parameter of interest over the set of all unitaries considered in the experiment. 
We compare to a Brute Force Search algorithm, from which we request approximation of the set of unitaries, that outputs the best precision that can be achieved using at most $N$ $\sigma$ gates for each input angle. 
The largest $N$ for our database is $25$. In this case we aggregate collected data for a given $N$.

We implemented our algorithm using C++. There are two third-party libraries used: PARI/GP\cite{UGPARI} which provides a relative norm equation solver and primality test, and \textit{boost::multiprecision} which includes high-precision integer and floating-point types. All experimental results described in this section were obtained on a computer with Intel Core i7-2600 (3.40GHz) processor and 8 GB of RAM. Our implementation does not use any parallelism.

\subsection{Quality Evaluation}
We evaluate the approximation quality of our algorithm on four large sets of inputs. 
Two of them are used to evaluate the approximation quality of $R_z(\phi)$ rotations and the other two for $R_z(\phi)X$. 
For both rotation types, one set covers uniformly the range of angles $[0,2\pi]$ (U-RZ-BIG, U-RZX-BIG) and the other one includes rotations that are particularly important in applications. 
For $R_z(\phi)$ rotations, we study angles $\phi$ of the form $\frac{\pi}{2^n}$ used in the Quantum Fourier Transform (input set RZ-HIGH); 
for $R_z(\phi)X$, we look at the quality of the Pauli $X$ gate approximation (input set X-HIGH). 
The results are presented in Figure~\ref{fig:big-exp}. In addition to the average number of gates, we show minimal and maximal number of gates needed to achieve the required precision, 
which demonstrates the stability of the quality of our algorithm.

One of the baselines we compare the quality of our algorithm to is Brute Force Search. We built a database of optimal $\sgm$-circuits with up to 25 gates and used it to find optimal approximations of unitaries from datasets U-RZ and U-RZX. The highest average precision that we were able to achieve is around $10^{-2.5}.$ We evaluated our number theoretic algorithm on the same range of precisions; the results are presented in Figure~\ref{fig:bfs:runtime}.
For our algorithm, the average coefficient in front of $\log_{10}(1/\ve)$ is only 18\% larger than the average for the optimal approximations of $R_z(\phi)$ and 40\% larger for $R_z(\phi)X$. 

Figure~\ref{fig:exact-stats} shows the exponential scaling of the number of optimal circuits with a given number of $\sgm$ gates and confirms that the Brute Force Search becomes infeasible exponentially quickly. 
In summary, our algorithm finds circuits for unitaries $R_z(\phi)$ and $R_z(\phi)X$ exponentially faster than Brute Force Search and with very moderate overhead.

The previous state-of-the-art method for solving the unitary approximation problem for the Fibonacci braid basis in polynomial time is the Solovay-Kitaev algorithm. 
The Solovay-Kitaev algorithm can be applied to any gate set and does not take into account the number theoretic structure of the approximation problem. 
Here we provide a rough estimate of its performance when approximating using $\sgm$-circuits. 

We consider approximation using special unitaries in this case. The version of the Solovay-Kitaev algorithm described in \cite{DN} boosts the quality of the approximation provided by a fixed-size epsilon net. 
It is crucial for the overall estimate to find the quality provided by an epsilon net depending on the maximal size of the optimal circuit in it. 
We use the scaling of the size of our database~(Figure~\ref{fig:exact-stats}) of optimal circuits and a rough volume argument to get the estimate. 

Consider the problem of approximating states, which is the same as approximating single-qubit special unitaries. Our $\ve$ net should cover the Bloch sphere
with overall surface area $4\pi$. Each state will cover roughly an area of the sphere equal to $\pi\ve^2$. Therefore, for $n$ being the size of the longest circuit:
\[
 \pi \ve^2 10^{(0.275x+0.592)}/2 \simeq 4\pi.
\]
We divide the estimate for the number of unitaries by two to get the estimate for the number of states. 
This is because there are only up to global phase two distinct exact unitaries of the form $U[x,y,k]$ for given $x,y$ (for k=0 and k=1). Other values of $k$ can be reduced to 0 or 1 using the identity $\w^s U[x,y,k] = U[x\w^s,y\w^s,k+2s]$. 
We also assume that $U[x,y,k]$ and $U[x\w^s,y\w^s,k]$ have very similar cost. 

The database we are using in other experiments includes circuits with up to $25$ gates and requires around 5GB of RAM to be built. In our estimate for the performance of the Solovay-Kitaev algorithm we assume that with enough engineering effort one can build a database with up to $30$ gates. Our estimates result in the following:
\[
\begin{array}{c}
\log_{10}(1/\ve_n) \simeq 0.137n -0.155, \log_{10}(1/\ve_{30}) \simeq  3.97 \\
n(\log_{10}(1/\ve_n)) \simeq  7.27\ve + 1.127.
\end{array}
\]
The estimate is more optimistic in comparison to results of our Brute Force Search as now we consider the full special unitary group instead of its subsets $R_z(\phi)$ and $R_z(\phi)X$. 
With these numbers in hand, we use the analysis of the Solovay-Kitaev algorithm in~\cite{DN}. 

Figure~\ref{fig:SK} compares our estimates for the size of circuits produced by the Solovay-Kitaev algorithm and the sizes from running our algorithm. In particular, for precision $10^{-10}$ our algorithm produces twenty times smaller circuits and for precision $10^{-30}$, one thousand times smaller circuits, which is an expected difference between algorithms with scaling $O(\log^{3.97}(1/\ve))$ and $O(\log(1/\ve))$.

\begin{figure*}[tb]
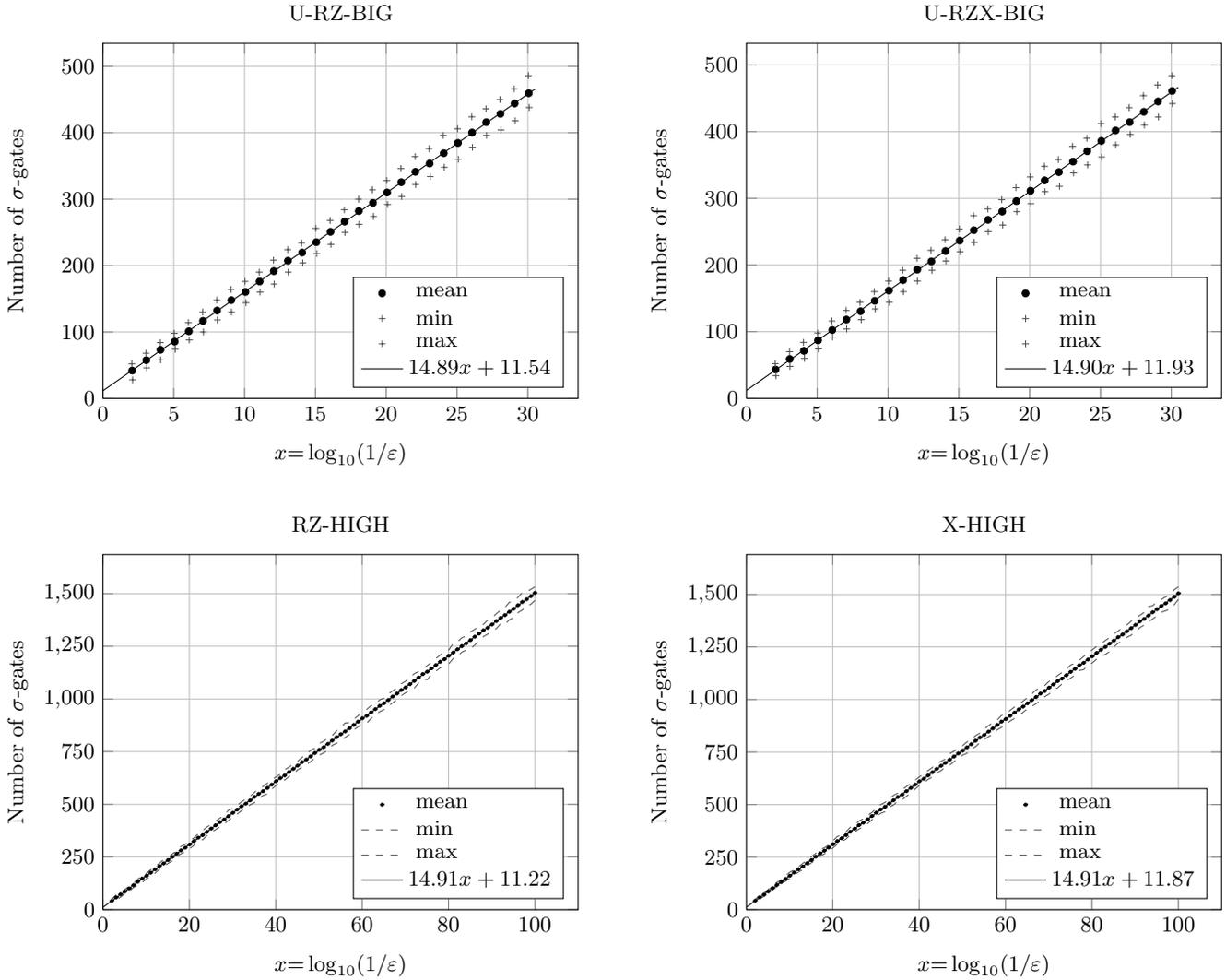

  \begin{subfigure}[b]{\columnwidth}
    \begin{tikzpicture}[trim axis left, trim axis right]
\pgfplotsset{
every axis plot post/.append style={
every mark/.append style={scale=0.7}}}
legend style={
}

\begin{axis}[grid=both, xmin = 0, ymin =0,title=U-RZ-BIG,
		xlabel=$x{=}\log_{10}(1/\ve)$,ylabel=Number of $\sigma$-gates, legend style={cells={anchor=west},
legend pos=south east,},y post scale=0.9, x post scale=1 ]
\selectcolormodel{gray}
\addplot[color=black,only marks,mark=*] coordinates {
(30.0627,459.513)
(29.0626,444.016)
(28.0628,428.387)
(27.0625,415.826)
(26.0627,400.29)
(25.0627,384.61)
(24.0626,369.133)
(23.063,353.622)
(22.0626,341.105)
(21.0628,325.497)
(20.0625,310.051)
(19.0625,294.413)
(18.0625,282.006)
(17.0627,266.227)
(16.0628,250.862)
(15.0628,235.257)
(14.0623,219.549)
(13.0626,207.24)
(12.0627,191.66)
(11.0624,176.026)
(10.0631,160.251)
(9.06275,147.856)
(8.06231,132.265)
(7.0624,116.678)
(6.06253,101.066)
(5.06289,85.489)
(4.06266,73.154)
(3.06246,57.548)
(2.06298,41.8635)
};
\addplot[color=gray!50!black!90,only marks,mark=+] coordinates {
(30.0904,438.)
(29.0988,418.)
(28.1044,404.)
(27.0825,396.)
(26.0862,378.)
(25.0934,360.)
(24.1002,348.)
(23.1102,334.)
(22.0837,322.)
(21.0883,304.)
(20.096,292.)
(19.1043,274.)
(18.081,262.)
(17.085,250.)
(16.0909,232.)
(15.099,218.)
(14.1074,204.)
(13.0817,190.)
(12.0871,172.)
(11.0931,160.)
(10.1001,144.)
(9.07961,130.)
(8.08368,118.)
(7.08877,100.)
(6.09646,88.)
(5.10411,74.)
(4.08085,58.)
(3.08586,46.)
(2.09113,28.)
};
\addplot[color=gray!50!black!90,only marks,mark=+] coordinates {
(30.0372,486.)
(29.0312,466.)
(28.0266,450.)
(27.0442,436.)
(26.0406,424.)
(25.0357,406.)
(24.0305,396.)
(23.0238,376.)
(22.043,364.)
(21.0385,346.)
(20.0345,328.)
(19.0269,314.)
(18.0452,300.)
(17.0416,284.)
(16.0371,268.)
(15.0315,256.)
(14.0246,234.)
(13.0447,224.)
(12.0403,208.)
(11.0351,190.)
(10.0305,176.)
(9.04662,164.)
(8.0428,148.)
(7.03878,130.)
(6.03441,114.)
(5.02759,98.)
(4.04527,84.)
(3.04183,68.)
(2.03649,52.)
};

\addplot[domain=0:30.5] {14.891*x+ 11.540};

\legend{~mean,~min,~max,$14.89x+11.54$}
\end{axis}%
\end{tikzpicture}%
  \end{subfigure}%
  \hfill
  \begin{subfigure}[b]{\columnwidth}
    \begin{tikzpicture}[trim axis left, trim axis right]
\pgfplotsset{
every axis plot post/.append style={
every mark/.append style={scale=0.7}}}
legend style={
}

\begin{axis}[grid=both, xmin = 0, ymin =0,title=U-RZX-BIG,
		xlabel=$x{=}\log_{10}(1/\ve)$,ylabel=Number of $\sigma$-gates, legend style={cells={anchor=west},
legend pos=south east,},y post scale=0.9, x post scale=1 ]
\selectcolormodel{gray}
\addplot[color=black,only marks,mark=*] coordinates
{(30.062488567570424768,461.03229036295369212)
(29.062401280020362107,445.23754693366708385)
(28.062862385522033771,429.74750000000000000)
(27.062748382625796978,414.17550000000000000)
(26.062602074094598496,401.73400000000000000)
(25.062599025556542533,386.07500000000000000)
(24.062622706224710883,370.62550000000000000)
(23.062776073996160463,355.04831038798498123)
(22.063008944948731126,339.30150000000000000)
(21.062570384809674350,326.90312891113892365)
(20.062429774025907047,311.37171464330413016)
(19.062756734049284857,295.82550000000000000)
(18.062599140106397411,280.16550000000000000)
(17.062554463707885036,267.76800000000000000)
(16.062608194440868769,252.20200000000000000)
(15.062728392629764428,236.58022528160200250)
(14.062386593473815937,220.99724655819774718)
(13.063033059966777703,205.41677096370463079)
(12.062591581397071996,192.98500000000000000)
(11.062474231175719087,177.35000000000000000)
(10.062791524215622825,161.66650000000000000)
(9.0631625101317691682,146.29300000000000000)
(8.0631494297096457211,130.59750000000000000)
(7.0626015881572633307,118.12500000000000000)
(6.0626897513029504147,102.51650000000000000)
(5.0628266640488287821,87.013000000000000000)
(4.0629149638943502702,71.309000000000000000)
(3.0624645472518208962,59.026282853566958698)
(2.0626579522495666000,43.285106382978723404)
};

\addplot[color=gray!50!black!90,only marks,mark=+] coordinates 
{(30.084040111879792493,442.00000000000000000)
(29.089982494687670829,422.00000000000000000)
(28.096732383462441269,410.00000000000000000)
(27.104052013944180118,396.00000000000000000)
(26.081470261180789398,380.00000000000000000)
(25.086567107988122506,362.00000000000000000)
(24.091664763236456991,350.00000000000000000)
(23.099721079726839363,338.00000000000000000)
(22.108592955047225502,318.00000000000000000)
(21.083019941962011352,310.00000000000000000)
(20.088358610303134964,292.00000000000000000)
(19.094458749344818901,280.00000000000000000)
(18.101377283971484620,260.00000000000000000)
(17.080041654901616135,250.00000000000000000)
(16.084781234119877033,234.00000000000000000)
(15.089868616170147622,220.00000000000000000)
(14.097529246585261750,206.00000000000000000)
(13.104553062700942900,192.00000000000000000)
(12.081240967091667641,176.00000000000000000)
(11.086573399019919425,160.00000000000000000)
(10.091882913638909694,144.00000000000000000)
(9.1000971434707447797,134.00000000000000000)
(8.1092244218235382556,118.00000000000000000)
(7.0826675823884881107,104.00000000000000000)
(6.0879291862326773212,92.000000000000000000)
(5.0941237586169871649,74.000000000000000000)
(4.1019196744582115621,60.000000000000000000)
(3.0801706029735918536,48.000000000000000000)
(2.0847344857615539665,34.000000000000000000)
};

\addplot[color=gray!50!black!90,only marks,mark=+] coordinates 
{(30.042091347844086223,484.00000000000000000)
(29.038235844587035302,470.00000000000000000)
(28.032586579620030887,454.00000000000000000)
(27.026507914694052466,436.00000000000000000)
(26.045025468629147307,422.00000000000000000)
(25.041459990635695236,412.00000000000000000)
(24.037256685316042781,390.00000000000000000)
(23.030193157310513089,378.00000000000000000)
(22.024366249354440995,358.00000000000000000)
(21.043432525435319270,348.00000000000000000)
(20.040404319134909068,332.00000000000000000)
(19.034513107225231497,316.00000000000000000)
(18.028912478356020804,298.00000000000000000)
(17.046575268898535362,284.00000000000000000)
(16.042776924654679302,274.00000000000000000)
(15.037888990213757391,254.00000000000000000)
(14.032343883606736206,238.00000000000000000)
(13.026822074734801276,222.00000000000000000)
(12.044914143949693078,210.00000000000000000)
(11.041203507833519641,192.00000000000000000)
(10.036099029171254226,176.00000000000000000)
(9.0309650547650774443,160.00000000000000000)
(8.0236767396566950083,144.00000000000000000)
(7.0436049159992613319,132.00000000000000000)
(6.0404788911632351086,116.00000000000000000)
(5.0352981124701656069,98.000000000000000000)
(4.0285967902253557090,84.000000000000000000)
(3.0467289570393519183,70.000000000000000000)
(2.0419798748723174012,52.000000000000000000)
};

\addplot[domain=0:30.5] {14.907655134244894489*x+11.935121360531369925};

\legend{~mean,~min,~max,$14.90x+11.93$}
\end{axis}%
\end{tikzpicture}%
  \end{subfigure}%
  \vspace{0.5cm}
   \begin{subfigure}[b]{\columnwidth}
    \input{plots/cost-RZ-HIGH.tex}
  \end{subfigure}%
  \hfill
  \begin{subfigure}[b]{\columnwidth}
    \input{plots/cost-X-HIGH.tex}
  \end{subfigure}%

\caption{ \label{fig:big-exp} Number of $\sigma$ gates needed to achieve the quality of approximation $\ve$ using the Number Theoretic Algorithm on different sets of inputs (see Table~\ref{table:experiments}). Includes approximation of $X$ gate and $R_z$ rotations used in the Quantum Fourier Transform.}
\end{figure*}

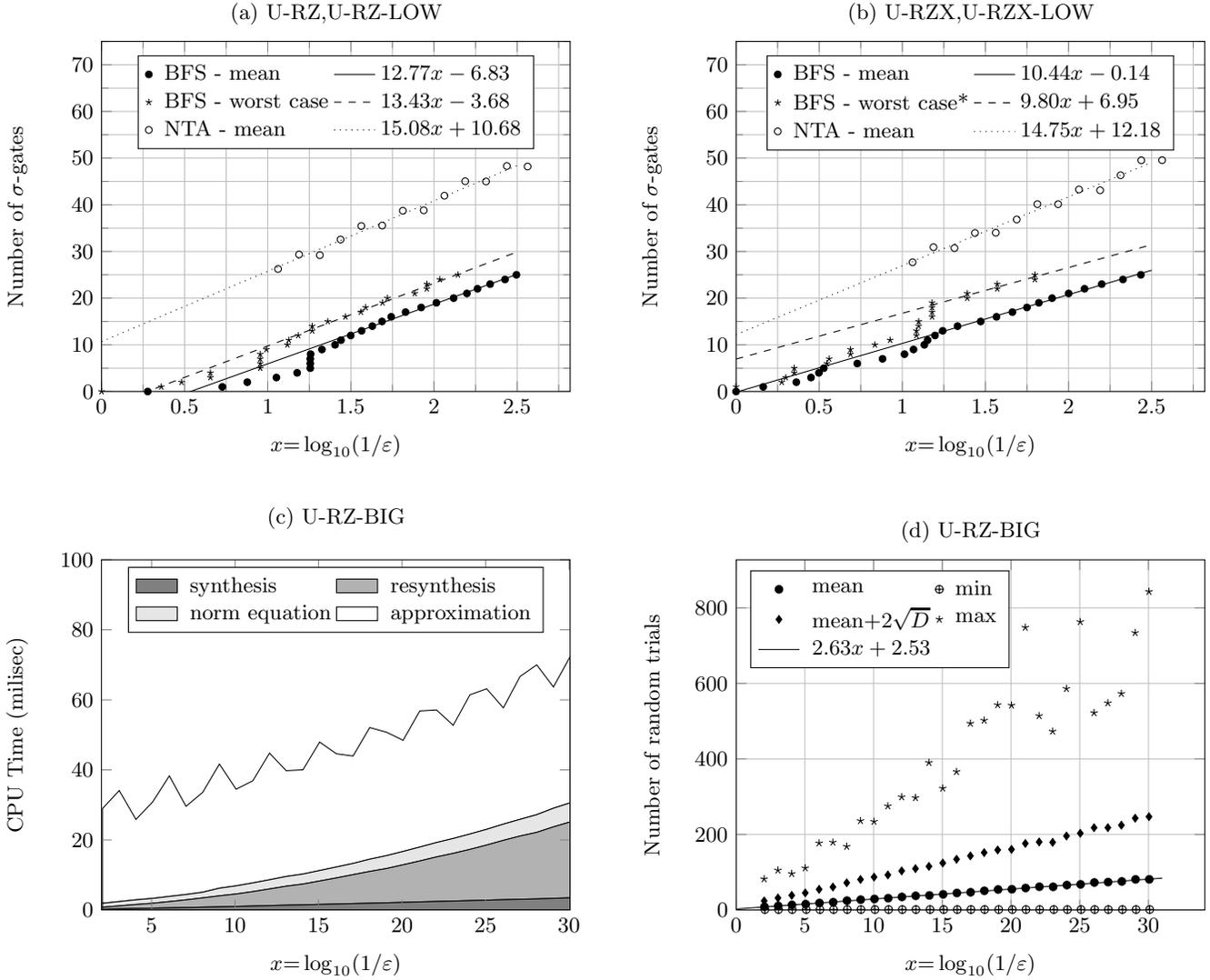
\begin{figure*}[tb]
  \begin{subfigure}[b]{\columnwidth}
    \caption{\label{fig:bfs:rz}U-RZ,U-RZ-LOW}
    \begin{tikzpicture}[trim axis left, trim axis right]
\pgfplotsset{
every axis plot post/.append style={
every mark/.append style={scale=0.7}}}
legend style={
}
\begin{axis}[grid=both, xmin = 0, ymin =0, ymax = 75,ytick ={0,10,...,80},xtick ={0,0.5,...,2.5},minor tick num=1,
		xlabel=$x{=}\log_{10}(1/\ve)$,ylabel=Number of $\sigma$-gates, legend style={cells={anchor=west},anchor=north,at={(0.5,0.96)} },legend columns=2,y post scale=0.9, x post scale=1 ]
\selectcolormodel{gray}

\addplot[color=black,only marks,mark=otimes*] coordinates {
(0.277866,0)
(0.726692,1)
(0.876889,2)
(1.05203,3)
(1.17669,4)
(1.25565,5)
(1.25565,6)
(1.25565,7)
(1.25767,8)
(1.32583,9)
(1.40469,10)
(1.44045,11)
(1.49888,12)
(1.56462,13)
(1.62905,14)
(1.6872,15)
(1.7427,16)
(1.82898,17)
(1.9229,18)
(2.01452,19)
(2.11809,20)
(2.19787,21)
(2.26174,22)
(2.33755,23)
(2.42717,24)
(2.49574,25)
};

\addplot[domain=0:2.5] {12.7725*x-6.83612};

\addplot[color=black,only marks,mark=star] coordinates
{
(0,0)
(0.35950263808196947349,1)
(0.4812997487535758630,2)
(0.6551525608110203359,3)
(0.6551525608110208263,4)
(0.9548417037615698998,5)
(0.9548417037615698998,6)
(0.9548417037615698998,7)
(0.9548417037615698998,8)
(0.9918536635194154688,9)
(1.1181206651647066435,10)
(1.1265607844860146328,11)
(1.1831193285293601300,12)
(1.2671299135925779519,13)
(1.2671299135925944523,14)
(1.3610855160334382269,15)
(1.466988200282372845,16)
(1.559843764111076170,17)
(1.588793851608131891,18)
(1.690244065540871723,19)
(1.720276256665779074,20)
(1.884113920772199260,21)
(1.956828531903444108,22)
(1.957559963532745469,23)
(2.035499022755510258,24)
(2.144310306503067592,25)
};

\addplot[domain=0:2.5,style=dashed] {13.43393331916659480*x-3.68190565865430403};

\addplot[color=black,only marks,mark=o] coordinates
{
(2.56292,48.198)
(2.43806,48.3)
(2.31286,44.994)
(2.18774,45.0401)
(2.06278,41.944)
(1.9382,38.818)
(1.81264,38.726)
(1.68854,35.552)
(1.56244,35.458)
(1.43781,32.562)
(1.31299,29.206)
(1.18811,29.35)
(1.06182,26.249)
};

\addplot[domain=0:2.5,style=dotted]{15.084378421573019416*x+10.684984399748293866};
\legend{~BFS - mean,$12.77x-6.83$,~BFS - worst case,$13.43x-3.68$,~NTA - mean,$15.08x+10.68$}
\end{axis}%
\end{tikzpicture}%
  \end{subfigure}%
  \hfill
  \begin{subfigure}[b]{\columnwidth}
    \caption{\label{fig:bfs:rzx}U-RZX,U-RZX-LOW}
    \begin{tikzpicture}[trim axis left, trim axis right]
\pgfplotsset{
every axis plot post/.append style={
every mark/.append style={scale=0.7}}}
legend style={
}
\begin{axis}[grid=both, xmin = 0, ymin =0, ymax = 75,ytick ={0,10,...,80},xtick ={0,0.5,...,2.5},minor tick num=1,
		xlabel=$x{=}\log_{10}(1/\ve)$,ylabel=Number of $\sigma$-gates, legend style={cells={anchor=west},anchor=north,at={(0.5,0.96)} },legend columns=2,y post scale=0.9, x post scale=1 ]
\selectcolormodel{gray}

\addplot[color=black,only marks,mark=otimes*] coordinates {
(0,0)
(0.16366633426181648273,1)
(0.3623873400429564474,2)
(0.4508888879486324032,3)
(0.4982695722112438251,4)
(0.5278121899246723158,5)
(0.7288260745815368192,6)
(0.8809103652766920183,7)
(1.0131650247703077354,8)
(1.0675672253224406988,9)
(1.1325469410820568644,10)
(1.1520589724948555460,11)
(1.1965469290809889017,12)
(1.2429170285367476790,13)
(1.3323205799128657668,14)
(1.471299728444090054,15)
(1.566437729615744636,16)
(1.662218983833746368,17)
(1.749732283588184623,18)
(1.821271289495724491,19)
(1.900869332567594658,20)
(2.000122547165817362,21)
(2.09618,22)
(2.19938,23)
(2.32723,24)
(2.43507,25)
};

\addplot[domain=0:2.5] {10.4472*x-0.142678};

\addplot[color=black,only marks,mark=star] coordinates
{
(0,0)
(0,1)
(0.27504376248488107012,2)
(0.29901933460630983540,3)
(0.34998172693671099503,4)
(0.34998172693671099503,5)
(0.5460894919835928113,6)
(0.5600977191850870530,7)
(0.6882128641940471040,8)
(0.6882128641940471040,9)
(0.8349446314687519609,10)
(0.9265457221854759442,11)
(1.0849197226052187224,12)
(1.0849197226052294223,13)
(1.0983363529623194063,14)
(1.0983363529623876603,15)
(1.1791195025209445678,16)
(1.1791195025209445678,17)
(1.1791195025216210733,18)
(1.1791195025216210733,19)
(1.390833581673793312,20)
(1.390833581673793312,21)
(1.571133857961602881,22)
(1.571133857972107040,23)
(1.797258104070965619,24)
(1.797258104085370978,25)
};

\addplot[domain=0:2.5,style=dashed]{9.80039714679857604*x+6.95256652012240278};

\addplot[color=black,only marks,mark=o] coordinates
{(2.5627677556581934293,49.556)
(2.4374871574245833849,49.527638190954773869)
(2.3131710834049531409,46.329648241206030151)
(2.1896743973214295739,43.108)
(2.0626489799070565663,43.276381909547738693)
(1.9370448214179818736,40.134)
(1.8124038890124095517,40.146)
(1.6882061783238350785,36.858)
(1.5623332506858437616,34.016)
(1.437189191733357885,33.95778894472361809)
(1.3137147327714768938,30.726)
(1.18761232357674519,30.916)
(1.0616063538590257598,27.686)
};

\addplot[domain=0:2.5,style=dotted]{14.758072613969320663*x+12.188521730552985744};
\legend{~BFS - mean,$10.44x-0.14$,~BFS - worst case*,$9.80x+6.95$,~NTA - mean,$14.75x+12.18$}
\end{axis}%
\end{tikzpicture}%
  \end{subfigure}%
  \vspace{0.5cm}
   \begin{subfigure}[b]{\columnwidth}
    \caption{\label{fig:runtime}U-RZ-BIG}
    \begin{tikzpicture}[trim axis left, trim axis right]
\pgfplotsset{
every axis plot post/.append style={
every mark/.append style={scale=0.7}}}
legend style={
}

\begin{axis}[grid=none, xmin = 2, ymin =0, stack plots=y, ymax = 100,
		area style,
		enlarge x limits=false,
		xlabel=$x{=}\log_{10}(1/\ve)$,ylabel=CPU Time (milisec), legend style={cells={anchor=west},anchor=north,at ={(0.5,0.98)}},legend columns=2,y post scale=0.9, x post scale=1 ]
\selectcolormodel{gray}

\addplot[color=black,no marks,mark=otimes*,fill=gray] coordinates
{(30.0627,3.45972)
(29.0626,3.32104)
(28.0628,3.15002)
(27.0625,3.0562)
(26.0627,2.92733)
(25.0627,2.79713)
(24.0626,2.66863)
(23.063,2.53743)
(22.0626,2.42545)
(21.0628,2.29896)
(20.0625,2.16898)
(19.0625,2.03254)
(18.0625,1.92272)
(17.0627,1.8097)
(16.0628,1.69002)
(15.0628,1.57718)
(14.0623,1.45259)
(13.0626,1.36648)
(12.0627,1.24356)
(11.0624,1.12298)
(10.0631,1.00339)
(9.06275,0.897352)
(8.06231,0.794107)
(7.0624,0.697853)
(6.06253,0.602941)
(5.06289,0.50831)
(4.06266,0.42578)
(3.06246,0.334249)
(2.06298,0.245326)
}\closedcycle;

\addplot[color=black,no marks,fill=gray!60] coordinates 
{(30.0627,21.6765)
(29.0626,20.4026)
(28.0628,19.0058)
(27.0625,18.0549)
(26.0627,16.9422)
(25.0627,15.7364)
(24.0626,14.6224)
(23.063,13.625)
(22.0626,12.718)
(21.0628,11.6866)
(20.0625,10.7289)
(19.0625,9.79214)
(18.0625,9.07582)
(17.0627,8.26754)
(16.0628,7.4526)
(15.0628,6.68643)
(14.0623,5.94232)
(13.0626,5.41658)
(12.0627,4.75569)
(11.0624,4.11213)
(10.0631,3.55077)
(9.06275,3.13125)
(8.06231,2.64696)
(7.0624,2.18483)
(6.06253,1.78263)
(5.06289,1.40368)
(4.06266,1.13301)
(3.06246,0.827832)
(2.06298,0.530662)
}\closedcycle;

\addplot[color=black,no marks,fill=gray!20] coordinates 
{(30.0627,5.44265)
(29.0626,5.32407)
(28.0628,5.05159)
(27.0625,4.76869)
(26.0627,4.64019)
(25.0627,4.49727)
(24.0626,4.39962)
(23.063,4.24992)
(22.0626,4.11305)
(21.0628,3.98198)
(20.0625,3.8089)
(19.0625,3.68047)
(18.0625,3.49528)
(17.0627,3.20767)
(16.0628,3.1209)
(15.0628,2.9901)
(14.0623,2.85793)
(13.0626,2.77576)
(12.0627,2.61839)
(11.0624,2.48222)
(10.0631,2.32417)
(9.06275,2.1869)
(8.06231,1.62029)
(7.0624,1.52574)
(6.06253,1.47291)
(5.06289,1.38544)
(4.06266,1.33779)
(3.06246,1.23102)
(2.06298,1.1238)
}\closedcycle;

\addplot[color=black,no marks,mark=+] coordinates 
{(30.0627,41.7566)
(29.0626,34.634)
(28.0628,42.7965)
(27.0625,40.7951)
(26.0627,33.2063)
(25.0627,40.1413)
(24.0626,39.7509)
(23.063,32.3316)
(22.0626,37.8523)
(21.0628,38.8708)
(20.0625,31.7477)
(19.0625,35.2678)
(18.0625,37.6021)
(17.0627,30.6672)
(16.0628,32.3487)
(15.0628,36.7049)
(14.0623,29.7654)
(13.0626,30.2113)
(12.0627,36.1612)
(11.0624,29.1611)
(10.0631,27.6009)
(9.06275,35.457)
(8.06231,28.5268)
(7.0624,25.1809)
(6.06253,34.4251)
(5.06289,27.5468)
(4.06266,22.9505)
(3.06246,31.6988)
(2.06298,27.0707)
}\closedcycle;


\legend{~synthesis,~resynthesis,~norm equation,~approximation}
\end{axis}%
\end{tikzpicture}%
  \end{subfigure}%
  \hfill
  \begin{subfigure}[b]{\columnwidth}
    \caption{\label{fig:trials}U-RZ-BIG}
    \begin{tikzpicture}[trim axis left, trim axis right]
\pgfplotsset{
every axis plot post/.append style={
every mark/.append style={scale=0.8}}}
legend style={
}
\begin{axis}[grid=both, xmin = 0, ymin =0,
		xlabel=$x{=}\log_{10}(1/\ve)$,ylabel=Number of random trials, legend style={cells={anchor=west},legend pos=north west},legend columns=2,y post scale=0.9, x post scale=1 ]
\selectcolormodel{gray}

\addplot[color=black,only marks,mark=*] coordinates
{(30.0627,81.051)
(29.0626,80.968)
(28.0628,75.491)
(27.0625,73.7875)
(26.0627,72.7325)
(25.0627,67.77)
(24.0626,65.8497)
(23.063,61.2198)
(22.0626,61.4665)
(21.0628,58.12)
(20.0625,54.855)
(19.0625,54.2455)
(18.0625,50.7881)
(17.0627,46.7855)
(16.0628,44.875)
(15.0628,41.3303)
(14.0623,38.469)
(13.0626,37.4315)
(12.0627,34.5223)
(11.0624,31.1493)
(10.0631,29.1113)
(9.06275,27.1848)
(8.06231,24.6985)
(7.0624,20.4698)
(6.06253,18.3707)
(5.06289,15.369)
(4.06266,13.362)
(3.06246,10.7068)
(2.06298,8.317)
};

\addplot[color=black,only marks,mark=oplus] coordinates 
{(30.0904,1.)
(29.0988,1.)
(28.1044,1.)
(27.0825,1.)
(26.0862,1.)
(25.0934,1.)
(24.1002,1.)
(23.1102,1.)
(22.0837,1.)
(21.0883,1.)
(20.096,1.)
(19.1043,1.)
(18.081,1.)
(17.085,1.)
(16.0909,1.)
(15.099,1.)
(14.1074,1.)
(13.0817,1.)
(12.0871,1.)
(11.0931,1.)
(10.1001,1.)
(9.07961,1.)
(8.08368,1.)
(7.08877,1.)
(6.09646,1.)
(5.10411,1.)
(4.08085,1.)
(3.08586,1.)
(2.09113,1.)
};

\addplot[color=black,only marks,mark=diamond*] coordinates 
{(30.0408,246.966)
(29.0357,242.936)
(28.0305,224.411)
(27.0463,217.947)
(26.0432,217.789)
(25.0391,203.071)
(24.0342,195.849)
(23.0279,179.04)
(22.046,180.091)
(21.0424,176.096)
(20.0374,160.365)
(19.0316,158.868)
(18.0479,151.886)
(17.0447,143.166)
(16.0405,134.375)
(15.0355,124.3)
(14.0292,115.4)
(13.0469,109.43)
(12.0431,103.445)
(11.0391,92.7004)
(10.034,87.0439)
(9.04883,80.6738)
(8.0455,71.9326)
(7.04145,60.7248)
(6.0373,54.5201)
(5.03179,45.0678)
(4.04788,38.6688)
(3.04433,31.7478)
(2.04031,23.9506)
};

\addplot[color=black,only marks,mark=star] coordinates 
{(30.0372,843.)
(29.0312,734.)
(28.0266,573.)
(27.0442,548.)
(26.0406,522.)
(25.0357,763.)
(24.0305,586.)
(23.0238,473.)
(22.043,514.)
(21.0385,748.)
(20.0345,542.)
(19.0269,543.)
(18.0452,502.)
(17.0416,494.)
(16.0371,366.)
(15.0315,322.)
(14.0246,390.)
(13.0447,297.)
(12.0403,299.)
(11.0351,275.)
(10.0305,234.)
(9.04662,236.)
(8.0428,168.)
(7.03878,179.)
(6.03441,177.)
(5.02759,111.)
(4.04527,96.)
(3.04183,105.)
(2.03649,82.)
};

\addplot[domain=0:31]{2.6342102859877107486*x+2.5323373864146127733};

\legend{~mean,~min,~mean+2$\sqrt{D}$,~max,~$2.63x+2.53$}
\end{axis}%
\end{tikzpicture}%
  \end{subfigure}%
\caption{\label{fig:bfs:runtime}(a),(b) comparison of the number of $\sigma$ gates needed to achieve the quality of approximation $\ve$ using the Number Theoretic Algorithm (NTA) and using Brute Force Search (BFS). Input sets U-RZ-LOW and U-RZX-LOW were used for the NTA and U-RZ, U-RZX for BFS; (c) average runtime of parts of the NTA, using U-RZ-BIG input set; (d) number of trials performed in the main loop of the NTA before an ``easy'' instance was found, using U-RZ-BIG input set. See Table~\ref{table:experiments} for input set descriptions. }
\end{figure*}

\subsection{Performance evaluation}

Our experiments confirm that the algorithm described in the paper has a probabilistic polynomial runtime. In addition, constants and the power of the polynomial are such that our algorithm is practically useful. 
In Figure~\ref{fig:runtime} we show the runtime of different parts of our algorithm. 
The approximation part corresponds to the runtime of the algorithm approximating unitaries $R_z(\phi)$ by exact unitaries (Figure~~\ref{fig:rzappr:algorithm}), excluding time needed to solve the norm equation; the synthesis part corresponds to the runtime of the exact synthesis algorithm that produces an $\FT$-circuit; the resynthesis part corresponds to the runtime of peephole optimization performed on $\sgm$-circuits obtained from $\FT$-circuits using identities from Section~\ref{sec:exact}. 

We separately show the runtime of the relative norm equation solver because for our implementation we used a generic solver which is a part of the PARI/GP library (function \textit{rnfisnorm}~\cite{UGPARI}). 
Since the library documentation does not describe the function performance in detail, we performed an evaluation to confirm that it is polynomial time on ``easy'' instances of the problem. 
We also rely on another PARI/GP function \textit{ispseudoprime} to perform a primality test. It is a combination of several probabilistic polynomial time primality tests~\cite{UGPARI}. The runtime of the primality test is included in the approximation part of the figure. 

Our claim about the approximation algorithm runtime depends on Conjecture~\ref{cnj:appr:distr}; Figure~\ref{fig:trials} shows the number of trials performed in the main loop of the approximation algorithm before finding an easy instance. The scaling of the average number of trials shown in the figure supports the conjecture. To perform peephole optimization~\cite{pho} we used the database of optimal $\sgm$-circuits with up to 19 gates which has size 85.7 MB.

High-precision integer and floating-point data types are necessary to implement our algorithm. We use \textit{cpp\_int} and  \textit{cpp\_dec\_float} from \textit{ boost::multiprecision} library. The number of bits used by these types can be specified at compile time. This allows us to avoid dynamic memory allocation when performing arithmetic operations; much faster stack memory is used instead. For this reason, runtime scaling~(Figure~\ref{fig:runtime}) of our code is a function of the number of arithmetic operations, and not of the bit size of numbers used in the algorithm. 

In Figure~\ref{fig:types} we show how the runtime of our algorithm changes when using different arithmetic types on the same set of inputs. The first pair of types --- 512 bit integers and 200 decimal digits floating-point numbers --- is sufficient for precision up to $10^{-35}$, the second pair --- 1024 bits and 400 decimal digits --- for precision up to $10^{-70}$. Figure~\ref{fig:runtime} shows runtime scaling for different parts of our
algorithm in more detail when using the second pair of types. This shows that our algorithm is practical and can readily be used as a subroutine when compiling quantum algorithms with large numbers of different single-qubit operations.

\begin{figure*}[tb]
  \begin{subfigure}[b]{\columnwidth}
  \caption{\label{fig:types}U-RZ-SMALL}
    \begin{tikzpicture}[trim axis left, trim axis right]
\pgfplotsset{
every axis plot post/.append style={
every mark/.append style={scale=0.7}}}
legend style={
}

\begin{axis}[grid=both, xmin = 0, ymin =0,
		xlabel=$x{=}\log_{10}(1/\ve)$,ylabel=CPU Time (miliseconds), legend style={cells={anchor=west},
at={(0.96,0.55)},},y post scale=0.9, x post scale=1 ]
\selectcolormodel{gray}
\addplot[color=black,only marks,mark=*] coordinates
{(14.063893576989151158,17.213615948999999984)
(13.062816033396886568,16.529254426999999981)
(12.062852980756241204,17.481689174000000000)
(11.062722425021164458,14.151775667000000022)
(10.062513900422241418,12.924101679000000022)
(9.0623993704454137209,13.469837749498997977)
(8.0625541605487298698,10.915222522999999995)
(7.0628758758428541479,9.5310376470000000096)
(6.0623958060975505293,10.325897385771543084)
(5.0627609367226360759,8.5853505259999999942)
(4.0624825906077237648,7.3707525820000000042)
(3.0622213209692198454,8.0448451610000000007)
(2.0627306851033885418,6.7478975530000000121)
};

\addplot[color=black,only marks,mark=oplus] coordinates 
{(14.063296351304392673,44.099625640999999938)
(13.062350789261803510,43.957840232999999959)
(12.062829830525326902,50.174877263999999931)
(11.062632338652713005,40.927934241000000026)
(10.062623224562116674,38.600578121000000106)
(9.0626870553782398702,46.946854138276553051)
(8.0624126939088724283,37.859351029000000020)
(7.0632806099395449730,33.257141725000000026)
(6.0628532754268935930,43.941360659318637189)
(5.0634748866730421731,35.315015388000000005)
(4.0628024270481885975,29.447692707000000034)
(3.0625955103511773614,39.416734624999999964)
(2.0628855197721308081,33.360525200000000106)
};

\addplot[color=black,only marks,mark=diamond*] coordinates 
{(14.061903180917929922,179.79529795299999995)
(13.062815000330299009,181.27275711199999979)
(12.062450551078463630,224.63854672700000026)
(11.063383444199791841,178.83173334200000007)
(10.062636658384376184,170.93177202599999994)
(9.0628099090480654383,226.19630354208416805)
(8.0623815357199764483,178.58542714600000013)
(7.0621996836350811644,158.31965119499999981)
(6.0630149679425912670,223.61895856613226459)
(5.0632191343548662255,178.50161148300000020)
(4.0619805981298936126,147.06407036200000000)
(3.0630444299286922374,210.35754098799999985)
(2.0631774543558243281,178.58506528000000015)
};
\legend{~i512;f200,~i1024;f400,~i2048;f800}
\end{axis}%
\end{tikzpicture}%
  \end{subfigure}%
  \hfill
  \begin{subfigure}[b]{\columnwidth}
   \caption{\label{fig:exact-stats}}
    \begin{tikzpicture}[trim axis left, trim axis right]

\pgfplotsset{
every axis plot post/.append style={
every mark/.append style={scale=0.7}}}
legend style={
}
\begin{semilogyaxis}[grid=both, xmin = 0, ymin =0,xtick={0,4,...,26},minor tick num =3,
		xlabel=Optimal number of $\sigma$ gates,ylabel=Number of unitaries, legend style={cells={anchor=west},anchor=north,legend pos=north west},y post scale=0.9, x post scale=1 ]
	\selectcolormodel{gray}
\addplot[color=blue,only marks,mark=*] coordinates
{(0,1)
(1,4)
(2,12)
(3,25)
(4,48)
(5,94)
(6,176)
(7,330)
(8,624)
(9,1174)
(10,2210)
(11,4164)
(12,7842)
(13,14764)
(14,27806)
(15,52368)
(16,98616)
(17,185710)
(18,349736)
(19,658626)
(20,1240320)
(21,2335774)
(22,4398746)
(23,8283732)
(24,15599946)
(25,29377876)
};
\addplot[domain=0:25]{10^(0.592957 + 0.275047*x)};
\legend{BFS,$10^{(0.28x+0.59)}$}
\end{semilogyaxis}%
\end{tikzpicture}%
  \end{subfigure}%
    \vspace{0.5cm}
   \begin{subfigure}[b]{\columnwidth}
    \caption{}
    \begin{tikzpicture}[trim axis left, trim axis right]
\pgfplotsset{
every axis plot post/.append style={
every mark/.append style={scale=0.7}}}
legend style={
}

\begin{axis}[grid=both, xmin = 0, ymin =0,scaled y ticks = false,extra y ticks={160},extra y tick style={yticklabel pos=right,ytick pos=right},
      y tick label style={/pgf/number format/std,axis x line = top,
      /pgf/number format/1000 sep = \thinspace 
      },
		xlabel=$x{=}\log_{10}(1/\ve)$,ylabel=$\sigma$ gates, legend style={cells={anchor=west},
legend pos = north west,},y post scale=0.9, x post scale=1 ]
\selectcolormodel{gray}
\addplot[color=black,only marks,mark=*] coordinates
{(3.97064,30)
(5.20338,150)
(7.0525,750)
(9.82617,3750)
};
\addplot[domain=0:10,dashed] {14.891*x+ 11.540};
\legend{SKA,NTA~($14.89x+11.54$)}
\end{axis}%
\end{tikzpicture}%
  \end{subfigure}%
  \hfill
  \begin{subfigure}[b]{\columnwidth}
    \caption{}
    \begin{tikzpicture}[trim axis left, trim axis right]
\pgfplotsset{
every axis plot post/.append style={
every mark/.append style={scale=0.7}}}
legend style={
}

\begin{axis}[grid=both, xmin = 0, ymin =0,scaled y ticks = false,extra y ticks={458},extra y tick style={yticklabel pos=right,ytick pos=right},
      y tick label style={/pgf/number format/std,axis x line = top,
      /pgf/number format/1000 sep = \thinspace 
      },
		xlabel=$x{=}\log_{10}(1/\ve)$,ylabel=$\sigma$ gates, legend style={cells={anchor=west},
legend pos = north west,},y post scale=0.9, x post scale=1 ]
\selectcolormodel{gray}
\addplot[color=black,only marks,mark=*] coordinates
{(3.97064,30)
(5.20338,150)
(7.0525,750)
(9.82617,3750)
(13.9867,18750)
(20.2275,93750)
(29.5886,468750)
};
\addplot[domain=0:30,dashed] {14.891*x+ 11.540};
\legend{SKA,NTA~($14.89x+11.54$)}
\end{axis}%
\end{tikzpicture}%
  \end{subfigure}%
\caption{\label{fig:SK}(a) runtime of the Number Theoretic Algorithm when using different arithmetic data types. Type i$\mathbf{N}$ corresponds to a signed, unchecked boost::multiprecision::cpp\_int with MinDigits and MaxDigits set to $\mathbf{N}$; type f$\mathbf{N}$ corresponds to boost::multiprecision::cpp\_dec\_float with Digits10 set to $\mathbf{N}$; (b) number of distinct (up to a global phase) unitaries, such that their optimal implementation requires given number of $\sigma$ gates. (c),(d) comparison of the estimated size of circuits produced by the Solovay-Kitaev algorithm~(SKA) and the Number Theoretic Algorithm~(NTA).
}
\end{figure*}
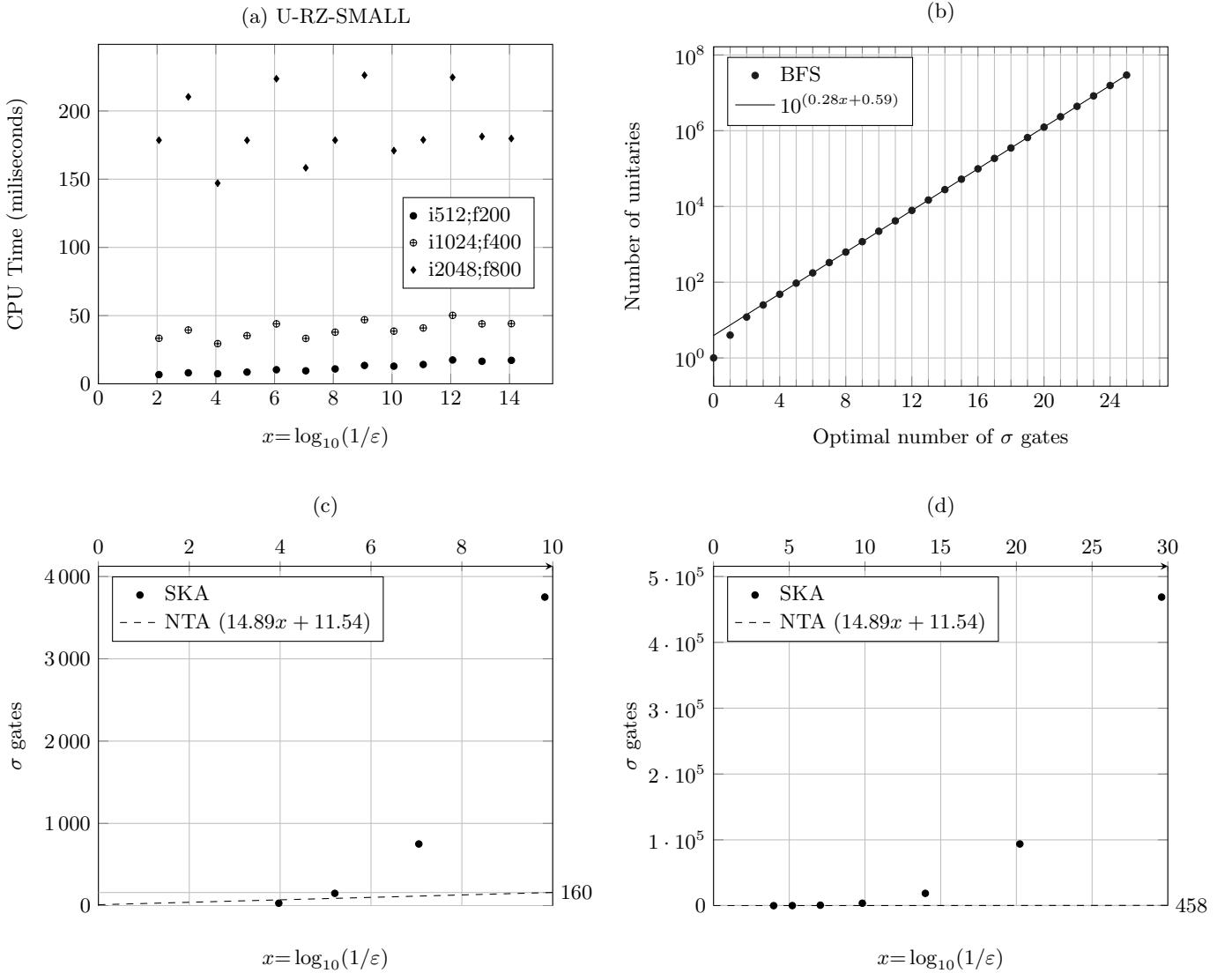

\section{Conclusions and Future Work}

We have considered the problem of optimal representation of single-qubit unitaries as braid patterns in the Fibonacci anyon basis, which is a promising non-Abelian anyon framework for topological computing. We have developed a set of principled solutions that enables the compilation of single-qubit unitaries into circuits of depth $O(\log(1/\ve))$ for an arbitrary target precision $\ve$;
the compiled circuits do not require ancillary qubits or pre-compiled resource states and are asyptotically depth-optimal.

The compiler runs on a classical computer in running time that is probabilistically polynomial in the bit size of $\ve$.
In practice, the runtime appears to scale slower than $\log^2(1/\ve)$ when $\ve$ tends to zero.
Compilation of an axial rotation to precision $10^{-30}$ takes less than 80 milliseconds on average on a regular classical desktop computer.
Consequently, our compiler improves on the most recent state-of-the-art solutions (c.f., \cite{HormoziEtAl}) in both an asymptotic and practical sense.

The availability of an asymptotically optimal compiler for single-qubit unitaries over the Fibonacci anyon basis lays a foundation for solving more challenging problems in topological quantum compilation, such as the problem of finding asymptotically optimal representations of multi-qubit unitary operations by circuits over that basis.
The fact that the multi-qubit Clifford group is not native to the Fibonacci anyon framework adds complexity to the challenge. It implies that either high-quality approximations of two-qubit Clifford gates must be designed or an entirely different set of entanglement gates must be considered, necessitating a translator between representations.
Thus, our future work will be multi-qubit compilation; we hope that the number theoretical methods described in this paper can be leveraged in the multi-qubit context.

Another direction is compilation of approximation circuits into ``weave" representations ~\cite{SimonFreedman,HormoziEtAl}.
A weave is a restricted braid where only one quasiparticle is allowed to move; a weave circuit is a circuit composed entirely of weaves.
It has been shown \cite{SimonFreedman,HormoziEtAl} that any braid pattern circuit can be approximated by a weave circuit.
Achieving an end-to-end compilation into weaves in probabilistically polynomial runtime with minimal overhead is an important direction of our future work.

\begin{acknowledgments}
We wish to thank Andreas Blass, Yuri Gurevich, Matthew Hastings, Martin Roeteller, Jon Yard and Dave Wecker for useful discussions.  VK wishes to thank Dr.~Cameron L. Stewart for helpful discussions regarding continued fractions.
\end{acknowledgments}

%

\appendix
\label{app:normeq}

\section{Background on Number Field Extensions}
\label{sec:exact:math}

We present the key mathematical concepts needed to understand the number theory involved in our algorithms.
A systematic discourse of underlying mathematics can be found in \cite{HCohen2000I,HCohen96}.

In this section, a ``field" means by default a \textit{number field} which is, by definition, a finite extension of the field of rational numbers $\Q$.

The particular fields we work with in this paper are the cyclotomic field $\Q(\omega)$, where $\omega=e^{\pi \, i/5}$ is the tenth primitive root of unity, and its maximum real subfield $\Q(\t)$, where $\t=(\sqrt{5}-1)/2$ is the inverse of the golden ratio. It is worth noting that there is some freedom in how we represent a field. 
For example, $\Q(\t)$ can be alternatively generated by the golden ratio $\phi=(\sqrt{5}+1)/2$ or by $\sqrt{5}$. 
In the same vein, $\Q(\omega)$ can be alternatively generated by $\theta=\w+\w^4$. In these terms it is easier to see that $\Q(\omega)$ is a quadratic extension of $\Q(\t)$; indeed, $\theta^2 = \t-2$. 
Similarly, $\Q(\t)$ is a quadratic extension of $\Q$; indeed, $\t$ is a root of the monic polynomial $T^2+T-1=0$.

\subsection{Galois extension and relative norm}

Let $L/K$ be an algebraic field extension of order $n$.

\begin{defin}
A $K$-linear map $f:L \rightarrow L$ is called a \textbf{$K$-automorphism} if it is bijective and preserves field multiplication, i.e., $\forall a,b \in L, f(a\,b) = f(a)\,f(b)$.
\end{defin}
Clearly a composition of two $K$-automorphisms is a $K$-automorphism and the inverse of a $K$-automorphism ia a $K$-automorphism.

\begin{defin}
Field extension $L/K$ of order $n$ is called normal or \textbf{Galois} if there exists exactly $n\,\,K$-automorphisms of $L$. The set of all $K$-automorphisms of $L$ forms a group under composition that is called the \textbf{Galois group of $L/K$}, denoted by $Gal(L/K)$.
\end{defin}

We note the following:
\begin{prop}\label{galois:fixed}
If $L/K$ is a Galois extension then $K$ coincides with the set of fixed points of the group $Gal(L/K)$.
\end{prop}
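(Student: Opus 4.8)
The plan is to identify the fixed field $F := \{a \in L : \sigma(a) = a \text{ for all } \sigma \in Gal(L/K)\}$ and show $F = K$. One inclusion is free: every element of $Gal(L/K)$ is a $K$-automorphism, hence fixes $K$ pointwise, so $K \subseteq F \subseteq L$. What remains is to prove $[F:K] = 1$.

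First I would observe that $Gal(L/F) = Gal(L/K)$. Indeed, each $\sigma \in Gal(L/K)$ fixes $F$ pointwise by the very definition of $F$, so $Gal(L/K) \subseteq Gal(L/F)$; conversely, since $K \subseteq F$, every $F$-automorphism is a $K$-automorphism, so $Gal(L/F) \subseteq Gal(L/K)$. In particular $|Gal(L/F)| = |Gal(L/K)| = n$, where $n = [L:K]$ by the Galois hypothesis.

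Next, since $L$ is a number field (hence of characteristic $0$ and separable over $K$), the primitive element theorem gives $L = K(\alpha)$ for some $\alpha \in L$, and therefore also $L = F(\alpha)$. A $K$-automorphism of $L = K(\alpha)$ is determined by the image of $\alpha$, so the $n$ elements $\sigma_1,\dots,\sigma_n$ of $Gal(L/K)$ send $\alpha$ to $n$ pairwise distinct elements $\alpha_i := \sigma_i(\alpha)$ of $L$. Let $q(x) \in F[x]$ be the minimal polynomial of $\alpha$ over $F$. Applying $\sigma_i$ to $q(\alpha) = 0$ and using that $\sigma_i$ fixes the coefficients of $q$ (they lie in $F$) yields $q(\alpha_i) = 0$ for every $i$. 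Thus $q$ has at least $n$ distinct roots, so $\deg q \ge n$, i.e., $[L:F] = \deg q \ge n$. On the other hand $K \subseteq F$ forces $[L:F] \le [L:K] = n$. Hence $[L:F] = n$, and the tower law $[L:K] = [L:F]\,[F:K]$ gives $[F:K] = 1$, so $F = K$.

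The argument is essentially routine; the only step that needs a moment of care is the claim that $Gal(L/K)$ produces $n$ genuinely distinct conjugates $\sigma_i(\alpha)$ — this is exactly where the Galois hypothesis (precisely $n$ automorphisms) combines with the fact that a $K$-automorphism of $K(\alpha)$ is pinned down by its value at $\alpha$. An alternative and slightly shorter route would be to quote Artin's lemma directly — for any finite group $G$ of automorphisms of a field $L$ with fixed field $L^G$ one has $[L:L^G] = |G|$ — applied with $G = Gal(L/K)$; the trade-off is that this then leans on the linear independence of field automorphisms (Dedekind), which we would cite from \cite{HCohen2000I}.
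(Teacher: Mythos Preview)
Your proof is correct. The paper itself does not prove this proposition at all --- it is stated as a standard background fact from Galois theory in Appendix~\ref{sec:exact:math}, with the surrounding text pointing the reader to \cite{HCohen2000I,HCohen96} for systematic treatment. Your argument via the primitive element theorem and a root-counting bound on the minimal polynomial over the fixed field is one of the canonical proofs, and all steps are sound in the number-field setting the paper works in (characteristic zero, finite extensions). The alternative you sketch through Artin's lemma is equally standard and is likely closer to what a reader would find in the cited references; either route is fine here since the proposition is being quoted purely as background.
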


Even though the notion of \textit{relative norm} is usually defined in a more general setting, we focus on the particular definition best suited for the Galois extensions.
\begin{defin}
 Given a Galois extension $L/K$ the \textbf{relative norm} maps $N_{L/K}: L \rightarrow K$ and is defined as follows
  \begin{equation}
  N_{L/K}(\alpha)=\prod_{\sigma \in Gal(L/K)}{\sigma(\alpha)}.
  \end{equation}
\end{defin}

Correctness of this definition follows from proposition \ref{galois:fixed}: since $N_{L/K}(\alpha)$ is a fixed point of the Galois group, it belongs to $K$.
It is also easy to see that $\forall \alpha,\beta \in L, N_{L/K}(\alpha\, \beta)=N_{L/K}(\alpha)\, N_{L/K}(\beta)$.

Elements $\{ \sigma(\alpha) | \sigma \in Gal(L/K) \}$ are commonly called (Galois) conjugates of the element $\alpha \in L$. Thus, the relative norm of an $\alpha \in L$ is the product of all its Galois conjugates.
For an element $\beta \in K \subset L$, all the Galois conjugates are the same and coincide with $\beta$. Therefore $N_{L/K}(\beta)=\beta^n$, where $n$ is the order of the extension (and thus the size of the Galois group), and also for $\beta \in K \subset L$ and $\alpha \in L, \,\, N_{L/K}(\beta\,\alpha)=\beta^n\,\alpha$.

\subsection{Galois group and relative norm in $\Ri$}

Let $\w$ be the tenth root of unity as defined in Sectoin \ref{preliminaries:omega}).
Consider the field extension $\Q(\w)/\Q$, which is, by definition, the smallest in $\mathbb{C}$ containing both the rational numbers $\Q$ and the $\w$.

Considering the fifth root of unity $\zeta_5 = e^{2\,i\,\pi/5} = \w^2$ , it is well known that that $\Q(\w)$ coincides with the $\Q(\zeta_5)$, which is recognized as one of the simpler \textit{cyclotomic fields} (see \cite{LWashington}).
The minimal monic polynomial of $\zeta_5$ over $\Q$ is $X^4-X^3+X^2-X+1$, which can be easily verified by factoring $X^{10}-1$ over $\Z$.
This means that $\Q(\w)$ is fourth-degree extension of $\Q$.
When viewed as a vector space over $\Q$, the $\Q(\w)$ is a vector space with basis $\{1,\w,\w^2,\w^3\}$ and $\w$ is an algebraic integer in $\Q(\w)$. It is known (see \cite{HCohen2000I})
that the ring of algebraic integers of $\Q(\w)$ coincides with $\Ri$ and that the latter also has $\{1,\w,\w^2,\w^3\}$ as its integral basis.


Let us view complex conjugation as a field automorphism:
\begin{equation}
(\cdot)^*:\Q(\w) \rightarrow \Q(\w).
\end{equation}
By direct computation
\begin{equation}
\w^*=\w^{-1}=1-\w+\w^2-\w^3
\end{equation}
is linear with integer coefficients.
Therefore its restriction to $\Ri$ is a ring automorphism:
\begin{equation}
(\cdot)^*:\Ri \rightarrow \Ri.
\end{equation}

Consider the subring $\Z[\t] \subset \Ri$ which is the minimal subring of $\Ri$ containing $\t=\w^2-\w^3$.
Since $\Zt$ is populated by real-valued elements, it remains fixed under complex conjugation.
Consider $\Q(\t) \subset \Q(\w)$, the field of fractions of the ring $\Zt$.
(Conversely, $\Zt$ is the ring of integers of the field $\Q(\t)$.)
As per \cite{LWashington}:

(1) $\Q(\w)/\Q(\t)$ is a Galois extension with the two-element Galois group consisting of identity and the complex conjugation.

(2) Hence $\Q(\t) = \Q(\w) \cap \mathbb{R}$ is the maximum real subfield of $\Q(\w)$.

Similarly, $\Zt = \Ri \cap \mathbb{R}$.
Since the minimal polynomial of $\t$ is $X^2+X-1$, $\Q(\t)/\Q$ is a quadratic extension.
The ring $\Zt$  and its fraction field $\Q(\t)$ play an extremely important role in most of our constructions.

Consider the ladder of extensions $\Q(\w)\supset\Q(\t)\supset\Q$.
The ring automorphism already introduced in (\ref{exact:synthesis:bullet}) extends to field automorphism in a natural way:
\begin{equation}
(\cdot)^{\bullet}: \Q(\w) \rightarrow \Q(\w) ; \,
(\w)^{\bullet}=\w^3.
\end{equation}

The Galois group of the fourth-order extension $\Q(\w)/\Q$ happens to be the cyclic group $\Z_4$ generated by the $()^{\bullet}$,
in particular the complex conjugation is $()^{\bullet\,\bullet}$.
By direct computation,
\begin{equation}
\t^{\bullet} = -(\t+1),
\end{equation}
therefore the $()^{\bullet}$ automorphism has correctly defined restrictions to both $\Q(\t)$ and $\Zt$.
Since both are quadratic over rationals (resp., over rational integers) the restriction must be an order two automorphism, which is also seen directly since complex conjugation is the identity on $\Q(\t)$.

We now define the norm maps for all the above field extensions:
\begin{eqnarray}
N_{\t}:\Q(\t) &\rightarrow& \Q ; \,
N_{\t}(\xi) = \xi \, \xi^{\bullet} \\
N_i : \Q(\w) &\rightarrow& \Q(\t) ; \,
N_i(\eta) = \eta \, \eta^* \\
N: \Q(\w) &\rightarrow& \Q ; \,
N(\eta) = N_{\t}N_i(\eta)
\end{eqnarray}
The properties of the complex conjugation and the $()^{\bullet}$ automorphism imply that all the norm maps have correctly defined restrictions to the respective rings of integers $\Ri$ and $\Zt$.

We are now ready to introduce the \textit{Gauss complexity measure} on the ring $\Ri$:
\begin{equation}
G: \Z(\w) \rightarrow \Z ; \, G(\eta) = N_i(\eta)+N_i(\eta)^{\bullet}
\end{equation}
(see \cite{HLenstraJr} for the intuition behind this measure).
The correctness of this definition follows from the fact that $G(\eta)^{\bullet}=G(\eta)$ and therefore $G(\eta)$ is rational, but also since both $N_i$ and $(\cdot)^{\bullet}$ are integral in the integral bases, it has to be a rational integer.

\section{Background on Relative Norm Equation}
\label{sec:app:B}

We introduce just enough algebraic number theory to handle Thm \ref{solve:prime:xi} in a principled way.
The theorem offers a solvability condition (2) that is best derived from the theory of cyclotomic fields as presented in \cite{LWashington}.
It also offers a specific form for a solution (statement (b)) that can be derived from prime ideal decomposition algorithm, best described in \cite{JNeukirch}.

\subsection{Prime ideals in a Galois extension}

Below the term ``ring" will stand for a commutative ring with unity.
We start with common definitions (c.f., \cite{HCohen2000I}).
\begin{defin}
(1) An additive subgroup $I$ of a ring $R$ is called an \textbf{ideal} if it is closed under multiplication by any element of $r \in R$, i.e., $r \, I = \{r \, a | a \in I\} \subset I$.

(2) The \textbf{sum} $I + J$ of two ideals $I$ and $J$ of a ring $R$ consists of pairwise sums $a+b, a\in I, b\in J$.

(3) The \textbf{product} $I \, J$ of two ideals $I$ and $J$ of a ring $R$ consists of finite sums of pairwise products $ a\, b , a \in I, b \in J$.

\end{defin}

Both the sum and the product of two ideals of a ring is an ideal of that ring. It is also easy to see that (1) summation of ideals is associative and commutative, (2) multiplication of ideals is associative, commutative and distributive with respect to summation of ideals, (3) that the ideal generated by zero is the neutral element with respect to the addition of the ideals and (4) the entire ring $R$ is an ideal that is the neutral element with respect to multiplication of its ideals.

We call an ideal $\I \subset R$ a \textit{proper} ideal if it does not coincide with the entire ring $R$.
\begin{defin}
Given an element of a ring $R$,  $g \in R$, the  ideal $g R = \{g\,r | r \in R\}$ is the \textbf{principal} ideal generated by the element $g$.
\end{defin}

\begin{defin}
An ideal $I$ of a ring $R$ is called a \textbf{prime} ideal if it cannot be represented as a product of two or more proper ideals of the ring $R$.
\end{defin}

\begin{prop}
A principal ideal $g R$ of a ring $R$ is prime if and only if $g$ is a prime element of the ring $R$.
\end{prop}

Consider a Galois field extension $L/K$ and let ${\cal O}_{L}$and
${\cal O}_{K}$ be the corresponding rings of integers.
\begin{defin}\label{inert:split}
A prime ideal ${\cal {\cal {\cal P\subset}}}{\cal O}_{K}$ is

(1) \textbf{inert} in ${\cal O}_{L}$ if ${\cal PO}_{L}$ is a prime
ideal in ${\cal O}_{L}$,

(2) \textbf{fully ramified} in ${\cal O}_{L}$ if ${\cal PO}_{L}=({\cal Q})^{d}$
where ${\cal Q}$ is a prime ideal in ${\cal O}_{L}$,

(3) \textbf{fully split} in ${\cal O}_{L}$  if ${\cal P}{\cal O}_{L}=\prod_{j=1}^{d}{\cal Q}_{j}$
where ${\cal Q}_{j},j=1,...,d$ are pairwise distinct prime ideals in ${\cal O}_{L}$.
\end{defin}

It follows from field extension theory (c.f., \cite{HCohen2000I}) that the exponent
$d$ in clauses (2),(3) of Definition \ref{inert:split} is equal to the order of the Galois
group $Gal(L/K)$. 

\subsection{Split primes in the cyclotomic field $\Q(\w)$}

We apply these concepts to the number fields addressed in this paper.
We have a ladder of field extensions $\Q(\w)/\Q(\t)$ and $\Q(\t)/\Q$ and corresponding extensions of the rings of integers $\Ri/\Zt$, $\Zt/\Z$.
All the listed extensions are quadratic extensions and the \textit{absolute} extensions $\Q(\w)/\Q$  and $\Ri/\Z$ are order 4.
All the Galois groups of all extensions above are cyclic.

Recall that $\Q(\w)$ coincides with the \textit{cyclotomic} field $\Q(\zeta_5)$ where $\zeta_5$ is the fifth primitive root of unity.
It is easy to establish that an inert prime of the extension  $\Zt/\Z$ is a rational integer prime $p$ such that $p = \pm 2 \m 5$.
It is also easy to prove that given a rational integer prime $p$ such that $p = \pm 1 \m 5$, it is going to be fully split in the extension $\Zt/\Z$. That is, there exists a $\xi \in \Zt$ such that $N_{\t}( \xi)=N_{\t}( \xi^{\bullet})=p$.

\begin{prop} \label{appendix:prop:split}
In the above context, given $\xi>0$ and $p=N_{\t}( \xi)$ is prime, the relative norm equation $N_i(x)=\xi$ has a solution if and only if $p$ is fully split in the absolute extension $\Ri/\Z$.
\end{prop}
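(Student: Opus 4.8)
The plan is to recast the relative norm equation $N_i(x)=\xi$ in ideal-theoretic terms and use two structural facts: $\Ri$ is a principal ideal domain, and $\Q(\w)/\Q$ is a Galois extension whose group is cyclic of order four, generated by $(\cdot)^{\bullet}$, with complex conjugation equal to $(\cdot)^{\bullet\bullet}$ (Appendix~\ref{sec:exact:math}). I write $(\alpha)$ for the principal ideal generated by $\alpha$ and use $N_i(\alpha)=\alpha\cc\alpha$, $N_\t(\beta)=\beta\gc\beta$, and that $N(\alpha)=N_\t(N_i(\alpha))$ is the absolute norm over $\Q$. Since $p=N_\t(\xi)=\xi\gc\xi$ is a positive prime and $\xi>0$, it follows that $\gc\xi>0$. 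The only rational prime ramifying in $\Ri/\Z$ is $5$; for $\xi$ with $N_\t(\xi)=5$ the equation is already solvable as soon as $\xi>0$ by Observation~\ref{normeq:norm5}, so the equivalence is to be read for $p\ne5$, which is the case actually needed in Theorem~\ref{solve:prime:xi}.

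First I would treat the ``only if'' direction. If $x\in\Ri$ solves $N_i(x)=\xi$, then $N(x)=N_\t(N_i(x))=N_\t(\xi)=p$, so the ideal $(x)$ has norm $p$; because $p$ is prime, $(x)$ is a prime ideal of $\Ri$ of residue degree $1$ over $\Q$. Since $\Ri/\Z$ is Galois, $p\Ri=\prod_{\sigma}\sigma((x))$ is the product of the four conjugates of $(x)$, all of residue degree $1$ (primes over $p$ share their ramification index and residue degree). As $p\ne5$ is unramified, that common ramification index is $1$, so the four conjugate primes are pairwise distinct and $p$ is fully split.

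Next the ``if'' direction. Assume $p$ is fully split in $\Ri/\Z$. Restricting to the intermediate ring $\Zt$, $p$ is fully split there as well, so $p\Zt=(\xi)(\gc\xi)$ with $(\xi)\ne(\gc\xi)$; and since $\Ri/\Zt$ is quadratic while $p\Ri$ already has four distinct prime factors, $(\xi)\Ri$ is a product of two distinct primes. Full splitting makes the decomposition group of every prime over $p$ trivial, so complex conjugation --- which fixes $\Zt$ --- does not fix a prime of $\Ri$ over $(\xi)$ and therefore interchanges the two of them. Writing one of them as $(s)$ (legitimate since $\Ri$ is a PID) gives $(\xi)\Ri=(s)(\cc s)=(N_i(s))$, hence $N_i(s)=u\,\xi$ for some unit $u\in U(\Zt)$. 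Now $N_i(s)=\Ni s>0$ and $\xi>0$ force $u>0$, while $\gc{(N_i(s))}=\Ni{\gc s}>0$ and $\gc\xi>0$ force $\gc u>0$; by Lemma~\ref{UNITS:LEMMA}(2), such a unit is a perfect square, $u=w^{2}$ with $w\in U(\Zt)$. Since $w$ is real, $N_i(w^{-1})=w^{-2}$, so $x:=s\,w^{-1}\in\Ri$ satisfies $N_i(x)=u\,\xi\,w^{-2}=\xi$. Combined with the standard fact that a prime $p\ne5$ splits completely in $\Q(\w)=\Q(\zeta_5)$ exactly when $p=1\m5$, this recovers solvability condition~(2) of Theorem~\ref{solve:prime:xi}.

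The step I expect to be the main obstacle is the unit bookkeeping in the ``if'' direction: one must establish not just $u>0$ but also $\gc u>0$, so that Lemma~\ref{UNITS:LEMMA}(2) applies and the square root $w$ can be divided out while remaining inside $\Ri$ --- this is exactly where the hypothesis $\xi>0$ (hence $\gc\xi>0$) is used. A secondary subtlety is keeping the ramified prime $p=5$ outside the equivalence, since its solvable instances are governed by Observation~\ref{normeq:norm5} rather than by splitting behaviour.
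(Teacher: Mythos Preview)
Your proof is correct and follows essentially the same route as the paper: both directions hinge on the cyclic Galois structure of $\Q(\w)/\Q$, the PID property of $\Ri$, and the fact that a unit $u\in U(\Zt)$ with $u>0$ and $\gc u>0$ is a perfect square (the paper phrases this as $u=\t^{2k}$, you invoke Lemma~\ref{UNITS:LEMMA}(2)). The organization differs slightly --- the paper starts from a four-element factorization of $p$ in $\Ri$ and relates $N_i(x_1)$ to $\xi$ via unique factorization in $\Zt$, whereas you split $(\xi)\Ri$ directly using the triviality of the decomposition group --- but the substance is the same. Your treatment is in fact more careful than the paper's in two places: you make explicit that the ``only if'' direction needs $p$ unramified (so the four conjugate primes are genuinely distinct), and you correctly flag that the stated equivalence fails at $p=5$, where the equation is solvable by Observation~\ref{normeq:norm5} even though $5$ is totally ramified rather than fully split.
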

\begin{proof}
Indeed, if $N_i(x_1)=\xi$ and $x_2,x_3,x_4$ are $Gal(\Q(\w)/\Q)$ conjugates of $x_1$, then $N(x_1)=x_1\,x_2\,x_3\,x_4=p$, where $N=N_{\t} \, N_i$ is the absolute norm.
Therefore $p$ is fully split.

The converse is also true under the standing condition that $\xi > 0$.
Suppose $p=x_1\,x_2\,x_3\,x_4$ is fully split. By relabeling we can ensure that $x_2=x_1^{\bullet}$, $x_3=x_2^{\bullet}=x_1^*$, $x_4=x_3^{\bullet}=x_2^*$. Let $\eta=N_i(x_1)=x_1 \, x_3$, then $x_2\, x_4 = N_i(x_2)=N_i(x_1)^{\bullet} = \eta^{\bullet}$. Hence $p=\eta \, \eta^{\bullet} = N_{\t}(\eta)$. The pair of $\t$-conjugate solutions of $N_{\t}(\eta)=p$ is unique up to units of a certain form, specifically $N_{\t}( \xi)=N_{\t}(\eta)$ means that either $\xi = \pm \t^{2k}\,\eta$ or $\xi = \pm \t^{2k} \eta^{\bullet}$. However, both $\eta$ and $\eta^{\bullet}$ are square norms of complex entities and thus both are positive. Given $\xi > 0$, $\xi = + \t^{2k}\,\eta$ or $\xi = + \t^{2k} \eta^{\bullet}$. In the first case, $N_i(\t^k \, x_1)=\xi$ and in the second case $N_i(\t^k \, x_2)=\xi$.
\end{proof}

\subsection{Proof of Statement (2) of Theorem \ref{solve:prime:xi}}
We prove the statement by invoking Theorem 2.13 of \cite{LWashington}.

In the case of the cyclotomic field $\Q(\zeta_5)$, if $\xi \in \Zt$ happens to be an inert integer prime of $\Zt/\Z$ then it is also inert in $\Z[\zeta_5]/\Zt$.
Indeed, in this case $\xi$ is a rational integer prime and $\xi = \pm 2 \m 5$. The smallest power $f$ such that $\xi^f = 1 \m 5$ is $4$ and coincides with the order of the $\Z[\zeta_5]/\Z$ extension. Thus $\xi$ does not split.

Otherwise, for the prime $\xi$ either $p=N_{\t}(\xi)=5$ or $p=N_{\t}(\xi)=\pm 1 \m 5$.
The easy case of $p=5$ is covered by Observation \ref{normeq:norm5}.
In the remaining case $p$ must be fully split in $\Z[\zeta_5]/\Z$  as per Proposition \ref{appendix:prop:split}. As per the cited Theorem 2.13, this is equivalent to $p=1 \m 5$, which is precisely what is claimed in Statement (2) of Thm  \ref{solve:prime:xi}.


\subsection{Prime decomposition of ideal in Galois extension}

We now prove Statement (b) of Thm \ref{solve:prime:xi}.
This requires a textbook fact about \textit{fully split} ideals and an algorithm that deals with such ideals.

Let $L/K$ be a Galois field extension.
\begin{thm}\label{my:theorem:1}
(c.f., \cite{JNeukirch})
If a prime ideal ${\cal {\cal {\cal P\subset}}}{\cal O}_{K}$ is fully split
in ${\cal O}_{L}$then ${\cal P}$ is uniquely (up to relabeling)
represented as a product of distinct ideals ${\cal P}{\cal O}_{L}=\prod_{j=1}^{d}{\cal Q}_{j}$,
where ${\cal Q}_{j},j=1,...,d$ are prime in ${\cal O}_{L}$. The
Galois group $Gal(L/K)$ acts faithfully and transitively in this
set of prime ideals. In particular, $d$ is the order of the Galois
group.
\end{thm}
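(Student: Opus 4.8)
The plan is to derive the statement from unique factorization of ideals in the Dedekind domain $\mathcal{O}_L$ together with a relative-norm argument for transitivity of the Galois action. First I would recall that $\mathcal{O}_L$, being the ring of integers of the number field $L$, is a Dedekind domain, so the nonzero ideal $\mathcal{P}\,\mathcal{O}_L$ factors into prime ideals uniquely up to order; since $\mathcal{P}$ is \emph{fully split}, this factorization has the form $\mathcal{P}\,\mathcal{O}_L=\prod_{j=1}^{d}\mathcal{Q}_j$ with the $\mathcal{Q}_j$ pairwise distinct, and, as recorded right after Definition \ref{inert:split}, $d$ equals the order of $Gal(L/K)$. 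This already yields the ``unique up to relabeling'' assertion. Next I would check that $Gal(L/K)$ acts on $S=\{\mathcal{Q}_1,\dots,\mathcal{Q}_d\}$: each $\sigma\in Gal(L/K)$ restricts to a ring automorphism of $\mathcal{O}_L$ (it preserves integrality over $\mathbb{Z}$) and fixes $\mathcal{P}\subset\mathcal{O}_K$ pointwise because $\sigma|_K=\mathrm{id}$, so $\sigma(\mathcal{P}\,\mathcal{O}_L)=\mathcal{P}\,\mathcal{O}_L$; applying $\sigma$ to the factorization and using its uniqueness shows $\sigma$ permutes $S$.

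The heart of the argument, and the step I expect to be the main obstacle, is transitivity of this action. Suppose $\mathcal{Q},\mathcal{Q}'\in S$ with $\mathcal{Q}'$ lying outside the $Gal(L/K)$-orbit of $\mathcal{Q}$. Then $\mathcal{Q}'$ together with all the $\sigma(\mathcal{Q})$ form a finite set of pairwise distinct primes of $\mathcal{O}_L$, hence pairwise coprime, so by the Chinese Remainder Theorem (equivalently, prime avoidance) I can choose $x\in\mathcal{Q}'$ with $x\notin\sigma(\mathcal{Q})$ for every $\sigma\in Gal(L/K)$. The relative norm $N_{L/K}(x)=\prod_{\sigma\in Gal(L/K)}\sigma(x)$ lies in $\mathcal{O}_K$, and since the identity factor $x$ lies in the ideal $\mathcal{Q}'$ the whole product lies in $\mathcal{Q}'$; hence $N_{L/K}(x)\in\mathcal{O}_K\cap\mathcal{Q}'=\mathcal{P}$. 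But $\mathcal{P}\subseteq\mathcal{P}\,\mathcal{O}_L\subseteq\mathcal{Q}$, so $\prod_\sigma\sigma(x)\in\mathcal{Q}$, and primality of $\mathcal{Q}$ forces $\sigma(x)\in\mathcal{Q}$, i.e. $x\in\sigma^{-1}(\mathcal{Q})$, for some $\sigma$ -- contradicting the choice of $x$. Therefore $Gal(L/K)$ acts transitively on $S$.

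Faithfulness then follows by counting: the orbit--stabilizer theorem applied to the transitive action on the $d$-element set $S$ gives $|Gal(L/K)|=d\cdot|\mathrm{Stab}(\mathcal{Q}_j)|$, and because $d=|Gal(L/K)|$ the stabilizer of every $\mathcal{Q}_j$ is trivial; in particular the kernel of the action is trivial, so the action is faithful (indeed simply transitive), which also re-confirms that $d$ is the order of the Galois group. The remaining points are routine: the identity $\mathcal{O}_K\cap\mathcal{Q}'=\mathcal{P}$ holds because $\mathcal{Q}'$ divides $\mathcal{P}\,\mathcal{O}_L$, so its contraction to $\mathcal{O}_K$ is a prime containing $\mathcal{P}$ and $\mathcal{P}$ is maximal; and $N_{L/K}$ maps $\mathcal{O}_L$ into $\mathcal{O}_K$ by the general properties of relative norms noted earlier. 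No separate argument is needed for the Dedekind/unique-factorization input, which is a standard fact about rings of integers of number fields.
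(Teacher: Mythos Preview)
Your argument is correct and is the standard textbook proof of this classical fact: unique factorization in the Dedekind domain $\mathcal{O}_L$ for the decomposition, the relative-norm/CRT trick for transitivity, and orbit--stabilizer for faithfulness once $d=|Gal(L/K)|$.

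Note, however, that the paper does not actually prove this theorem. It is stated with the citation ``(c.f., \cite{JNeukirch})'' as a background fact imported from Neukirch's \emph{Algebraic Number Theory}, and is then used as a black box in deriving Corollary~\ref{special:case} and Statement~(b) of Theorem~\ref{solve:prime:xi}. So there is no ``paper's own proof'' to compare against; you have supplied a correct proof where the paper simply appeals to the literature.
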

Note that, in case $\Q[\w]/\Q[\t]$, $d=2$ and the two ideals
occurring in the prime decomposition of a split prime ideal are complex
conjugates of each other.

\begin{defin}\label{primitive:element}
$\theta\in L$ is a \textbf{primitive element} of the extension
$L/K$ if $L=K(\theta)$.
\end{defin}
In our case the primitive element of choice for the $\Q(\w)/\Q(\t)$ extension is  $\theta=\w+\w^{4}=-1+2\,\w-\w^{2}+\w^{3}=\sqrt{\t-2}$.

\begin{defin}\label{exceptional:ideal}
For a primitive element $\theta$ of the extension $L/K$,
consider the conductor ideal ${\cal C}_{\theta}=\{y\in{\cal O}_{L}|y\:{\cal O}_{L}\subset{\cal O}_{K}[\theta]\}$.
A prime ideal ${\cal {\cal {\cal P\subset}}}{\cal O}_{K}$ is \textbf{exceptional
with respect to }$\theta$ when it is not relatively prime with the
${\cal C}_{\theta}$.
\end{defin}
For the above choice of primitive element $\theta$ in $\Q(\w)/\Q(\t)$,
${\cal C}_{\theta}=2\Ri$, so the only exceptional prime ideal is
the inert ideal $(2)$. There are no exceptional split prime ideals
in this case.

\begin{algorithm} \label{norm:neukurch:algo} (c.f., \cite{JNeukirch})
Let ${\cal {\cal {\cal P\subset}}}{\cal O}_{K}$ be a non-exceptional
prime ideal and let $H(X)\in{\cal O}_{K}[X]$ be the monic minimal
polynomial for the primitive element $\theta$. 
Let $F={\cal O}_{K}/{\cal P}$ be the corresponding residue field
and let $h(X)\in F[X]$ be the reduction of the $H(X)$ modulo ${\cal P}$.
Compute the irreducible factorization $h(X)=h_{1}(X)^{e_{1}}...h_{r}(X)^{e_{r}}$in
$F[X]$.
Then
\[
{\cal P}{\cal O}_{L}={\cal Q}_{1}^{e_{1}}...{\cal Q}_{r}^{e_{r}},
\]
where ${\cal Q}_{j}={\cal P}{\cal O}_{L}+h_{j}(\theta){\cal O}_{L},j=1,...,r$.
\end{algorithm}

\begin{corol}\label{special:case}
(Special Case.)
In the context of Thm \ref{my:theorem:1}, let ${\cal {\cal {\cal P\subset}}}{\cal O}_{K}$
be a prime ideal that is split in ${\cal O}_{L}$ and non-exceptional
with respect to a primitive element $\theta$ of $L/K$.
Then

(1) $F={\cal O}_{K}/{\cal P}$ is a splitting field of the residual
mimimal polynomial, i.e., $h(X)=(X-m_{1})...(X-m_{d})$ in $F[X]$;

(2) If additionally ${\cal O}_{L}$is a principal ideal domain then
we can effectively find such $s\in{\cal O}_{L}$that ${\cal P}{\cal O}_{L}=N_{L/K}(s){\cal O}_{L}$. 
That is, we can solve the relative norm equation in the group of
ideals.
\end{corol}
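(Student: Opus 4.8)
The plan is to read the corollary off Algorithm \ref{norm:neukurch:algo} together with the faithful, transitive action of the Galois group from Theorem \ref{my:theorem:1}, so essentially no new computation is needed. First I would set up the data of the algorithm: let $H(X)\in{\cal O}_K[X]$ be the monic minimal polynomial of the primitive element $\theta$, which has degree $d=[L:K]$ since $L=K(\theta)$ and $L/K$ is Galois of order $d=|Gal(L/K)|$, and let $h(X)\in F[X]$ be its reduction modulo ${\cal P}$, where $F={\cal O}_K/{\cal P}$. The algorithm applies because ${\cal P}$ is non-exceptional with respect to $\theta$ (in the cyclotomic case of interest there are in fact no exceptional split primes, so this hypothesis is free), and it yields ${\cal P}{\cal O}_L=\prod_{j=1}^{r}{\cal Q}_j^{e_j}$ with ${\cal Q}_j={\cal P}{\cal O}_L+h_j(\theta){\cal O}_L$, where $h=\prod_j h_j^{e_j}$ is the irreducible factorization in $F[X]$.

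Next I would compare this with the hypothesis that ${\cal P}$ is fully split: by Definition \ref{inert:split}(3), ${\cal P}{\cal O}_L$ is a product of $d$ pairwise distinct prime ideals of ${\cal O}_L$, and since ${\cal O}_L$ is a Dedekind domain its prime factorization is unique, so it must coincide with the one produced by the algorithm. Hence $r=d$, every $e_j=1$, and the ${\cal Q}_j$ are distinct; counting degrees, $d=\deg h=\sum_{j=1}^{d}\deg h_j$ with each $\deg h_j\ge 1$ forces $\deg h_j=1$ for all $j$. This gives $h(X)=(X-m_1)\cdots(X-m_d)$ with distinct $m_j\in F$, which is statement (1).

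For statement (2), under the additional PID hypothesis on ${\cal O}_L$ each ${\cal Q}_j$ is principal; I would take a generator $s_j$ computed as a greatest common divisor in ${\cal O}_L$ of (the generators of) ${\cal P}{\cal O}_L$ and $h_j(\theta)=\theta-\tilde m_j$, where $\tilde m_j\in{\cal O}_K$ is any lift of $m_j$. This is effective, since extracting the roots $m_j$ over $F$ and computing $GCD$ in ${\cal O}_L$ are exactly the subroutines assembled in Section \ref{sec:norm-equation} (this recovers the explicit $s=GCD_{\Ri}(\xi,m-\theta)$ of Theorem \ref{solve:prime:xi}(b) in the concrete case). It then remains to recognize $\prod_j{\cal Q}_j$ as a relative norm. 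By the faithful and transitive part of Theorem \ref{my:theorem:1}, the order-$d$ group $Gal(L/K)$ acts simply transitively on $\{{\cal Q}_1,\dots,{\cal Q}_d\}$, so for a fixed index the assignment $\sigma\mapsto\sigma({\cal Q}_1)$ runs bijectively over this set; since each $\sigma$ restricts to a ring automorphism of ${\cal O}_L$, $\sigma({\cal Q}_1)=\sigma(s_1){\cal O}_L$, and therefore
\[
{\cal P}{\cal O}_L=\prod_{j=1}^{d}{\cal Q}_j=\prod_{\sigma\in Gal(L/K)}\sigma(s_1){\cal O}_L=\Big(\prod_{\sigma\in Gal(L/K)}\sigma(s_1)\Big){\cal O}_L=N_{L/K}(s_1){\cal O}_L.
\]
So $s:=s_1$ does the job, and it was produced by an effective procedure.

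I do not expect a serious obstacle here: the content is bookkeeping once the two background results are in hand. The two places that need care are (i) invoking unique factorization of ideals in the Dedekind domain ${\cal O}_L$ to match ``fully split'' against the algorithm's output, and (ii) using that the Galois action is not merely transitive but simply transitive (faithful and transitive on a set of size $d=|Gal(L/K)|$) — this is exactly what makes each ${\cal Q}_j$ occur once in the orbit of ${\cal Q}_1$ and turns the product of conjugates into an honest $N_{L/K}$. The PID hypothesis on ${\cal O}_L$ is what promotes the ideal-theoretic identity to one between principal ideals with an explicitly computable generator.
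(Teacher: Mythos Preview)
Your argument is correct and follows essentially the same route as the paper: part (1) is read off Algorithm \ref{norm:neukurch:algo} by matching its output against the fully split factorization (you spell out the degree count the paper leaves implicit), and part (2) picks a principal generator $s_1$ of ${\cal Q}_1$ and uses the faithful, transitive Galois action from Theorem \ref{my:theorem:1} to identify $\prod_j{\cal Q}_j$ with $N_{L/K}(s_1){\cal O}_L$, exactly as the paper does.
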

Indeed, (1) is a straightforward consequence of the algorithm. For
(2), consider the ideal ${\cal Q}_{1}={\cal P}{\cal O}_{L}+(\theta-m_{1}){\cal O}_{L}$.
Since ${\cal O}_{L}$ is a principal ideal domain, ${\cal Q}_{1}=s_{1}{\cal O}_{L}$ for
some $s_{1}\in {\cal O}_{L}$. Recall that the Galois group $Gal(L/K)$
acts faithfully and transitively on the set of ideals $\{{\cal Q}_{j}\}$,
let $\sigma_{j}\in Gal(L/K)$ be the element mapping ideal ${\cal Q}_{1}$
onto the ideal ${\cal Q}_{j},j=2,..,d$ and let $s_{j}=\sigma_{j}(s_{1}),j=2,...,d$.
Obviously for each $j$, ${\cal Q}_{j}=s_{j}{\cal O}_{L}$ and ${\cal P}{\cal O}_{L}=\prod_{j=1}^{d}{\cal Q}_{j}=(\prod_{j=1}^{d}s_{j}){\cal O}_{L}=N_{L/K}(s_{1}){\cal O}_{L}$.

As a concluding observation, let ${\cal P}{\cal O}_{L}$ be the principal
ideal $\xi{\cal O}_{L}$ for some $\xi\in{\cal O}_{K}$. Then the above
equality of principal ideals means that $\xi\sim N_{L/K}(s_{1})$, i.e., $\xi=u\, N_{L/K}(s_{1})$ for some unit $u$. Since $\xi,N_{L/K}(s_{1})$
are in ${\cal O}_{K}$, so is the unit $u$.


\subsection{Proof of Statement (b) of Theorem \ref{solve:prime:xi}}

For brevity, we leave out the easy case of $p=N_{\t}(\xi) = 5$.
The minimal polynomial for $\theta$ is $H(X)=X^{2}-(\t-2)$, and
its reduction $h(X)=X^{2}-((\t-2)\m p)$.
If conditions (1),(2) of the theorem hold, and in particular $p = 1 \m 5$, $h(X)$ is guaranteed to split in $\Z_p$ into $h(X)=(X-m)(X+m) \m p, m \in \Z$.

As per Algorithm \ref{norm:neukurch:algo}, the ideal $\xi\Ri$ is the product of $\xi\Ri+(m-\theta)\Ri$
and $\xi\Ri+(m+\theta)\Ri$. As per the discussion around the Corollary
\ref{special:case}, $\xi=u\, N_{i}(s)$, where $u$ is a unit in $\Zt$
and $s$ is (for example) a generator of the principal ideal $\xi\Ri+(m-\theta)\Ri$.
We can take $s=GCD_{\Ri}(\xi,m-\theta)$ as such a generator.
Note that the necessary conditions of solvability of (\ref{eq:rnorm}),
$\xi>0,\xi^{\bullet}>0$ in this specific context turn out to be sufficient.
Indeed, they imply that $u>0,u^{\bullet}>0$ which makes $u$ equal
to $\t^{2k}$ for some $k\in\Z$. Then $x=\tau^{k}\, s$ is the desired
solution of (\ref{eq:rnorm}). It follows that the other solution is $x^{*}=\tau^{k}\, s^{*}$.


\subsection{Proof of Theorem \ref{solve:general:xi}}


To prove the theorem, we first need the following:
\begin{lem}
Let
\begin{equation}
\xi=\eta^{2}\xi_{1}...\xi_{r},\eta,\xi_{j}\in\Zt,r\in\Z,r\geq0,j=1,...,r\label{appendix:th3factorization}
\end{equation}
be a factorization of an $\xi \in \Zt$, where $\xi_{1},...,\xi_{r}$ are $\Zt$-primes
such that $\xi_{1}\Zt,...,\xi_{r}\Zt$ are pairwise distinct prime
ideals.
If $\xi >0, \xi^{\bullet} > 0$ then there exists an equivalent factorization where $\xi_j >0, \xi_j^{\bullet} > 0$ for each $j=1,...,r$
\end{lem}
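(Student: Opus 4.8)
The plan is to exploit the only freedom in the factorization~(\ref{appendix:th3factorization}): each prime $\xi_j$ is determined by its ideal $\xi_j\Zt$ only up to a unit of $\Zt$, and by Lemma~\ref{UNITS:LEMMA}(1) the unit group $U(\Zt)$ is exactly $\{\pm\t^k : k\in\Z\}$. Since $\xi>0$ forces $\xi\neq0$, we have $\eta\neq0$ and $\xi_j\neq0$ for all $j$, so each $\xi_j$ and each $\gc{\xi_j}$ is a nonzero real with a definite sign. Because $\t>0$ while $\gc{\t}=-(\t+1)<0$, the four sign patterns of a unit, realized by $1,-1,\t,-\t$ as $(+,+),(-,-),(+,-),(-,+)$ respectively, are all available. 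Hence for each $j$ I would pick $u_j\in\{\pm1,\pm\t\}$ whose sign pattern matches that of $\xi_j$, so that $\xi_j':=u_j\xi_j$ satisfies $\xi_j'>0$ and $\gc{(\xi_j')}=\gc{u_j}\,\gc{\xi_j}>0$. Replacing $\xi_j$ by $\xi_j'$ changes neither the ideal $\xi_j\Zt$ nor the property of being a $\Zt$-prime, so the $\xi_j'\Zt$ stay pairwise distinct primes.

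Next I would deal with the accumulated unit. Set $u:=\prod_{j=1}^r u_j\in U(\Zt)$; then $\prod_j\xi_j' = u\prod_j\xi_j$, so $\eta^2\prod_j\xi_j' = u\,\xi$. The key claim is that $u$ is a \emph{square} in $U(\Zt)$. Indeed $\xi=\eta^2\prod_j\xi_j>0$ and $\eta^2>0$ give $\prod_j\xi_j>0$; since also $\prod_j\xi_j'=\prod_j(u_j\xi_j)>0$, we conclude $u>0$. Applying the automorphism $\gc{(\cdot)}$ and using $\gc{\xi}>0$, $(\gc{\eta})^2>0$ in exactly the same way yields $\gc{u}>0$. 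By Lemma~\ref{UNITS:LEMMA}(2), a unit with $u>0$ and $\gc{u}>0$ is a perfect square, so $u=v^2$ for some $v\in U(\Zt)$.

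Finally I would set $\eta':=v^{-1}\eta\in\Zt$ (legitimate since $v$ is a unit) and verify directly that
\[
 (\eta')^2\,\xi_1'\cdots\xi_r' = v^{-2}\eta^2\cdot u\prod_j\xi_j = v^{-2}u\,\xi = \xi ,
\]
which is the desired equivalent factorization with $\xi_j'>0$ and $\gc{(\xi_j')}>0$ for all $j$ (the case $r=0$ being vacuous). The main obstacle — a mild one — is precisely the claim that the leftover unit $u$ is a square, i.e.\ that the local sign corrections of the $\xi_j$ glue into a single correction of $\eta$; this is exactly where the hypotheses $\xi>0$, $\gc{\xi}>0$ together with Lemma~\ref{UNITS:LEMMA}(2) enter, and it is why the positivity assumptions cannot be dropped.
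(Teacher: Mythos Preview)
Your proof is correct. The route differs from the paper's in an instructive way. The paper keeps $\eta$ fixed throughout: first it flips the signs of the negative $\xi_j$ (an even number of them, so the product is unchanged), and then among the remaining $\xi_j$ with $\gc{\xi_j}<0$ (again an even number) it pairs them up, multiplying half by $\t$ and half by $\t^{-1}$ so that the corrections cancel. Your approach instead normalizes each $\xi_j$ independently by a unit in $\{\pm1,\pm\t\}$, then shows the accumulated unit $u$ is totally positive and hence a square by Lemma~\ref{UNITS:LEMMA}(2), absorbing it into $\eta$. The paper's parity bookkeeping is essentially a hands-on verification of the same fact you invoke abstractly; your argument is slightly cleaner and makes explicit where the hypotheses $\xi>0,\gc{\xi}>0$ are actually used, at the cost of modifying $\eta$.
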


\begin{proof}
For any choice of $\xi_{1},...,\xi_{r}$, there is an even
number of these that are negative. Replacing each negative $\xi_{j}$
by $-\xi_{j}$ we get an equivalent factorization of $\xi$. Assuming
further that $\xi_{1},...,\xi_{r}$ are all positive, we note that
there is an even number of these that have negative adjoint. Now split
the set $\{\xi_{j}|\xi_{j}^{\bullet}<0\}$ arbitrarily into two halves
and replace each $\xi_{j}$ in the first half by $\t\,\xi_{j}$ (ensuring
$(\t\,\xi_{j})^{\bullet}>0$) then replace each $\xi_{j}$ in the
second half by $\t^{-1}\xi_{j}$(ensuring $(\t^{-1}\xi_{j})^{\bullet}>0$).
Then this set of replacements leads to an equivalent factorization of $\xi$.
\end{proof}

We are finally ready to prove Thm \ref{solve:general:xi}.
Given a factorization (\ref{appendix:th3factorization}) and assuming as per the lemma that $\xi_j >0, \xi_j^{\bullet} > 0$ for each $j=1,...,r$,
the theorem can be restated to claim that the equation $N_i(x)=\xi$ is solvable if and only if all the equations $N_i(y)=\xi_j$ are
individually solvable.

\begin{proof}
If the factor equations are individually solvable, then the $N_i(x)=\xi$ is obviously solvable.

Conversely, suppose (\ref{eq:rnorm}) is solvable for the square-free
$\xi$, i.e., $\xi=\xi_{1}...\xi_{r}=N_{i}(x),x\in\Ri$ and additionally
$\xi_{j}>0,\xi_{j}^{\bullet}>0$ as explained above. We would like
to show that each of the equations $N_{i}(x_{j})=\xi_{j}$ is individually
solvable, $j=1,...,r$. Consider the ideal $\xi_{j}\Ri+x\,\Ri$ and
its principal ideal representation $s_{j}\,\Ri$. The complex conjugation
of the ideal is $\xi_{j}\Ri+x^{*}\,\Ri$ and is equal to $s_{j}^{*}\,\Ri$.
Since $N_{i}(s_{j})=s_{j}s_{j}^{*}$ generates the product of the two
ideals, it divides $\xi_{j}^{2}$. Since both $N_{i}(s_{j})$ and $\xi_{j}^{2}$
are in $\Zt$, $N_{i}(s_{j})$ divides $\xi_{j}^{2}$ in $\Zt$.
Since $\xi_{j}$ is prime, $N_{i}(s_{j})$ is associate to either $\xi_{j}^{2}$
or to $\xi_{j}$. The former is impossible, since $N_{i}(s_{j})$
also divides $x\, x^{*}=\xi$ and $\xi$ is square free. Thus $u\, N_{i}(s_{j})=\xi_{j}$,
where $u$ is a unit. By already familiar argument, $u>0,u^{\bullet}>0$
and thus $u=t^{2}$ is a perfect square. Therefore $N_{i}(t\, s_{j})=\xi_{j}$.
\end{proof}

\end{document}